\titleformat*{\section}{\bf\large\center}
\newcommand{\GG}[1]{}
\theoremstyle{definition}
\newtheoremstyle{myspacing}  
  {2pt}                     
  {2pt}                     
  {}                 
  {}                         
  {\bfseries}               
  {.}                        
  { }                        
  {}                         
\theoremstyle{myspacing}
\newtheorem*{theorem*}{Theorem}
\newtheorem{theorem}{Theorem}
\newtheorem*{rmk*}{Remark}
\newtheorem{lemma}{Lemma}
\newtheorem{remark}{Remark}
\newtheorem{corollary}{Corollary}
\newtheorem*{corollary*}{Corollary}
\apptocmd{\sloppy}{\hbadness 10000\relax}{}{} 
\def\rank{\text{rank}}
\def\iidsim{\stackrel{\text{i.i.d.}}{\sim}}
\def\I{\mathbbm{1}}
\def\hyper{\text{Hypergeometric}}
\def\bs{\boldsymbol}
\def\hyper{\mathfrak{H}}
\def\Unif{\text{Unif}}
\def\sp{\text{sp}}
\def\Pr{\mathbb{P}}
\def\sp{\textup{sp}}
\def\fp{\textup{fp}}
\def\treat{\textup{t}}
\def\control{\textup{c}}
\def\rev{\color{red}}
\def\rev{\color{black}}
\def\secadj{\vspace{-2ex}}
\let\oldnl\nl
\newcommand\nonl{
  \renewcommand{\nl}{\let\nl\oldnl}}
\tikzstyle{io} = [trapezium, trapezium left angle=70, trapezium right angle=110, minimum width=1cm, minimum height=1cm, text centered, draw=black,  trapezium stretches=true, thick]
\tikzstyle{process} = [rectangle, minimum width=3cm, minimum height=1cm, text centered, draw=black, thick]
\tikzstyle{decision} = [diamond, minimum width=1cm, minimum height=1cm, text centered, draw=black, thick]
\tikzstyle{arrow} = [thick,->,>=stealth]
\newtheorem{theoremA}{Theorem}
\begin{document}
\setstcolor{purple}
\doublespacing

\title{\bf 
\Large
Enhanced inference for distributions and quantiles of individual treatment effects in various experiments
}
\author{
 \normalsize
 Zhe Chen and Xinran Li
 \footnote{
 Zhe Chen is Postdoctoral Researcher, Center for Clinical Trials Innovation, Department of Biostatistics, Epidemiology, Informatics, Perelman School of Medicine, University of Pennsylvania, Philadelphia, PA, 19104 (e-mail: \href{mailto:zhe.chen@pennmedicine.upenn.edu}{zhe.chen@pennmedicine.upenn.edu}). 
 Xinran Li is Assistant Professor, Department of Statistics, University of Chicago, Chicago, IL 60637 (e-mail: \href{mailto:xinranli@uchicago.edu}{xinranli@uchicago.edu}). 
 }
}
\date{}
\maketitle

\vspace{-0.5in}


\singlespacing
\begin{abstract}
Understanding treatment effect heterogeneity has become increasingly important in many fields. In this paper we study distributions and quantiles of individual treatment effects to provide a more comprehensive and robust understanding of treatment effects beyond usual averages, although they are more challenging to infer due to nonidentifiability from the observed data. 
Recent randomization-based approaches offer finite-sample valid inference for treatment effect distributions and quantiles in both completely randomized and stratified randomized experiments, but they can be overly conservative by assuming the worst-case scenario where units with large effects are all assigned to the treated (or control) group. We introduce two improved methods to enhance the power of these existing approaches. 
The first method reinterprets existing approaches as inferring treatment effects among only treated or control units, and then combines the inference for treated and control units to infer treatment effects for all units. 
The second method explicitly controls for the actual number of treated units with large effects. Both simulation and application demonstrate the substantial gain from the improved methods. 
These methods are further extended to sampling-based experiments as well as  quasi-experiments from matching, in which the ideas for both improved methods can play critical and complementary roles.
\end{abstract}

{\bf Keywords}: 
potential outcome; randomization-based inference; 
treatment effects on treated; stratified experiment; matched observational study


\doublespacing

\secadj
\section{Introduction}

\subsection{Treatment effect heterogeneity}

Understanding heterogeneous treatment effects has become increasingly important in many fields. 
Most of the existing work either tries to investigate how the subgroup or conditional average treatment effects vary with observed pretreatment covariates \citep{Hill2011, Imai2013, Tian2014, Imbens2016, Wager2018, Yu2019, Nie2020, Kennedy2023}, 
or seeks a treatment rule, a function of observed covariates that determines the treatment allocation, to maximize the overall outcomes for all units \citep{Manski2004, Murphy2011, zhang2012estimating,
Zhao2012, Luedtke2016, Stefan2021}.
In this paper, we focus on an alternative perspective aimed at understanding the distributions or quantiles of individual treatment effects, which can accommodate treatment effect heterogeneity due to unobserved covariates. 
This can help address questions such as: what proportion of units
benefit from the treatment, and what is the median value of
the individual treatment effects?
As a side note, 
the aforementioned strategies can be combined;
for example, we can investigate the individual treatment effect distribution among subpopulations defined by pretreatment covariates \citep{Zhang2013}. 
Nevertheless, throughout this paper, we will focus on inferring the marginal distributions and quantiles of individual treatment effects. 

Understanding treatment effect distributions is often empirically important for program, policy and intervention evaluation.
First, in the seminal work of \citet{Heckman97}, they emphasized the importance of understanding the distribution of impacts, which is crucial for answering questions about the political economy of programmes, the distribution of programme benefits and the option values conferred on programme participants. 
Recently, \citet{lu2018treatment} and \citet{Rosenblum2019} also emphasized the importance of understanding the proportion of units that benefit from the treatment in educational, behavior and medical studies. 
Second, the treatment effect distributions or quantiles are more robust measures of treatment effects compared to the usual averages, especially in the presence of extreme individual effects \citep{CDLM21quantile, SL22quantile}. 
For example, a treatment that is beneficial on average may actually be harmful for most units in the population, since the positive average effect could be driven by some extreme or even outlier effects on a tiny proportion of the population. 
Third, the asymptotic inference approaches designed for average treatment effects may have poor finite-sample performance, when the individual treatment effects and outcomes are heavy-tailed.  
Fourth, the distribution of individual effects can better accommodate ordinal outcomes. 
As \citet{lu2018treatment} commented,  it is often difficult to define or compare ``average" ordinal outcomes in a substantive way; for example, the averages of education levels, such as ``high school'' and ``PhD'', are difficult to make sense of and to compare. 

\subsection{Superpopulation inference for distributions of individual treatment effects}

Since \cite{Heckman97}, the distributions and quantiles of individual treatment effects have received substantive attention. 
Compared to the usual average or conditional average treatment effects, 
inference for the treatment effect distributions poses additional challenges due to their nonidentifiability issues. 
Specifically, from the observed data, 
the joint distributions between potential outcomes are generally not identifiable, since for each unit we can observe at most one potential outcome. 
Consequently, due to its dependence on the joint distribution of potential outcomes, the distribution of individual effects is generally not identifiable from the observed data, and thus we do not expect any consistent estimation for it without additional assumptions. 
Nevertheless, the individual effect distribution is still partially identifiable. 
\citet{Fan2010} studied sharp bounds for the distribution and quantile functions of individual effects, based on the identifiable marginal distributions of the potential outcomes, and then proposed large-sample asymptotic inference for the distribution function of individual treatment effects at any given point; 
see also \citet{FIRPO2019210, Frandsen21, ruiz2022non, Lee2024, kaji2023assessing} for various extensions. 
In a related context, \citet{LeiCandes2021} used conformal inference to construct prediction intervals for individual effects. 
However, most 
existing approaches rely on random sampling of units from some superpopulation, which may not be realistic in some applications.
For example, when analyzing causal effects of a policy on all 50 states in the United States, it could be difficult to articulate what the superpopulation is that generates these 50 samples of states; see \citet{Abadie2020} for more related discussions.

\subsection{Finite population inference for treatment effects}
The finite population inference, also called the design-based or randomization-based inference, 
was first proposed in the seminal work of \citet{Fisher:1935} and \citet{Neyman:1923}, 
and has received considerable attention in recent years;
see, e.g., \citet{Rosenbaum02a, Freedman2018aoas, lin2013, imbens2015causal, fcltxlpd2016, LDR18, Bojinov2019, Abadie2020, GuoBasse21, Basse2024}. 
However, most of the existing randomization-based inference approaches either require constant-effect-type assumptions, which are not able to accommodate unknown individual effect heterogeneity, or focus on average treatment effects, 
which often invoke large-sample approximations and may not be reliable when the sample size is small or data are heavy-tailed. 

Recently, \citet{CDLM21quantile} proposed a novel approach for conducting inference on quantiles of individual treatment effects, as well as the proportions of units with individual effects exceeding (or falling below) any thresholds. 
Importantly, their approach is randomization-based and finite-sample exact. 
\citet{SL22quantile} further extended it to stratified randomized experiments, and also proposed sensitivity analysis for quantiles of individual effects in matched observational studies in the presence of unmeasured confounding.
The approaches in \citet{CDLM21quantile} and \citet{SL22quantile} build upon the Fisher randomization test \citep{Fisher:1935}, 
and consider the worst-case randomization test over all possible potential outcome configurations satisfying a given null hypothesis on quantiles of individual effects; see Sections \ref{sec:review_drawback} and \ref{sec:review_SL} for brief reviews of these approaches. 
Importantly, the worst-case configurations assume that all units with large individual effects are in the treated (or control) group. This, however, is unlikely given that units are randomized into treatment and control, and makes their approaches overly conservative. 
For example, when half of the units are randomized to treatment and control, their approaches cannot obtain informative inference for any quantiles of individual effects below the median, with the corresponding confidence intervals being the noninformative whole real line. 

\subsection{Our contributions}

In this paper, we propose two methods to improve \citet{CDLM21quantile} for inferring individual treatment effect distributions in completely randomized experiments. 
Our first method provides a new interpretation of the existing approach, by viewing it as constructing \textit{prediction} intervals for the \textit{random} quantiles of individual treatment effects for only treated (or control) units. 
We then combine the prediction intervals for treated and control units to construct confidence intervals for quantiles of individual effects among all units. 
Our second method considers the number of treated units whose individual effects exceed the quantile of interest, and uses \citeauthor{Berger1994}'s \citeyearpar{Berger1994} idea to control for it to derive more powerful inference. 
Both applications and simulations demonstrate substantial gains from the improved methods. The trade-offs between the two improved methods are also investigated. 

We further extend the proposed methods to stratified randomized experiments, thereby improving \citet{SL22quantile}, and sampling-based randomized experiments \citep{BD20, YQL2021}. The former is one of the most popular designs in practice and the latter is important when we want to generalize experimental evidence to larger populations of interest. In addition, we also extend the proposed improvement to quasi-experiments constructed through matching, where we conduct sensitivity analysis for causal conclusions regarding distributions and quantiles of individual effects in the presence of unmeasured confounding. Moreover, both improved methods for the completely randomized experiments and their ideas play critical, unique and complementary roles in these extensions. In particular, different ideas can be preferred in different contexts, and they can both be important components of a single inferential procedure, such as in sampling-based experiments with stratified randomization of treatment assignments. 

The proposed approaches are all finite-sample valid, without requiring any large-sample approximations or assumptions on the tail behavior of the potential outcomes, except for the matched observational studies where we impose mild regularity conditions. 

The paper proceeds as follows. 
Section \ref{sec:framework} introduces the framework and notation. 
Section \ref{sec:cre} discusses the improved inference for treatment effect quantiles in completely randomized experiments, 
and Section \ref{sec:str} extends to stratified randomized experiments as well as quasi-experiments constructed through matching. 
Section \ref{sec:samp_based} studies sampling-based randomized experiments in which we want to generalize experimental evidence to larger populations of interest.
Section \ref{sec:illustration} conducts numerical illustration, 
and gives 
suggestions for practical implementation of the proposed methods. 
Section \ref{sec:conclusion} concludes with a short discussion. 

\secadj
\section{Framework and Notation}\label{sec:framework}

\subsection{Potential outcome and treatment assignment}\label{sec:po_assign}
We consider an experiment on $n$ units, with $n_{\treat}$ units assigned to treatment and the remaining $n_{\control}=n - n_{\treat}$ assigned to control. 
We invoke the potential outcome framework \citep{Neyman:1923, Rubin:1974} to define treatment effects. 
For each unit $1\le i \le n$, let $Y_i(1)$ and $Y_i(0)$ denote the treatment and control potential outcomes, and $\tau_i \equiv Y_i(1) - Y_i(0)$ denote the individual treatment effect. 
We use $Z_i$ to denote the treatment assignment indicator for unit $i$, where $Z_i = 1$ if the unit receives treatment and $0$ otherwise. 
The observed outcome for unit $i$ is then 
$Y_i = Z_i Y_i(1) + (1-Z_i) Y_i(0)$. 
Let $\bs{Y}(1) = (Y_1(1), \ldots, Y_n(1))^\top$, 
$\bs{Y}(0) = (Y_1(0), \ldots, Y_n(0))^\top$, 
$\bs{\tau} = (\tau_1, \ldots, \tau_n)^\top$,
$\bs{Y} = (Y_1, \ldots, Y_n)^\top$
and 
$\bs{Z} = (Z_1, \ldots, Z_n)^\top$
denote the vectors of potential outcomes, individual treatment effects, observed outcomes and treatment assignments. 

\subsection{Distributions of individual treatment effects}\label{sec:null}

Let $F_n(c) \equiv n^{-1} \sum_{i=1}^n \I(\tau_i \le c)$ denote the sample distribution function of the individual treatment effects, 
and $F_n^{-1}(\beta) = \inf\{c: F_n(c) \ge \beta\}$ for $\beta \in [0,1]$ denote the corresponding quantile function.  
We 
{\rev consider}
null hypotheses of the following form: 
\setlength{\abovedisplayskip}{2pt}
\setlength{\belowdisplayskip}{2pt}
\begin{align}\label{eq:null_fp}
H_{\beta, c}^n: F_n^{-1}(\beta) \le c
\Longleftrightarrow F_n(c) \ge \beta \Longleftrightarrow 1 - F_n(c) \le 1 - \beta, \quad (0 \le \beta \le 1; c \in \mathbb{R})
\end{align}
under which the $\beta$th quantile of individual effects is bounded by $c$ or equivalently the proportion of units with individual effects exceeding $c$ is at most $1-\beta$. 
The equivalence in \eqref{eq:null_fp} follows by definition. 
{\rev 
As discussed shortly, valid tests for \eqref{eq:null_fp} can be readily adapted to accommodate the other one-sided and two-sided hypotheses. We thus mainly focus on \eqref{eq:null_fp}.}

Moreover, 
with the finite sample of $n$ units, the quantile function only takes values in the sorted individual effects: $\tau_{(1)}\le \ldots \le \tau_{(n)}$. 
Specifically, for any $\beta \in [0,1]$, $F_n^{-1}(\beta) = \tau_{(k)}$ with $k=\lceil n \beta \rceil$, where we define  $\tau_{(0)} = -\infty$ for descriptive convenience. 
Thus, it suffices to consider hypotheses of the following form: 
\begin{align}\label{eq:Hnkc}
    H_{k, c}^n: \tau_{(k)} \le c 
    \ \Longleftrightarrow \ 
    n(c) \le n-k, 
    \qquad (0\le k \le n; c\in \mathbb{R})
\end{align}
where $n(c) = \sum_{i=1}^n  \I(\tau_i > c)$
denotes the number of units with individual effects greater than $c$.  
Below we give two additional remarks regarding the null hypotheses in \eqref{eq:null_fp} and \eqref{eq:Hnkc}.

First, we can use Lehmann-style test inversion to construct (simultaneously) valid one-sided lower confidence limits for (multiple) quantiles of individual effects, which can lead to confidence bands for the whole quantile (or distribution) function of individual treatment effects. 
In addition, we will consider cases where the experimental units constitute only a subset of the target population of interest, under which we will infer treatment effect distributions and quantiles for the whole target population.

{\rev 
Second, 
we can consider null hypotheses of form $F_n^{-1}(\beta) \ge c$  or $\tau_{(k)}\ge c$ with alternatives favoring smaller treatment effects.  
They can be tested analogously as \eqref{eq:null_fp} and \eqref{eq:Hnkc} by switching the treatment labels or changing the outcome signs. 
By test inversion, these can then lead to (simultaneous) upper confidence limits for (multiple) quantiles of individual effects. 
Importantly, by combining the two one-sided tests or one-sided confidence intervals with a Bonferroni correction, 
we can conduct two-sided tests or construct two-sided confidence intervals for quantiles of individual treatment effects; 
see Section \ref{sec:teacher} for an 
illustration.}



\secadj
\section{Completely randomized experiments}\label{sec:cre}

We will first consider a completely randomized experiment (CRE) with $n$ units, among which $n_{\treat}$ units are randomized to treatment and the remaining $n_{\control} = n - n_{\treat}$ are randomized to control. 
We want to infer the sample distribution function $F_n(\cdot)$ and sample quantile function $F_n^{-1}(\cdot)$ of individual treatment effects for the $n$ experimental units. 
Here we view the potential outcomes of the $n$ units as fixed constants (or equivalently condition on them), 
under which the randomness in the observed data comes solely from the random treatment assignment. 
That is, we will conduct randomization-based inference, sometimes also called the design-based or finite population inference, for the treatment effects of these experimental $n$ units. 
Our inference adapts and improves upon the approach recently proposed by \citet{CDLM21quantile}, which is reviewed in Section \ref{sec:review_drawback}, highlighting its limitations.
We then discuss two extensions that can enhance the power of their approach in Sections \ref{sec:effect_treated} and \ref{sec:berger}, respectively.


\subsection{Review of \citet{CDLM21quantile}'s  approach and its limitations}
\label{sec:review_drawback}

We briefly review the approach in \citet{CDLM21quantile}  for inferring quantiles of individual treatment effects, which builds upon the Fisher randomization test that was originally proposed for testing sharp null hypotheses that require speculation of all individual effects
\citep{Fisher:1935}. 
Specifically, 
\citet{CDLM21quantile} proposed to test the composite null hypothesis $H_{k,c}^n$ in \eqref{eq:Hnkc} using the supremum of the $p$-value from the Fisher randomization test over all possible sharp null hypotheses that satisfy $H_{k,c}^n$. 
To make it computationally feasible, they utilized a distribution-free rank score statistic
of the following form:
\begin{align}\label{eq:rank_score}
    t(\bs{z}, \bs{y}) = \sum_{i=1}^n z_i \phi( \rank_i(\bs{y}) ), 
    \quad (\bs{z}\in \{0,1\}^n; \bs{y}\in \mathbb{R}^n)
\end{align}
where $\phi$ is a monotone increasing\footnote{Throughout the paper, increasing means nondecreasing and analogously decreasing means nonincreasing.} rank transformation and 
$\rank_i(\bs{y})$ denotes the rank of the $i$th coordinate of $\bs{y}$. 
Importantly, we will break ties using index ordering, assuming that units' ordering has been randomly shuffled before analysis. 
The rank score statistic is distribution free under the CRE, in the sense that the distribution of $t(\bs{Z}, \bs{y})$ is the same for any constant $\bs{y}\in \mathbb{R}^n$. 
Such a property is crucial since it greatly simplifies the computation for the supremum of the Fisher randomization $p$-value. 
The resulting valid $p$-value proposed by \citet{CDLM21quantile} for testing $H_{k,c}^n$ in \eqref{eq:Hnkc} has the following form: 
\begin{align}\label{eq:p_nkc}
    p_{k,c}^n & = G
    \Big(
    \inf_{\bs{\delta}\in \mathcal{H}^n_{k,c}} 
    t(\bs{Z}, \bs{Y} - \bs{Z} \circ \bs{\delta})
    \Big),
    \quad 
    \text{where } G(x) = \Pr\{ t(\bs{Z}, \bs{y}) \ge x \} \text{ for any } \bs{y}\in \mathbb{R}^n, 
\end{align}
and $\mathcal{H}^n_{k,c} \subset \mathbb{R}^n$ denotes the set of all possible values of the individual effect vector $\bs{\tau}$ under $H_{k, c}^n$.
{\rev Importantly, $G(\cdot)$ in \eqref{eq:p_nkc} does not vary with $\bs{y}$ due to the distribution free property of the rank-based statistic in \eqref{eq:rank_score},  and it is essentially a known function depending only on numbers of treated and control units $(n_\treat, n_\control)$ and the rank transformation $\phi(\cdot)$ in \eqref{eq:rank_score}.}
\citet{CDLM21quantile} showed that the minimum test statistic value in \eqref{eq:p_nkc} is achieved when the $n-k$ treated units (or all treated units if $n-k>n_{\treat}$) with the largest observed outcomes have individual effects $\infty$ while the remaining have individual effects $c$. 
This is intuitive since 
(i) to minimize the test statistic, we want the imputed control potential outcomes $\bs{Y}-\bs{Z}\circ \bs{\delta}$ for treated units to be small, or equivalently the hypothesized individual effects for treated units to be large; 
(ii) under the null $H_{k, c}^n$, at most $n-k$ units can have individual effects greater than $c$, and the remaining units' individual effects are all bounded by $c$; 
(iii) in the worst case, we will assume that $n-k$ units have infinite individual effects and all receive treatment, with the corresponding observed outcomes being the largest among treated units. 

However, \citet{CDLM21quantile}'s $p$-value in \eqref{eq:p_nkc} can be overly conservative, since it presumes that all units with large effects receive treatment. 
Under the CRE, the set of treated units is a simple random sample of size $n_{\treat}$ from all units, 
and thus, in expectation, at most $(n-k)\cdot n_{\treat}/n$ treated units have effects greater than $c$ under the null $H_{k,c}^n$. 
Thus, considering the worst-case configuration of individual effects 
as in \citet{CDLM21quantile}  may provide overly conservative inference.
{\rev
Taking the education experiment in Section \ref{sec:teacher} as an example, \citet{CDLM21quantile} infer that, with $90\%$ confidence level, there are at least $88$ units (or equivalently $88/233=37.8\%$ of units) with positive individual effects. 
Since the approach in \citet{CDLM21quantile} presumes that all units with large effects are in the treated group, 
this essentially implies that at least $88$ treated units (or equivalently $88/164 = 53.7\%$ of treated units) have positive individual effects. 
Given that the treated and control groups are, on average, comparable in all aspects under the CRE, we can expect that, with $90\%$ confidence, about half of the units would benefit from the treatment, which is more informative and powerful than the original results in \citet{CDLM21quantile}. 
}


In the following two subsections, 
we will utilize two distinct ideas to improve the power of \citet{CDLM21quantile}'s approach, {\rev with a comparison between them discussed at the end of this section.} 
We will also consider simultaneous confidence intervals for multiple treatment effect quantiles, which can further lead to confidence bands for 
$F_n(\cdot)$ and $F_n^{-1}(\cdot)$. 

\begin{remark}
As discussed in \citet{CDLM21quantile}, we may consider a more general form of test statistics beyond rank-based ones and use large-sample approximation for the imputed null distribution to facilitate the optimization over the randomization $p$-value. 
We focus on rank-based statistics in the paper due to the following reasons. 
First, the optimization can often be solved efficiently, and it has a closed-form solution in the case of CRE. 
Second, the resulting test can be finite-sample valid and can accommodate heavy-tailed outcomes and individual effects, for which a statistic based on original outcomes may be sensitive and the corresponding large-sample approximation may perform poorly. 
Third, the classical literature \citep{Lehmann1963, Kraft1972, Beran1974} suggests that, with properly chosen rank transformations, 
rank-based tests can achieve efficiency comparable to tests based on original outcomes\footnote{\rev 
Note that the test in \eqref{eq:p_nkc} for the maximum individual effect $\tau_{(n)}$ shares the same procedure as the usual permutation test for the location shift between treated and control observations. 
It is worth investigating the power of the test, in particular with respect to the choice of rank transformation, in our context for inferring treatment effect quantiles. This is beyond the scope of this paper, and we leave it for future study.}.
Finally, as discussed shortly, the strategies discussed below for power improvement can still be useful for other choices of test statistics. 
\end{remark}

\subsection{Treatment effects for treated, control and all units}\label{sec:effect_treated}

By taking a deeper look at the $p$-value in \eqref{eq:p_nkc}, 
we notice that the test statistic $t(\bs{Z}, \bs{Y} - \bs{Z} \circ \bs{\delta})$ depends essentially on the hypothesized treatment effects for treated units. 
In particular, the infimum of the test statistic in \eqref{eq:p_nkc} under $H^n_{k,c}$ is the same as that under the hypothesis where at most $n-k$ treated units have individual effects greater than $c$. 
This then motivates us to consider the following hypotheses on treatment effects among only treated units: 
\begin{align}\label{eq:Hn1kc}
    H_{k,c}^{n,\treat}: \tau_{\treat(k)} \le c 
    \Longleftrightarrow 
    n_{\treat}(c) \le n_{\treat}-k
    \Longleftrightarrow 
    \bs{\tau} \in \mathcal{H}^{n,\treat}_{k,c}, 
    \qquad (0\le k \le n_{\treat}; c\in \mathbb{R})
\end{align}
where $\tau_{\treat(1)} \le \ldots \le \tau_{\treat(n_{\treat})}$ denote the sorted individual treatment effects for treated units, 
$n_\treat(c) = \sum_{i=1}^n Z_i \I(\tau_i > c)$
denotes the number of treated units with individual effects greater than $c$, 
and 
$\mathcal{H}^{n,\treat}_{k,c}$ denotes the set of possible values of $\bs{\tau}$ under the hypothesis $H_{k,c}^{n,\treat}$. 
We emphasize that, unlike traditional null hypotheses, the hypothesis $H^{n, \treat}_{k,c}$ in \eqref{eq:Hn1kc} is generally random. 
This is because hypotheses of form \eqref{eq:Hn1kc} concern treatment effects for treated units, which are randomly selected from all experimental units.  
With a slight abuse of terminology, 
we say $p$ is a valid $p$-value for testing a hypothesis $H$ that can 
be random, 
if $\Pr(p\le \alpha \textup{ and $H$ holds})\le \alpha$
for any $\alpha\in (0,1)$; see also \citet{jin2023selection}. 

\begin{theorem}\label{thm:treated}
    Consider the CRE and any rank score statistic $t(\cdot, \cdot)$ in \eqref{eq:rank_score}.
    \begin{enumerate}[label=(\roman*), topsep=0ex,itemsep=-1ex,partopsep=1ex,parsep=1ex]
        \item[(a)] For any $0\le k \le n_{\treat}$ and any $c\in \mathbb{R}$, 
        \begin{align}\label{eq:p1}
            p_{k,c}^{n, \treat} 
            & \equiv 
            G
            \big(
            \inf_{\bs{\delta}\in \mathcal{H}^{n, \treat}_{k,c}} 
            t(\bs{Z}, \bs{Y} - \bs{Z} \circ \bs{\delta})
            \big)
            = 
            G
            \big(
            \inf_{\bs{\delta}\in \mathcal{H}^{n}_{n_{\control}+k, c}} 
            t(\bs{Z}, \bs{Y} - \bs{Z} \circ \bs{\delta})
            \big)
            =
            p_{n_{\control}+k, c}^n
        \end{align}
        is a valid $p$-value for testing $H_{k,c}^{n, \treat}$ in \eqref{eq:Hn1kc}, 
        where $p_{n_{\control}+k, c}^n$ is defined as in \eqref{eq:p_nkc}. 
        \item[(b)] For any $0 \le k\le n_{\treat}$ and $\alpha\in (0,1)$, 
        $\mathcal{I}_{\treat(k)}^{\alpha} \equiv \{c\in \mathbb{R}: p_{k,c}^{n, \treat} > \alpha \}$ is a $1-\alpha$ prediction set for $\tau_{\treat(k)}$, in the sense that 
        $\Pr\{ \tau_{\treat(k)} \in \mathcal{I}_{\treat(k)}^{\alpha} \}\ge 1-\alpha$, and $\mathcal{I}_{\treat(k)}^\alpha$ must be an interval of form $(c, \infty)$ or $[c, \infty)$, 
        with $c = \inf\{c: p_{k,c}^{n, \treat} > \alpha, c \in \mathbb{R}\}$.
        \item[(c)] 
        The prediction intervals in (b) 
        are simultaneously valid, 
        in the sense that 
        $
            \Pr( 
            \cap_{k=1}^{n_{\treat}} 
            \{ 
            \tau_{\treat(k)} \in \mathcal{I}_{\treat(k)}^{\alpha}
            \}
            )
            \ge 1-\alpha. 
        $
    \end{enumerate}
\end{theorem}



Theorem \ref{thm:treated} has several implications. 
First, 
due to the equivalence in \eqref{eq:p1}, 
the prediction interval $\mathcal{I}_{\treat(k)}^{\alpha}$ in (b) is the same as the confidence interval for $\tau_{(n_{\control}+k)}$ constructed in \citet{CDLM21quantile}, whose validity follows immediately from Theorem \ref{thm:treated} since $\tau_{(n_{\control}+k)} \ge \tau_{\treat(k)}$. 
However, using $\mathcal{I}_{\treat(k)}^{\alpha}$ as a confidence interval for $\tau_{(n_{\control}+k)}$ may be overly conservative, since 
$\tau_{(n_{\control}+k)} = \tau_{\treat(k)}$ only when all the control units' effects are below $\tau_{\treat(k)}$. 
Analogously, the prediction interval in (c) is also the same as the confidence interval for $n(c)$ constructed in \citet{CDLM21quantile}, whose validity follows immediately from Theorem \ref{thm:treated} since $n(c) \ge n_\treat(c)$. 
Again, such a confidence interval for $n(c)$ may be overly conservative, since $n(c) = n_\treat(c)$ only when no control units have effects greater than $c$. 
From the above, \citet{CDLM21quantile} essentially used effect quantiles among treated units, {\rev supplemented with noninformative $(-\infty, \infty)$ intervals for effect quantiles among control units}, 
to infer those for all experimental units, which can be overly conservative. 

{\rev Second, due to the equivalence relation in \eqref{eq:Hnkc} and \eqref{eq:Hn1kc}, the simultaneous prediction (or confidence) intervals for effect quantiles can also provide simultaneous prediction (or confidence) intervals for numbers or proportions of units with effects exceeding any thresholds.
Similar results hold for the later theorems as well. 
}

Third, by the same logic as Theorem \ref{thm:treated}, we can infer treatment effects on control units. 
This can be achieved by applying Theorem \ref{thm:treated} with switched treatment labels and changed outcome signs. 
Furthermore, we can combine them to infer treatment effects for all units.

\begin{theorem}\label{thm:comb_all}
    For any $\alpha\in (0,1)$, 
    let $\mathcal{I}^\alpha_{\treat(k)}$, $1\le k \le n_{\treat}$, be $1-\alpha$ simultaneous prediction intervals for sorted individual effects of treated units, 
    and $\mathcal{I}^\alpha_{\control(k)}$, $1\le k \le n_{\control}$, be $1-\alpha$ simultaneous prediction intervals for sorted individual effects of control units; 
    these are all one-sided intervals of form $(c, \infty)$ or $[c, \infty)$. 
    Pool together $\mathcal{I}^\alpha_{\treat(k)}$s and $\mathcal{I}^\alpha_{\control(k)}$s, and arrange them 
    such that 
    $
    \mathcal{I}^\alpha_{(1)} \supseteq \mathcal{I}^\alpha_{(2)} \supseteq  \ldots \supseteq \mathcal{I}^\alpha_{(n)}. 
    $
    These intervals are simultaneous $(1-2\alpha)$ confidence intervals for all quantiles of individual effects $\tau_{(k)}$s, 
    i.e., 
        $
        \Pr( 
        \cap_{k=1}^{n} 
        \{ 
        \tau_{(k)} \in \mathcal{I}_{(k)}^\alpha
        \}
        ) \ge 1 - 2\alpha. 
        $
\end{theorem}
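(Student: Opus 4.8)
The plan is to decouple the probabilistic part from a purely deterministic combinatorial part. First I would invoke Theorem~\ref{thm:treated}(d), together with its analogue for control units (obtained by swapping treatment labels and flipping outcome signs, as noted right after Theorem~\ref{thm:treated}), to get that the two events
$E_\treat \equiv \bigcap_{k=1}^{n_\treat}\{\tau_{\treat(k)}\in\mathcal I^\alpha_{\treat(k)}\}$
and
$E_\control \equiv \bigcap_{k=1}^{n_\control}\{\tau_{\control(k)}\in\mathcal I^\alpha_{\control(k)}\}$
each have probability at least $1-\alpha$. A single union bound then gives $\Pr(E_\treat\cap E_\control)\ge 1-2\alpha$, so it suffices to show that on the event $E_\treat\cap E_\control$ we have $\tau_{(k)}\in\mathcal I^\alpha_{(k)}$ for every $1\le k\le n$ \emph{deterministically}; this is a statement about how merging two sorted lists of effects interacts with merging their sorted one-sided prediction intervals.

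For the deterministic step, fix $k$. Since the pooled intervals are re-indexed so that $\mathcal I^\alpha_{(1)}\supseteq\cdots\supseteq\mathcal I^\alpha_{(n)}$, the $n-k+1$ intervals $\mathcal I^\alpha_{(k)},\mathcal I^\alpha_{(k+1)},\ldots,\mathcal I^\alpha_{(n)}$ are all contained in $\mathcal I^\alpha_{(k)}$. As labelled members of the pool, these are $n-k+1$ distinct intervals among the original $\mathcal I^\alpha_{\treat(m)}$'s and $\mathcal I^\alpha_{\control(m)}$'s. On $E_\treat\cap E_\control$, whenever one of them is some $\mathcal I^\alpha_{\treat(m)}$ we have $\tau_{\treat(m)}\in\mathcal I^\alpha_{\treat(m)}\subseteq\mathcal I^\alpha_{(k)}$, and likewise on the control side; since $\{\tau_{\treat(m)}\}_{m=1}^{n_\treat}\cup\{\tau_{\control(m)}\}_{m=1}^{n_\control}$ is exactly the multiset $\{\tau_{(1)},\ldots,\tau_{(n)}\}$, this shows that at least $n-k+1$ of the $\tau_{(j)}$ lie in $\mathcal I^\alpha_{(k)}$. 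Finally, because $\mathcal I^\alpha_{(k)}$ is a one-sided interval of the form $(c,\infty)$ or $[c,\infty)$, the set $\{j:\tau_{(j)}\in\mathcal I^\alpha_{(k)}\}$ is upward closed in $j$ (if $\tau_{(j)}\in\mathcal I^\alpha_{(k)}$ and $j'\ge j$, then $\tau_{(j')}\ge\tau_{(j)}$ forces $\tau_{(j')}\in\mathcal I^\alpha_{(k)}$), so having at least $n-k+1$ elements it must contain $\{k,k+1,\ldots,n\}$; in particular $\tau_{(k)}\in\mathcal I^\alpha_{(k)}$. Letting $k$ range over $1,\ldots,n$ finishes the deterministic claim.

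The conceptual crux — and the only place that needs care — is exploiting that all the prediction intervals are right-unbounded rays: this is exactly what makes ``sort the intervals by inclusion'' coincide with ``sort their left endpoints,'' and what makes the membership sets $\{j:\tau_{(j)}\in\mathcal I^\alpha_{(k)}\}$ upward closed, so that the order-statistic comparison survives the merge regardless of whether the endpoints are open or closed. I would phrase everything through the containments $\mathcal I^\alpha_{\treat(m)}\subseteq\mathcal I^\alpha_{(k)}$ rather than manipulating the endpoints $c$ and their open/closed status directly, which sidesteps all boundary bookkeeping; note also that nestedness of the treated intervals among themselves is not needed, only nestedness of the pooled sorted list, which holds by construction. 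The $1-2\alpha$ guarantee is then precisely the loss from the one union bound, and no sharpening is attempted.
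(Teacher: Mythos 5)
Your proposal is correct and takes essentially the same approach as the paper: a Bonferroni bound giving $\Pr(E_{\textup{t}}\cap E_{\textup{c}})\ge 1-2\alpha$, followed by a deterministic pigeonhole argument that exploits the nested, one-sided (right-ray) structure of the pooled intervals. The only cosmetic difference is that you count how many effects fall inside the fixed interval $\mathcal{I}^\alpha_{(k)}$, whereas the paper counts, via a permutation, how many pooled intervals contain the fixed quantile $\tau_{(k)}$ — dual formulations of the same merge lemma.
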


In Theorem \ref{thm:comb_all}, we use \citet{CDLM21quantile}'s approach twice, one with the original data and the other with the switched treatment labels and changed outcome signs, and then combine the resulting intervals for top quantiles, throwing away those uninformative $(-\infty, \infty)$ intervals. Consequently, the $(1-2\alpha)$ confidence interval in Theorem \ref{thm:comb_all} is narrower than or equal to the corresponding $(1-\alpha)$ interval in \citet{CDLM21quantile}. This comparison, however, is not entirely fair because the two intervals have different coverage guarantees. In general, the $(1-\alpha)$ interval from Theorem \ref{thm:comb_all} need not be narrower than the $(1-\alpha)$ interval in \citet{CDLM21quantile}. In practice, however, we often find it to be narrower; see the simulations and data analyses in Section \ref{sec:illustration}.


{\rev 
Our approach in Theorem \ref{thm:comb_all} shares a similar spirit as \citet[][Section 2]{Rigdon:2015}. 
Specifically, we combine prediction sets for treatment effect quantiles among treated and control units to construct confidence sets for treatment effect quantiles among all experimental units, whereas \citet{Rigdon:2015} combine prediction sets for average treatment effects among treated and control units (termed as attributable effects by \citet{Rosenbaum:2001}) to construct confidence sets for the average treatment effect among all experimental units. 
The similarity comes from the similar forms of test statistics for the randomization test.
\citet{Rigdon:2015} use $\sum_{i=1}^n Z_i Y_i(0)$ as the test statistic as suggested by \citet{Rosenbaum:2001}, 
while we essentially use $\sum_{i=1}^n Z_i \phi( \rank_i(\bs{Y}(0)) )$ as the test statistic, but in a worst case sense since the true $\bs{Y}(0)$ is unknown. 
Both test statistics are of form $t(\bs{Z}, \bs{Y}(0))$, depending only on the treatment assignments and control potential outcomes. Note that the control potential outcomes are observed for control units. 
Thus, the corresponding randomization test concerns essentially only about treatment effects for treated units, and the test inversion can further lead to prediction sets for treatment effects among treated units, such as their average or quantiles. 
By the same logic, we can construct prediction sets for treatment effects among control units using test statistic of form $t(\bs{Z}, \bs{Y}(1))$; 
this is equivalent to using test statistic of form $t(\bs{Z}, \bs{Y}(0))$ but with switched treatment labels. 
Finally, we can combine prediction sets for effects among treated and control units to construct confidence sets for effects among all units, as in \citet[][Section 2]{Rigdon:2015} and Theorem \ref{thm:comb_all}. 

From the above, the strategies discussed in this and next subsections 
are generally useful when treatment effects are heterogeneous and we consider test statistics of form $t(\bs{Z}, \bs{Y}(0))$ or $t(\bs{Z}, \bs{Y}(1))$,
which is common in causal inference \citep[see, e.g.,][]{Rosenbaum02a}.

\begin{remark}   \label{rmk: inference for attributable effect}
For binary outcomes, our approach can also be adapted to test average treatment effects among treated (or control) units. 
To construct prediction sets by test inversion, 
our approach requires at most $O(\log n)$ randomization tests due to the monotonicity of the corresponding randomization $p$-values, whereas the approaches in \citet{Rosenbaum:2001} and \citet{Rigdon:2015} generally require $O(n)$ randomization tests. 
However, 
our test can be sensitive to 
randomness in tie breaking for binary outcomes 
and is less preferable to \citet{Rosenbaum:2001}; see the supplementary material for a numerical illustration. 
Nevertheless, the computational benefit of our 
approach may be useful for handling more general discrete outcomes, a direction we leave for future investigation; see the supplementary 
material for a more detailed discussion as well as connections to \citet{LD2016binary}.
\end{remark}
}
\subsection{Controlling the number of treated or control units with large effects}\label{sec:berger}

\subsubsection{Inference for a single quantile of individual treatment effects}

From Section \ref{sec:review_drawback}, the $p$-value $p_{k,c}^n$ in \eqref{eq:p_nkc} proposed by \citet{CDLM21quantile} depends essentially on the number of treated units with individual effects greater than $c$, 
which, under the null $H_{k,c}^n$ in \eqref{eq:Hnkc}, is 
at most 
$(n-k) \cdot n_{\treat}/n$ 
in expectation. 
Hence,  intuitively, we may use the $p$-value $p_{k',c}^n$, with $n-k' = (n-k) \cdot n_{\treat}/n$, to test the null $H_{k,c}^n$; 
obviously, because $k'\ge k$, this leads to a smaller $p$-value $p_{k',c}^n \le p_{k,c}$.
However, simply using $p_{k',c}^n$ may fail to control the type-I error, 
because the number of treated units with effects greater than $c$ is only bounded by $n-k'$ in expectation and may be greater than $n-k'$ by chance. 
Indeed, the number of treated units with effects greater than $c$ is stochastically bounded by a Hypergeometric distribution under $H^n_{k, c}$.  
Based on this, we can utilize \citet{Berger1994}'s approach to deal with this nuisance parameter for our inference on effect quantiles, 
as detailed below; 
see also \cite{DFM2016}
and \cite{zhang2022bridging} for other applications of \citet{Berger1994}'s approach in randomization-based inference. 
For any integer $N\ge \max\{K, n\}$, 
let $\text{HG}(N, K, n)$ denote the Hypergeometric distribution that describes the number of successes in $n$ draws, without replacement, from a finite population of size $N$ with $K$ successes. 

\begin{theorem}\label{thm:BB_pval}
Consider the CRE 
and any rank score statistic in \eqref{eq:rank_score}. 
For any $0\le k \le n$ and $c\in \mathbb{R}$, 
and any prespecified integer $0\le k'\le n_{\treat}$, 
\begin{align}\label{eq:pval_BB}
    \overline{p}_{k, c}^n(k')
    &  \equiv 
    p_{n_{\control}+k', c}^n + 
    \Pr\{
    {\rev \hyper(k)} > n-(n_{\control}+k')
    \}
    = p^{n,\treat}_{k', c} + \Pr( 
    {\rev \hyper(k)} > n_{\treat} - k'
    )
\end{align}
is a valid $p$-value for testing the null hypothesis $H^n_{k,c}$ in \eqref{eq:Hnkc}, 
where ${\rev \hyper(k)} \sim \text{HG}(n, n-k, n_{\treat})$ is a Hypergeometric random variable, 
and 
$p_{n_{\control}+k', c}^n$ and $p^{n,\treat}_{k', c}$ are defined as in \eqref{eq:p_nkc} and \eqref{eq:p1}. 
\end{theorem}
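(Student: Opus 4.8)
The plan is to combine the validity of the treated-unit $p$-value from Theorem~\ref{thm:treated}(a) with a Hypergeometric stochastic bound on the nuisance parameter $n_{\treat}(c)$, following \citet{Berger1994}. First I would note that the two expressions in \eqref{eq:pval_BB} are literally equal: by Theorem~\ref{thm:treated}(a) we have $p^{n,\treat}_{k',c} = p^n_{n_{\control}+k',c}$, and since a draw of $\hyper \sim \mathrm{HG}(n, n-k, n_{\treat})$ takes values in $\{0,\dots,n_{\treat}\}$, the event $\{\hyper > n - (n_{\control}+k')\}$ is the same as $\{\hyper > n_{\treat}-k'\}$ because $n - (n_{\control}+k') = n_{\treat}-k'$. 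So it suffices to bound $\Pr(\overline p^n_{k,c}(k') \le \alpha \text{ and } H^n_{k,c} \text{ holds})$.

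The core argument conditions on the random set of treated units, or more precisely on the value of $n_{\treat}(c)$, the number of treated units with effect exceeding $c$. Under $H^n_{k,c}$ we have $n(c) \le n-k$, i.e.\ at most $n-k$ of the $n$ units have $\tau_i > c$. Since the treated set is a simple random sample of size $n_{\treat}$ drawn without replacement from the $n$ units, the count $n_{\treat}(c)$ is stochastically dominated by $\hyper \sim \mathrm{HG}(n, n-k, n_{\treat})$; this is the standard fact that sampling without replacement from a population with fewer successes yields a stochastically smaller success count, and it holds conditionally on the fixed potential outcomes. Now decompose on the event $\{n_{\treat}(c) \le n_{\treat}-k'\}$ versus its complement. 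On the first event, $H^{n,\treat}_{k',c}$ holds (at most $n_{\treat}-k'$ treated units have effect $>c$), so by Theorem~\ref{thm:treated}(a) the event $\{p^{n,\treat}_{k',c} \le \alpha - \Pr(\hyper > n_{\treat}-k')\}$ together with $H^{n,\treat}_{k',c}$ has probability at most $\alpha - \Pr(\hyper > n_{\treat}-k')$. On the complementary event $\{n_{\treat}(c) > n_{\treat}-k'\}$, we bound crudely by its probability, which under $H^n_{k,c}$ is at most $\Pr(\hyper > n_{\treat}-k')$ by the stochastic dominance. Adding the two pieces: $\Pr(\overline p^n_{k,c}(k') \le \alpha,\ H^n_{k,c}) \le \bigl(\alpha - \Pr(\hyper > n_{\treat}-k')\bigr) + \Pr(\hyper > n_{\treat}-k') = \alpha$, since $\overline p^n_{k,c}(k') \le \alpha$ forces $p^{n,\treat}_{k',c} \le \alpha - \Pr(\hyper > n_{\treat}-k')$.

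I would then remark that when $\alpha < \Pr(\hyper > n_{\treat}-k')$ the $p$-value can never fall below $\alpha$ and validity is trivial, so the split above is only needed in the nontrivial regime; this also makes clear why one prespecifies $k'$ (the Hypergeometric law depends on $k$, but the threshold $n_{\treat}-k'$ does not, so the added mass $\Pr(\hyper > n_{\treat}-k')$ is a genuine function of $k$ that must be recomputed for each null but $k'$ is fixed in advance). The main obstacle, and the step deserving the most care, is justifying the stochastic dominance of $n_{\treat}(c)$ by $\hyper$ \emph{conditionally on the potential outcomes} and checking that ``$n-k$'' rather than the true $n(c)$ is the right Hypergeometric success parameter under the composite null: one wants the bound to hold uniformly over all configurations consistent with $H^n_{k,c}$, and monotonicity of the Hypergeometric upper tail in the number of successes is what delivers this. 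Everything else is the Berger--Boos bookkeeping, which is routine once the conditioning is set up correctly.
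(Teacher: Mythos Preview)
Your proposal is correct and follows essentially the same approach as the paper's proof: both establish the stochastic dominance of $n_{\treat}(c)$ by $\hyper$ under $H^n_{k,c}$, decompose on $\{n_{\treat}(c)\le n_{\treat}-k'\}$ versus its complement, invoke Theorem~\ref{thm:treated}(a) on the first piece, and bound the complement by the Hypergeometric tail. Your added remarks on the trivial regime and on the monotonicity of the Hypergeometric tail in the success parameter are not in the paper's proof but are helpful clarifications; otherwise the arguments coincide.
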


The valid $p$-value in \eqref{eq:pval_BB} consists of two terms. 
The first can be viewed as a valid $p$-value, in the sense of being stochastically 
no less than $\textup{Unif}(0,1)$, when the number of treated units with effects greater than $c$ is bounded by $n_{\treat}-k'$, while the second is a correction term bounding the probability that this event fails.
When 
$k' = \max\{0, k-n_{\control}\}$, 
the correction term becomes zero, and the $p$-value in \eqref{eq:pval_BB} reduces to that in \citet{CDLM21quantile}. 

The choice of $k'$ in \eqref{eq:pval_BB} is a researcher choice, which impacts the test power and involves a trade-off. 
On the one hand, we want $k'$ to be large so that $p^{n,\treat}_{k', c}$ is small;  
on the other hand, we want $k'$ to be small so that the correction term is small. 
We suggest choosing $k'$ such that the correction term is close to, yet bounded from above by, $\gamma\alpha$, for some $\gamma\in [0,1)$. 
This is equivalent to choosing $k'$ as $n_{\treat} - q_{\textup{HG}}(1-\gamma\alpha; n, n-k, n_{\treat})$, where $q_{\textup{HG}}(\cdot)$ denotes the quantile function for the Hypergeometric distribution. 
The influence of $\gamma$ on testing power will be investigated through simulation; see Section \ref{sec:simu_sugg}.
As a side remark, in principle, the choice of $k'$ can depend on both $k$ and $c$ for the null $H_{k,c}$. 
Nevertheless, we will consider the choice of $k'$ that relies only on $k$, and make this implicit throughout the paper. 
This is intuitive since $n_{\treat}-k'$ represents the number of units with the largest $n-k$ individual effects that are assigned to the treated group, which is not related to $c$ directly. 

By test inversion,  
Theorem \ref{thm:BB_pval} then provides confidence intervals for treatment effect quantiles $\tau_{(k)}$s, as well as the number of units with effects greater than any threshold. 
\begin{theorem}\label{thm:sing_interval_BB}
    Consider the CRE,
    any rank score statistic in \eqref{eq:rank_score}, 
    and any $\alpha\in (0,1)$.  
    For $1\le k \le n$ and any prespecified $0\le k'\le n_\treat$, 
        $
        \{c: \overline{p}_{k,c}^n(k') > \alpha\}
        = 
        \{c: p^{n,\treat}_{k', c} > \alpha - \Pr \{ {\rev \hyper(k)} > n_{\treat}-k'\} \}
        = 
        \mathcal{I}^{\alpha'}_{\treat(k')}
        $
        is a $1-\alpha$ confidence interval for $\tau_{(k)}$, 
        where 
        ${\rev \hyper(k)} \sim \textup{HG}(n, n-k, n_{\treat})$, 
        $\alpha' = \alpha - \Pr\{ {\rev \hyper(k)} > n_{\treat}-k' \}>0$ and 
        $\mathcal{I}^{\alpha'}_{\treat(k')}$ is defined as in Theorem \ref{thm:treated}(b). 
\end{theorem}

From Theorem \ref{thm:sing_interval_BB}, we can also construct confidence set for $n(c)$ by checking whether the confidence intervals for $\tau_{(k)}$s contain $c$; this is equivalent to test inversion using the $p$-value in Theorem \ref{thm:BB_pval}. 
We will give a different interpretation of the confidence intervals in Theorem \ref{thm:sing_interval_BB} in the next subsection, which builds its connection to Theorem \ref{thm:treated} about 
effects on treated.

\subsubsection{Simultaneous inference for multiple quantiles of individual effects}

We next consider simultaneous inference for multiple treatment effect quantiles. 
Let 
$\tau_{(k_1)} \le \tau_{(k_2)} \le \ldots \le \tau_{(k_J)}$ with 
$0\le k_1 < k_2 < \ldots < k_J \le n$ be the $J$ effect quantiles of interest. 
One straightforward way is to combine individual confidence intervals in Theorem \ref{thm:sing_interval_BB}(a) using Bonferroni's method. 
Theorem \ref{thm:BB_simulCI} below improves this naive approach by utilizing dependence 
among numbers of treated units with large effects. 
For any $0\le k_1', \ldots, k_J' \le n_{\treat}$ 
define 
\begin{align}\label{eq:delta_multiple}
    \Delta_{\textup{H}}(k_{1:J}'; n, n_{\treat}, k_{1:J}) \equiv 
    \Pr\left( \bigcup_{j=1}^J \Big\{ \sum_{i=j}^J {\rev \hyper_{i}(k_{1:J}) } > n_{\treat} - k_j' \Big\} \right), 
\end{align}
where ${\rev (n_{\treat}-\sum_{j=1}^J \hyper_j(k_{1:J}), \hyper_1(k_{1:J}), \ldots, \hyper_{J-1}(k_{1:J}), \hyper_J(k_{1:J}))} \sim \textup{MHG}(
k_1, 
k_2-k_1, 
\ldots, 
k_J - k_{J-1}, 
n-k_J; n_{\treat}
)$
follows a multivariate Hypergeometric distribution. 
Specifically, 
$\textup{MHG}(
a_0, 
a_1,$ $ 
\ldots, 
a_{J-1}, 
a_{J}; n_{\treat}
)$
describes the numbers of balls of colors $0$ to $J$ in a random sample of size $n_{\treat}$, drawn without replacement from an urn with $a_j$ balls for each color $j$. 

\begin{theorem}\label{thm:BB_simulCI}
Consider the CRE and any rank score statistic. 
Let 
$\tau_{(k_1)} \le \tau_{(k_2)} \le \ldots \le \tau_{(k_J)}$ be the treatment effect quantiles of interest. 
For any $\alpha\in (0, 1)$ 
and
any $0\le k_1' \ldots, k_J' \le n_{\treat}$, 
the intervals $\mathcal{I}_{\treat(k_j')}^{\alpha'}$s  defined as in Theorem \ref{thm:treated}(b) 
with 
$\alpha' = \alpha - \Delta_{\textup{H}}(k_{1:J}'; n, n_{\treat}, k_{1:J})>0$ 
are simultaneously valid confidence intervals for 
$\tau_{(k_j)}$s, i.e., 
$
    \Pr( 
        \cap_{j=1}^J \{ \tau_{(k_j)} \in  \mathcal{I}_{\treat(k_j')}^{\alpha'} \}
        )
    \geq 1-\alpha. 
$
\end{theorem}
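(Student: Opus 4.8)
The plan is to generalize the two-step argument of Remark~\ref{rmk:single_view} so that it runs simultaneously over $j=1,\ldots,J$. There, for a single quantile, a Bonferroni split separates (a) the hypergeometric coverage of $[\tau_{\treat(k')},\infty)$ for $\tau_{(k)}$ from (b) the validity of $\mathcal{I}^{\alpha'}_{\treat(k')}$ for $\tau_{\treat(k')}$. Here the split will be between a \emph{joint} hypergeometric deviation event and the \emph{simultaneous} validity of the treated-unit prediction intervals supplied by Theorem~\ref{thm:treated}(d). Throughout I would break ties by index as elsewhere in the paper, and I may assume $\alpha'>0$, since otherwise every $\mathcal{I}^{\alpha'}_{\treat(k_j')}$ is the whole real line and the conclusion is trivial.

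For the deviation event, let $B_j$ denote the number of treated units whose individual effects rank among the $n-k_j$ largest. Two deterministic facts drive everything: (i) if $B_j\le n_{\treat}-k_j'$ then $\tau_{\treat(k_j')}\le\tau_{(k_j)}$, because having at least $k_j'$ treated units among the bottom $k_j$ ranks forces the treated unit with the $k_j'$-th smallest effect to occupy a global rank $\le k_j$, hence to have effect $\le\tau_{(k_j)}$; and (ii) since $\mathcal{I}^{\alpha'}_{\treat(k_j')}$ is a half-line $[c_j,\infty)$ or $(c_j,\infty)$ by Theorem~\ref{thm:treated}(b), the relations $\tau_{(k_j)}\ge\tau_{\treat(k_j')}$ and $\tau_{\treat(k_j')}\in\mathcal{I}^{\alpha'}_{\treat(k_j')}$ together force $\tau_{(k_j)}\in\mathcal{I}^{\alpha'}_{\treat(k_j')}$. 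Next I would identify the joint law of $(B_1,\ldots,B_J)$: under the CRE the treated set is a uniformly random $n_{\treat}$-subset of the $n$ units, so sorting units by effect and cutting the $n$ ranks into blocks of sizes $k_1, k_2-k_1, \ldots, k_J-k_{J-1}, n-k_J$ makes the treated-block occupancies distributed as $\textup{MHG}(k_1, k_2-k_1, \ldots, k_J-k_{J-1}, n-k_J; n_{\treat})$, with $B_j=\sum_{i=j}^{J}\hyper_i$ the nested partial sums appearing in \eqref{eq:delta_multiple}. Hence $\Pr\big(\bigcup_{j=1}^{J}\{B_j> n_{\treat}-k_j'\}\big)=\Delta_{\textup{H}}(k_{1:J}';n,n_{\treat},k_{1:J})$, and combining with (i), the event $A\equiv\bigcap_{j=1}^{J}\{\tau_{\treat(k_j')}\le\tau_{(k_j)}\}$ obeys $\Pr(A)\ge 1-\Delta_{\textup{H}}(k_{1:J}';n,n_{\treat},k_{1:J})$.

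For the second component, Theorem~\ref{thm:treated}(d) applied at level $\alpha'$ gives $\Pr\big(\bigcap_{1\le k\le n_{\treat}}\{\tau_{\treat(k)}\in\mathcal{I}^{\alpha'}_{\treat(k)}\}\big)\ge 1-\alpha'$, so in particular the event $B\equiv\bigcap_{j=1}^{J}\{\tau_{\treat(k_j')}\in\mathcal{I}^{\alpha'}_{\treat(k_j')}\}$ has probability at least $1-\alpha'$. A union bound yields $\Pr(A\cap B)\ge 1-\Delta_{\textup{H}}(k_{1:J}';n,n_{\treat},k_{1:J})-\alpha'=1-\alpha$, and on $A\cap B$ fact (ii) gives $\tau_{(k_j)}\in\mathcal{I}^{\alpha'}_{\treat(k_j')}$ for every $j$, which is the asserted simultaneous coverage. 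Degenerate indices with $k_j'=0$ are disposed of in one line: then $\tau_{\treat(0)}=-\infty\le\tau_{(k_j)}$ holds automatically and the interval is the whole line, so such $j$ can be dropped from both $A$ and $B$.

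The main obstacle is the distributional identification in the second paragraph: verifying precisely that the negative dependence among the $B_j$'s induced by sampling the treated units without replacement is captured \emph{exactly} by nested partial sums of a single multivariate hypergeometric draw, so that the union-of-deviations probability equals $\Delta_{\textup{H}}$ rather than a cruder Bonferroni sum $\sum_j\Pr(B_j>n_{\treat}-k_j')$ over $j$. The only other delicate point is the tie-handling, ensuring $\{B_j\le n_{\treat}-k_j'\}\subseteq\{\tau_{\treat(k_j')}\le\tau_{(k_j)}\}$ so that the inequality in fact (i) runs in the direction needed.
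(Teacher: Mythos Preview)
Your proposal is correct and follows essentially the same two-step Bonferroni split as the paper: bounding the joint hypergeometric deviation event by $\Delta_{\textup{H}}$ via the multivariate hypergeometric identification of the rank-block occupancies, and bounding the complementary prediction-interval failure by $\alpha'$. The only organizational difference is that you invoke Theorem~\ref{thm:treated}(d) as a black box for the simultaneous validity of the $\mathcal{I}^{\alpha'}_{\treat(k_j')}$'s, whereas the paper re-derives that piece by going back to the deterministic bound $G\big(t(\bs{Z},\bs{Y}(0))\big)\le p^{n,\treat}_{k_j',\tau_{(k_j)}}$ whenever $H^{n,\treat}_{k_j',\tau_{(k_j)}}$ holds; your event $A$ is exactly the paper's $\bigcap_j H^{n,\treat}_{k_j',\tau_{(k_j)}}$, so the two arguments coincide in substance.
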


The construction of the simultaneous confidence intervals in Theorem \ref{thm:BB_simulCI} comprises mainly two components. First, we consider the event where the number of units with the largest $n-k_j$ individual effects that receive treatment is bounded by $n_{\treat} - k_j'$ for all $j$. 
Second, once this event holds, intervals from inverting the $p_{k_j', c}^{n,\treat}$s with respect to any level $\alpha$ can cover the quantiles $\tau_{(k_j)}$s simultaneously with probability at least $1-\alpha$.
Importantly, the probability that the previous event fails has a sharp bound in 
\eqref{eq:delta_multiple}
that is more accurate than Bonferroni's bound, which can be represented by multivariate Hypergeometric distributions and approximated using Monte Carlo method. 
Such a sharp bound is beneficial in our context because these events $\{\sum_{i=j}^J  {\rev \hyper_i(k_{1:J})} > n_{\treat}-k_j'\}$s for $1\le j \le J$,
which describe the numbers of units with large effects that are 
assigned to the treated group, 
are highly positively associated.

\begin{remark}\label{rmk:choice_k_prime}
The choice of $k_j'$s in Theorem \ref{thm:BB_simulCI} is a researcher choice and can impact the inferential power. 
Similar to the discussion for Theorem \ref{thm:BB_pval}, we may choose $k_j'$s such that the correction term $\Delta_{\textup{H}}$ is close to, but bounded from above by $\gamma\alpha$, for some $\gamma\in (0,1)$.
Below we give a strategy to choose such $k_j'$s. 
For each $\kappa \in (0,1)$, 
we let $k_j' = n_{\treat} - q_{\textup{HG}}(1-\kappa \gamma\alpha; n, n-{\rev k_{j}}, n_{\treat})$ for each $j$; 
as discussed shortly in the next subsection, 
$\tau_{\treat(k_j')}$ is actually a $1-\kappa \gamma\alpha$ individual ``confidence'' bound for $\tau_{(k_j)}$.  
We then choose the largest possible $\kappa$ such that the corresponding $\Delta_{\textup{H}}$ is bounded by $\gamma\alpha$. 
By Bonferroni's inequality, it is not difficult to see that this desired value of $\kappa$ must be between $J^{-1}$ and $1$.
When $J=1$, $\kappa$ must be $1$, and this strategy reduces to that discussed after Theorem  \ref{thm:BB_pval} for inferring a single effect quantile. 
\end{remark}

\begin{remark}\label{rmk:bb_band}
    Theorem \ref{thm:BB_simulCI} can be used to construct simultaneous confidence bands for the whole quantile and distribution functions. 
    Specifically, let $[\underline{c}_j, \infty)$ be a $1-\alpha$ simultaneous confidence interval for effect quantiles $\tau_{(k_j)}$, 
    for some prespecified $1\le k_1 < k_2 < \ldots < k_J\le n$ and 
    $\alpha\in (0,1)$, as constructed in Theorem \ref{thm:BB_simulCI}.
    By monotonicity of the quantiles, 
    a simultaneous confidence band for $\tau_{(k)}$ with $1\le k \le n$  is  $[\underline{c}_{q(k)}, \infty)$, where $q(k)$ is defined as the maximum $j$ such that $k_j\le k$.  
    These simultaneous confidence bands can be easily visualized, as illustrated in Section \ref{sec:teacher}. 
    In practice, we suggest choosing a set of quantiles $\tau_{(k_j)}$s of most interest. 
\end{remark}

\begin{remark}\label{rmk:BB_comb}
    Analogously, we can also construct confidence intervals for $\tau_{(k_j)}$s utilizing effect quantiles of control units; this can be achieved by switching treatment labels and changing outcome signs. 
    In practice, we can combine intervals obtained from effect quantiles of both treated and control units using Bonferroni's method.
\end{remark}

{\rev 
\subsection{A comparison between the two improved methods}\label{sec:comp}

In Sections \ref{sec:effect_treated} and \ref{sec:berger} we presented two methods to improve the approach in \citet{CDLM21quantile}, which can be overly conservative by assuming units with large effects are all in treated (or control) group. 
Below we make a connection between the two improved methods and compare their performance. 
As discussed below, both of them rely on the key result in Theorem \ref{thm:treated}, which reinterprets the confidence intervals in \citet{CDLM21quantile} for the top $n_\treat$ quantiles as the prediction intervals for the effect quantiles among only treated units.

Specifically, the first method in Section \ref{sec:effect_treated} constructs prediction intervals for treatment effects among treated and control units, separately, and then combine them to infer effects among all units. 
On the contrary, the second method in Section \ref{sec:berger}  essentially uses only treatment effects among treated (or control) units to infer effects among all units, utilizing the fact that the set of treated units is a simple random sample of all units. 
In particular, the confidence interval for a single effect quantile in Theorem \ref{thm:BB_simulCI} can be equivalently constructed in the following steps. 
First, we construct a lower ``confidence'' bound for the effect quantile $\tau_{(k)}$ of all units using the effect quantile $\tau_{\treat(k')}$ of treated units, 
whose uncoverage probability is
at most $\Pr({\rev \hyper(k)} > n_{\treat} - k')$ with ${\rev \hyper(k)} \sim \text{HG}(n, n-k, n_{\treat})$ \citep{Sedransk}; 
note that this is not a calculable confidence bound since $\tau_{\treat(k')}$ is generally unobserved. 
Second, we construct a prediction interval for the effect quantile $\tau_{\treat(k')}$ of treated units by Theorem \ref{thm:treated}. 
Analogously, the simultaneous intervals in Theorem \ref{thm:BB_simulCI} can also be constructed through the following two steps: 
construct first simultaneous ``confidence'' bounds for effect quantiles $\tau_{(k_j)}$s using effect quantiles $\tau_{\treat(k_j')}$s among treated units,  whose simultaneous coverage probability is at least $1-\Delta_{\textup{H}}(k_{1:J}'; n, n_{\treat}, k_{1:J})$ \citep{chen2023role}, 
and construct then simultaneous prediction intervals for effect quantiles among treated units by Theorem \ref{thm:treated}.

The discussion above also helps illustrate the advantage of our methods over \citet{CDLM21quantile}. 
Specifically, \citet{CDLM21quantile} essentially supplement the prediction intervals for treated units with uninformative $(-\infty, \infty)$ intervals for control units to obtain confidence intervals for all units.
Instead, our method in Section \ref{sec:effect_treated} supplements the prediction intervals for treated units with prediction intervals for control units to obtain confidence intervals for all units, 
and our method in Section \ref{sec:berger} uses the prediction intervals for treated units to infer effect quantiles among all units, using the property that treated units constitute a simple random sample of all units. 
These methods, although requiring some adjustment in the confidence level, can typically provide substantial gain in narrowing confidence intervals.

The relative performance between the two improved methods in Sections \ref{sec:effect_treated} and \ref{sec:berger} then relies on the informativeness of the prediction intervals for treated and control units, which depends crucially on the tail behavior of the two potential outcome distributions. In particular, the relative heaviness of the right (or left) tails of the two potential outcome distributions is more critical when inferring treatment effects among treated (or control) 
units. 
See the next subsection for a numerical illustration.

\begin{remark}\label{rmk:further_combine_two_improv}
    Both methods in Sections \ref{sec:effect_treated} and \ref{sec:berger} extend the inference from the treated group to all units in distinct ways:
    the former leverages inference for control units, 
    while the latter exploits the fact that the treated group constitutes a simple random sample from all units.
    Consequently, the two approaches are not readily integrable to further enhance inference, aside from a straightforward combination using Bonferroni's method.
    In the context of the CRE and the later stratified experiment, the approach in Section \ref{sec:effect_treated} has some desirable properties as discussed in Sections \ref{sec:num_illu_improv}, \ref{sec:scre} and \ref{sec:simu_sugg}. 
    However, in sampling-based randomized experiments studied in Section \ref{sec:samp_based}, the ideas in Section \ref{sec:berger} become essential since we cannot directly infer treatment effects among unsampled units. 
\end{remark}

\subsection{Numerical illustration}\label{sec:num_illu_improv}

To illustrate the improvement, we consider the education experiment in \citet{CDLM21quantile} with $n = 233$ units, among which $n_\treat = 164$ and $n_\control = 69$ units receive treatment and control, respectively; see Section \ref{sec:teacher}. 
We focus on constructing $90\%$ lower confidence bound for the third quartile of the individual effects, 
$\tau_{(175)}$, using the Stephenson rank sum statistic with $s=6$, which has the form in \eqref{eq:rank_score} with $\phi(r) = \binom{r-1}{s-1}$ if $r\ge s$ and $0$ otherwise. 
Depending on whether we perform treatment label switching, 
a $90\%$ lower confidence bound for $\tau_{(175)}$ is either $6.67$ or $-\infty$. 
From Theorem \ref{thm:treated}, these are equivalently $90\%$ lower prediction bounds for $\tau_{\treat(106)}$ and $\tau_{\control(11)}$, respectively, 
and the validity of them as a confidence bound for $\tau_{(175)}$ also follows from the fact that $\tau_{(175)} \ge \tau_{\treat(106)}$ and $\tau_{(175)} \ge \tau_{\control(11)}$\footnote{\rev We can prove these two inequalities by contradiction. For example, if $\tau_{(175)} < \tau_{\treat(106)}$, then at most $105$ treated units have effects bounded by $\tau_{(175)}$, and consequently at most $105+n_\control = 105+69 = 174$ units in total have effects bounded by $\tau_{(175)}$, leading to a contradiction.}. 

We consider first the improvement from Section \ref{sec:effect_treated}. 
The key is to note that, in addition to $\tau_{(175)} \ge \tau_{\treat(106)}$ and $\tau_{(175)} \ge \tau_{\control(11)}$, 
we also have
$\tau_{(175)} \ge \min\{ \tau_{\treat(j)}, \tau_{\control(176-j)} \}
$ for $107 \le j \le 164$\footnote{\rev We can also prove this analogously by contradiction. If $\tau_{(175)}<\min\{ \tau_{\treat(j)}, \tau_{\control(176-j)}\}$, then at most $j-1$ treated units have effects bounded by $\tau_{(175)}$, at most $175-j$ control units have effects bounded by $\tau_{(175)}$, and consequently at most $(j-1)+(175-j) = 174$ units in total have effects bounded by by $\tau_{(175)}$, leading to a contradiction.}.
Thus, for any $107 \le j \le 164$, by Bonferroni's method, the minimum of $95\%$ lower prediction bounds for $\tau_{\treat(j)}$ and $\tau_{\control(176-j)}$ from Theorem \ref{thm:treated} is also a valid $90\%$ lower confidence bound for $\tau_{(175)}$. 
Furthermore, because the prediction intervals for effect quantiles among both treated and control groups are simultaneously valid, we can search over  $107 \le j \le 164$ to maximize such a confidence bound; in particular, the maximum is $10.00$ and is achieved at $120\le j \le 125$. 
For example, both $\tau_{\treat(120)}$ and $\tau_{\control(56)}$ have $95\%$ lower prediction bounds equal to $10.00$, indicating that $10.00$ is also a $90\%$ lower confidence bound for $\tau_{(175)}$.  

We consider then the improvement from Section \ref{sec:berger}. 
As discussed in Section \ref{sec:comp}, in the first step, we construct a lower ``confidence'' bound for $\tau_{(175)}$ using an effect quantile $\tau_{\treat(k')}$ of treated units, where we choose $k'$ such that the uncoverage probability is bounded by but as close as possible to $\tfrac{1}{2} \cdot 10\% = 5\%$. 
In this context, $k' = 118$, with a corresponding uncoverage probability of $2.7\%$. 
In the second step, we then construct a $1-(10-2.7)\% = 92.7\%$ lower prediction bound for $\tau_{\treat(118)}$, which is $10.00$. These two steps then indicate that $10.00$ is a $90\%$  lower confidence bound for $\tau_{(175)}$.  
We can perform similar analysis using effects among the control group, which leads to a $90\%$ lower confidence bound $6.67$ for $\tau_{(175)}$. 

Section \ref{sec:teacher} shows the confidence bounds for all quantiles using the above methods. 
The confidence bounds from the method in Section \ref{sec:effect_treated} are more informative. 
This is because, based on the observed data, the relative heaviness of the right tails of the distributions of the two potential outcomes are similar to that of the left tails, and the inference for effects among treated and control groups is approximately equally informative; see the  discussion near the end of Section \ref{sec:comp}. 
Nevertheless, Section \ref{sec:berger} can be more powerful in some contexts. 
For example, if we rescale the outcome for treated units, maintaining its mean while tripling its standard deviation, 
then a $90\%$ lower confidence bound for $n(0)$, the number of units with positive effects, 
from the methods in Sections \ref{sec:effect_treated} and \ref{sec:berger} are, respectively, $54$
and $70$. 
Specifically, we apply the method in Section \ref{sec:berger} to use only treatment effects among treated units to infer effects among all units. This is more advantageous here, because none of the lower prediction bounds for effect quantiles among control units are positive and the method in Section \ref{sec:effect_treated} does not provide any improvement when inferring $n(0)$; see Section \ref{sec:simu_sugg} and the supplementary material for a more detailed simulation study.

}

\secadj
\section{Stratified randomized experiments}\label{sec:str}
Section \ref{sec:cre} focuses on the CRE where treatment assignments are exchangeable across all units. 
\citet{SL22quantile} extended \citet{CDLM21quantile}'s inference on effect quantiles to stratified randomized experiments, where treatment assignments are exchangeable only within each stratum. 
However, their approach also suffers from the same critique as in Section \ref{sec:review_drawback}. 
Below we will use ideas from Section \ref{sec:cre} to improve \citet{SL22quantile}'s approach.

\subsection[]{Review of \citet{SL22quantile}'s approach and its drawbacks}\label{sec:review_SL}

Consider a stratified completely randomized experiment (SCRE), where all the $n$ experimental units are divided into $S$ strata and units within each stratum are completely randomized into treatment and control. 
To facilitate discussion, 
we will index each unit by double indices $si$ with $1\le s\le S$ and $1\le i \le n_s$, 
where $n_s$ denotes the number of units within each stratum $s$ and 
$\sum_{s=1}^S n_s = n$. 
We adopt the notation from Section \ref{sec:po_assign}, including the potential outcomes, individual effects, observed outcomes, treatment assignments and their vector representations, except that they are now being double indexed.   
For each stratum $s$, we use $n_{s\treat}$ and $n_{s\control} \equiv n_s - n_{s\treat}$ 
to denote the numbers of treated and control units, respectively. 

\citet{SL22quantile} considered again the null hypothesis $H_{k, c}^n$ in \eqref{eq:Hnkc}, 
and proposed analogously to use the supremum of the Fisher randomization $p$-value over all possible sharp null hypotheses satisfying $H_{k, c}^n$. 
To make it computationally feasible, 
\citet{SL22quantile} utilized a stratified rank score statistic of the following form: 
\begin{align}\label{eq:strat_rank_score}
    \tilde{t}(\bs{z}, \bs{y}) = \sum_{s=1}^S t_s(\bs{z}_s, \bs{y}_s) 
    = 
    \sum_{s=1}^S \sum_{i=1}^{n_s} z_{si} \phi_s(\rank_i(\bs{y}_s)), 
\end{align}
where $\bs{z}_s$ and $\bs{y}_s$ denote the treatment assignment and outcome vectors corresponding to stratum $s$, 
and $t_s(\cdot, \cdot)$ is a rank score statistic defined as in \eqref{eq:rank_score}, i.e., $\phi_s$ is an increasing function for all $s$ and ties in ranking will be broken by randomly shuffled ordering of units. 
The resulting $p$-value from \citet{SL22quantile} has the following form: 
\begin{align}\label{eq:p_SL}
    \tilde{p}^n_{k,c} = 
    \tilde{G}\Big( \inf_{\bs{\delta} \in \mathcal{H}_{k,c}^n} \tilde{t}(\bs{Z}, \bs{Y} - \bs{Z} \circ \bs{\delta} ) \Big), 
    \quad \textup{where } \tilde{G}(x) = \Pr\{ \tilde{t}(\bs{Z}, \bs{y}) \ge x \} \textup{ for any } \bs{y}\in \mathbb{R}^n. 
\end{align}
In \eqref{eq:p_SL}, 
{\rev $\tilde{G}(\cdot)$ does not vary with $\bs{y}$ due to the distribution free property of the rank-based statistic in \eqref{eq:strat_rank_score}, i.e., $\tilde{t}(\bs{Z}, \bs{y})\sim \tilde{t}(\bs{Z}, \bs{y}')$ under the SCRE for any $\bs{y}, \bs{y}'\in \mathbb{R}^n$,} 
and $\mathcal{H}_{k,c}^n$ is defined as in Section \ref{sec:review_drawback}.
\citet{SL22quantile} showed that the optimization for $\tilde{p}^n_{k,c}$ can be transformed into instances of the multiple-choice knapsack problem, and can 
be efficiently solved either exactly or conservatively. 
However, the $p$-value in \eqref{eq:p_SL} can be overly conservative due to the same reason as discussed in Section \ref{sec:review_drawback}.  
That is, 
it 
considers the worst-case scenario where units with large effects are all assigned to 
treatment, 
which is unlikely 
due to randomization. 

\subsection{Stratified completely randomized experiments}\label{sec:scre}

Below we improve the approach in \citet{SL22quantile} utilizing ideas from Section \ref{sec:effect_treated}. 
Similar to \citet{CDLM21quantile}'s approach for the CRE discussed in Section \ref{sec:review_drawback}, 
the $p$-value $\tilde{p}^n_{k,c}$ in \eqref{eq:p_SL} used by \citet{SL22quantile} is based on the worst-case consideration of the randomization test using the test statistic $\tilde{t}(\bs{Z}, \bs{Y}(0))$ and thus 
depends essentially on the treatment effects for treated units.
Therefore, intuitively, we can also view the $p$-value in \eqref{eq:p_SL} as a $p$-value for testing the null hypothesis that the number of treated units with effects greater than $c$ is bounded by $n-k$; 
again, this is not a traditional null hypothesis since it can be random due to the randomization of treatment assignments. 
Define the null hypothesis $H_{k,c}^{n, \treat}$ and the corresponding set  $\mathcal{H}_{k,c}^{n, \treat}\in \mathbb{R}^n$ the same as in Section \ref{sec:effect_treated}.
The theorem below follows by the same logic as Theorem \ref{thm:treated}. 
For conciseness, we present its short version below and relegate the details to the supplementary material.

\begin{theorem}\label{thm:treated_scre}
    Consider the SCRE and any stratified rank score statistic $\tilde{t}(\cdot, \cdot)$ in \eqref{eq:strat_rank_score}. 
    \begin{enumerate}[label=(\roman*), topsep=0ex,itemsep=-1ex,partopsep=1ex,parsep=1ex]
        \item[(i)] For any $0\le k \le n_\treat \equiv \sum_{s=1}^S n_{s\treat}$ and any $c\in \mathbb{R}$, 
        \begin{align*}
            \tilde{p}_{k,c}^{n, \treat} 
            \equiv 
            \tilde{G}\Big( \inf_{\bs{\delta} \in \mathcal{H}_{k,c}^{n, \treat}} \tilde{t}(\bs{Z}, \bs{Y} - \bs{Z} \circ \bs{\delta} ) \Big)
            = 
            \tilde{G}\Big( \inf_{\bs{\delta} \in \mathcal{H}_{n_{\control}+k,c}^{n}} \tilde{t}(\bs{Z}, \bs{Y} - \bs{Z} \circ \bs{\delta} ) \Big)
            = \tilde{p}_{n_{\control}+k,c}^{n} 
        \end{align*}
    is a valid $p$-value for testing $H_{k,c}^{n, \treat}$ in \eqref{eq:Hn1kc}, 
    where $\tilde{p}_{n_{\control}+k,c}^{n}$ is defined as in \eqref{eq:p_SL}.  
    \item[(ii)] (b) and (c) in 
    Theorem \ref{thm:treated} hold with $p_{k,c}^{n, \treat}$ replaced by $\tilde{p}_{k,c}^{n, \treat}$. 
    \end{enumerate}
\end{theorem}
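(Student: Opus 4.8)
The plan is to transcribe the proof of Theorem \ref{thm:treated} essentially verbatim, replacing the rank score statistic $t$, its reference distribution function $G$, and the $p$-values $p^n_{\cdot,\cdot}$ and $p^{n,\treat}_{\cdot,\cdot}$ by the stratified counterparts $\tilde t$, $\tilde G$, $\tilde p^n_{\cdot,\cdot}$ and $\tilde p^{n,\treat}_{\cdot,\cdot}$ from \eqref{eq:strat_rank_score}--\eqref{eq:p_SL}, and invoking the distribution-freeness of $\tilde t$ under the SCRE (recorded just after \eqref{eq:p_SL}) wherever the CRE argument uses that of $t$ under the CRE. The only genuinely new ingredient is the first displayed identity in part (i); everything downstream is bookkeeping.

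For part (i), the equality $\tilde p^{n,\treat}_{k,c}=\tilde p^n_{n_\control+k,c}$ amounts to showing that the two infima $\inf_{\bs\delta\in\mathcal H^{n,\treat}_{k,c}}\tilde t(\bs Z,\bs Y-\bs Z\circ\bs\delta)$ and $\inf_{\bs\delta\in\mathcal H^{n}_{n_\control+k,c}}\tilde t(\bs Z,\bs Y-\bs Z\circ\bs\delta)$ coincide, for every realization of $\bs Z$. I would argue this from two observations. First, since $\bs Z\circ\bs\delta$ annihilates the control coordinates of $\bs\delta$, the value $\tilde t(\bs Z,\bs Y-\bs Z\circ\bs\delta)=\sum_{s=1}^S t_s(\bs Z_s,\bs Y_s-\bs Z_s\circ\bs\delta_s)$ does not depend on $\delta_{si}$ for any control unit $si$; hence both infima are effectively over configurations of the treated units' hypothesized effects alone. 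Second, $\mathcal H^{n}_{n_\control+k,c}$ permits at most $n-(n_\control+k)=n_\treat-k\le n_\treat$ coordinates to exceed $c$, and because control coordinates do not affect $\tilde t$, setting all control coordinates $\le c$ is never disadvantageous and frees the whole budget of ``large'' hypothesized effects for treated units, spread across strata in whatever way is optimal; this makes the feasible set of treated-unit configurations for the second infimum identical to that for the first, namely ``at most $n_\treat-k$ treated units with hypothesized effect $>c$''. Hence the two infima are equal, and the second equality in (i) is just the definition of $\tilde p^{n,\treat}_{k,c}$. Note that, unlike in \citet{CDLM21quantile}, we need not exhibit the explicit minimizer — which in the SCRE would require solving the multiple-choice knapsack of \citet{SL22quantile} — since the identity of the feasible sets already suffices.

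Validity of $\tilde p^{n,\treat}_{k,c}$ for the random hypothesis $H^{n,\treat}_{k,c}$ then follows exactly as in the CRE: on the event $\{H^{n,\treat}_{k,c}\text{ holds}\}=\{n_\treat(c)\le n_\treat-k\}$ the true effect vector $\bs\tau$ lies in $\mathcal H^{n,\treat}_{k,c}$, so the infimum defining $\tilde p^{n,\treat}_{k,c}$ is at most $\tilde t(\bs Z,\bs Y-\bs Z\circ\bs\tau)=\tilde t(\bs Z,\bs Y(0))$, whence $\tilde p^{n,\treat}_{k,c}\ge\tilde G(\tilde t(\bs Z,\bs Y(0)))$ on that event by monotonicity of $\tilde G$; since $\tilde G(\tilde t(\bs Z,\bs Y(0)))$ is the ordinary Fisher randomization $p$-value under the SCRE and hence stochastically at least $\textup{Unif}(0,1)$, we conclude $\Pr(\tilde p^{n,\treat}_{k,c}\le\alpha,\ H^{n,\treat}_{k,c}\text{ holds})\le\Pr(\tilde G(\tilde t(\bs Z,\bs Y(0)))\le\alpha)\le\alpha$. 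For part (ii), the monotonicity of $\tilde p^{n,\treat}_{k,c}$ (nondecreasing in $c$, nonincreasing in $k$) — read off from the worst-case configuration ``$n_\treat-k$ treated units get hypothesized effect $\infty$ and the remaining treated units get $c$'' exactly as in the CRE — yields the one-sided interval forms in parts (b)--(c) of Theorem \ref{thm:treated}; the coverage statements hold because $\bs\tau\in\mathcal H^{n,\treat}_{k,\tau_{\treat(k)}}$ and $\bs\tau\in\mathcal H^{n,\treat}_{n_\treat-n_\treat(c),c}$ always, so each relevant $\tilde p^{n,\treat}$ is bounded below by $\tilde G(\tilde t(\bs Z,\bs Y(0)))$; and the simultaneity in part (d) follows because $\min_{1\le k\le n_\treat}\tilde p^{n,\treat}_{k,\tau_{\treat(k)}}\ge\tilde G(\tilde t(\bs Z,\bs Y(0)))$ (each term dominates the single randomization $p$-value), together with the deterministic quantile-versus-distribution-function duality that identifies $\cap_{k}\{\tau_{\treat(k)}\in\mathcal I_{\treat(k)}^\alpha\}$ with $\cap_{c\in\mathbb R}\{n_\treat(c)\in\mathcal S_\treat^\alpha(c)\}$, where now the sets $\mathcal I_{\treat(k)}^\alpha$ and $\mathcal S_\treat^\alpha(c)$ are built from $\tilde p^{n,\treat}_{\cdot,\cdot}$.

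I expect the only real obstacle to be the infimum identity in (i), and even there the difficulty is conceptual rather than computational: one must notice that although the explicit worst-case configuration in the SCRE is more intricate than in the CRE — a cross-stratum knapsack rather than ``pick the globally largest observed outcomes'' — the equality nevertheless goes through because the argument uses only that $\tilde t$ is blind to hypothesized control-unit effects and that the budget $n_\treat-k$ does not exceed $n_\treat$. Everything else is a line-by-line reuse of the CRE proof with $(t,G,p)$ replaced by $(\tilde t,\tilde G,\tilde p)$, which is why it is safely relegated to the supplement.
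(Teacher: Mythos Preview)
Your proposal is correct and follows essentially the same approach as the paper, which also proves the result by transcribing the proof of Theorem~\ref{thm:treated} with $(\tilde t,\tilde G,\tilde p)$ in place of $(t,G,p)$, citing the monotonicity of $\tilde p^n_{k,c}$ from \citet[Theorem~3]{SL22quantile} and the effect-increasing property of the stratified rank score statistic. Your argument for the infimum identity in (i)---via the observation that $\bs Z\circ\bs\delta$ annihilates control coordinates, so one may freely set them to $c$ and land in $\mathcal H^n_{n_\control+k,c}$ without changing $\tilde t$---is in fact slightly more elementary than the paper's route, which (following the proof of Theorem~\ref{thm:treated}(a)) constructs a dominating $\bs\theta\in\mathcal H^n_{n_\control+k,c}$ and invokes the effect-increasing property; your version sidesteps that property entirely for this step.
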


The discussions in Section \ref{sec:effect_treated} for Theorem \ref{thm:treated} hold analogously for Theorem \ref{thm:treated_scre}.
First, the prediction intervals for effect quantiles of treated units are the same as the confidence intervals for the top $n_{\treat}$ effect quantiles of all units under the original \citet{SL22quantile}'s approach. 
This new interpretation can thus make the inference results more informative. 
Second, by the same logic as Theorem \ref{thm:treated_scre}, 
we can also obtain simultaneous prediction intervals for effect quantiles of control units, simply by switching treatment labels and changing outcome signs. 
Finally, we can combine the prediction intervals for treated and control units to get simultaneous confidence intervals for effect quantiles of all units as in Theorem \ref{thm:comb_all}.

\begin{remark}
{\rev 
In principle, we can also extend ideas in Section \ref{sec:berger} to stratified experiments, noting that the treated units are from a stratified sampling of all units. However, this can be challenging due to 
the unknown treatment effect heterogeneity across strata \citep[see, e.g.,][]{Sedransk}.
Therefore, we focus on utilizing ideas from Section \ref{sec:effect_treated} here. 
}    
\end{remark}

\subsection{Matched observational studies}

We then extend the discussion to matched observational studies, 
which can be viewed as stratified experiments but with possibly unknown and nonexchangeable treatment assignments within each stratum due to unmeasured confounding. 
We adopt the notation from Section \ref{sec:scre}, where each stratum corresponds to a matched set. 
For descriptive simplicity, we assume there is only one treated unit per matched set, i.e., $n_{s\treat} = 1$ for all $1\le s\le S$.   
The results can be extended to matched studies with one treated or control unit per matched set; see, e.g., \citet[][pp. 161--162]{Rosenbaum02a} and \citet[][Remark 8]{SL22quantile}. 

Following \citet{Rosenbaum02a}, we assume that the treatment assignment mechanism has the following form for some unknown $\bs{u} = (u_{11}, u_{12}, \ldots, u_{Sn_S}) \in [0,1]^n$ and $\Gamma\ge 1$: 
\begin{align}\label{eq:sen_assign}
    \Pr_{\bs{u}, \Gamma}(\bs{Z} = \bs{z})
    & = \prod_{s=1}^S \frac{\exp(\gamma\sum_{i=1}^{n_{s}} z_{si} u_{si} )}{
    \sum_{i=1}^{n_{s}} \exp(\gamma u_{si} )
    }
    \cdot 
    \I(\bs{z} \in \mathcal{Z}), 
\end{align}
where $\gamma = \log(\Gamma)$ and $\mathcal{Z}$ denotes the set of all possible assignments such that there is exactly one treated unit per matched set. 
Under \eqref{eq:sen_assign}, within any matched set, a unit is at most $\Gamma$ times more likely to receive treatment than any other unit. 
Intuitively, $\Gamma$ measures the strength of the unmeasured confounding, 
and $u_{si}$s represent the hidden confounding associated with all the units. 
The sensitivity analysis investigates how the inference results for treatment effects vary with $\Gamma$. 
We refer readers to 
\citet{Rosenbaum1987}, \citet{GKR2000}, \citet{Rosenbaum02a}, \citet{WL23} and \citet{li2024sensitivity}
for more details about the formulation and interpretation of this sensitivity analysis framework. 

\citet{SL22quantile} considered again the null hypothesis in \eqref{eq:Hnkc} regarding treatment effect quantiles of all units in the matched study. 
To ensure the validity of a test for 
$H_{k,c}^n$, they proposed to take supremum of the Fisher randomization $p$-value over all possible treatment effects satisfying $H_{k,c}^n$ and all possible configurations of hidden confounders $\bs{u}\in [0,1]^n$. 
To overcome the computational difficulty, they further utilized the stratified rank score statistic in \eqref{eq:strat_rank_score}, 
and demonstrated that the optimization for the resulting valid $p$-value can be separated into two parts: one over all possible configurations of hidden confounders and the other over all possible individual treatment effects. 
Specifically, for any stratified rank score statistic, 
its worst-case distribution over all possible hidden confounders enjoys a distribution free property, i.e.,  
$\tilde{G}_{\Gamma} (c) = \max_{\bs{u}\in [0,1]^n}\Pr_{\bs{u}, \Gamma}\{ \tilde{t}(\bs{Z}, \bs{y}) \ge c
\}$ 
does not vary with 
$\bs{y} \in \mathbb{R}^n$. 
The resulting valid $p$-value in \citet{SL22quantile} for testing $H_{k,c}^n$ 
in \eqref{eq:Hnkc} 
under a given bound $\Gamma$ on unmeasured confounding strength 
has the following form:
\begin{align}\label{eq:p_SL_sen}
    \tilde{p}^n_{k,c}(\Gamma) = 
    \tilde{G}_{\Gamma}\Big( \inf_{\bs{\delta} \in \mathcal{H}_{k,c}^n} \tilde{t}(\bs{Z}, \bs{Y} - \bs{Z} \circ \bs{\delta} ) \Big).
\end{align}
In \eqref{eq:p_SL_sen}, 
the minimization of the test statistic can be done in the same way as in the SCRE, 
and the worst-case distribution $\tilde{G}_{\Gamma}(\cdot)$ over unmeasured confounding can be approximated through either the Monte Carlo method or a large-sample Gaussian approximation \citep{Rosenbaum02a}.
Note that when $\Gamma=1$, $\tilde{G}_{\Gamma}(\cdot)$ and $\tilde{p}^n_{k,c}(\Gamma)$ reduce to $\tilde{G}(\cdot)$ and $\tilde{p}^n_{k,c}$ in \eqref{eq:p_SL}. 

The $p$-value in \eqref{eq:p_SL_sen} can be overly conservative due to the same reason as discussed in Sections \ref{sec:review_drawback} and \ref{sec:review_SL}. 
By the same logic as Sections \ref{sec:effect_treated} and \ref{sec:scre},
we will improve \citet{SL22quantile}'s sensitivity analysis by reinterpreting the $p$-value in \eqref{eq:p_SL_sen} as a valid $p$-value for testing effect quantiles among treated units. 
For conciseness, we summarize the results in the following theorem, 
with details relegated to the supplementary material.  

\begin{theorem}\label{thm:treated_sen}
    Consider a matched observational study and any stratified rank score statistic $\tilde{t}(\cdot, \cdot)$ in \eqref{eq:strat_rank_score}. Suppose that the unmeasured confounding strength is bounded by $\Gamma\ge 1$, i.e., the treatment assignment follows \eqref{eq:sen_assign} for some unknown $\bs{u}\in [0,1]^n$. 
    \begin{enumerate}[label=(\roman*), topsep=0ex,itemsep=-1ex,partopsep=1ex,parsep=1ex]
        \item[(i)] For any $0\le k \le n_\treat \equiv \sum_{s=1}^S n_{s\treat}$ and any $c\in \mathbb{R}$, 
        \begin{align*}
            \tilde{p}_{k,c}^{n, \treat} (\Gamma)
            \equiv 
            \tilde{G}_{\Gamma}\Big( \inf_{\bs{\delta} \in \mathcal{H}_{k,c}^{n, \treat}} \tilde{t}(\bs{Z}, \bs{Y} - \bs{Z} \circ \bs{\delta} ) \Big)
            = 
            \tilde{G}_{\Gamma}\Big( \inf_{\bs{\delta} \in \mathcal{H}_{n_{\control}+k,c}^{n}} \tilde{t}(\bs{Z}, \bs{Y} - \bs{Z} \circ \bs{\delta} ) \Big)
            = \tilde{p}_{n_{\control}+k,c}^{n} (\Gamma)
        \end{align*}
    is a valid $p$-value for testing $H_{k,c}^{n, \treat}$ in \eqref{eq:Hn1kc}, 
    where $\tilde{p}_{n_{\control}+k,c}^{n} (\Gamma)$ is defined as in \eqref{eq:p_SL_sen}.
    \item[(ii)] 
    (b) and (c) in 
    Theorem \ref{thm:treated} hold with $p_{k,c}^{n, \treat}$ replaced by $\tilde{p}_{k,c}^{n, \treat}(\Gamma)$. 
    \end{enumerate}
\end{theorem}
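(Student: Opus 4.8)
The plan is to mirror the argument for Theorem~\ref{thm:treated_scre}, which in turn parallels Theorem~\ref{thm:treated}, but now tracking the extra supremum over hidden confounders $\bs{u}\in[0,1]^n$ that enters through $\tilde{G}_\Gamma$ in place of $\tilde{G}$. The two substantive claims are part~(i), the algebraic identity of the three $p$-value expressions together with validity for the random hypothesis $H^{n,\treat}_{k,c}$, and part~(ii), which simply asserts that the downstream prediction-interval statements of Theorem~\ref{thm:treated}(b)--(d) carry over verbatim once (i) is in hand; the latter is a formal consequence of (i) plus the monotonicity structure of the $p$-values in $c$ and $k$, exactly as in the CRE case, so the real work is in (i).

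First I would establish the middle equality in (i), namely $\inf_{\bs{\delta}\in\mathcal{H}^{n,\treat}_{k,c}} \tilde{t}(\bs{Z},\bs{Y}-\bs{Z}\circ\bs{\delta}) = \inf_{\bs{\delta}\in\mathcal{H}^{n}_{n_{\control}+k,c}} \tilde{t}(\bs{Z},\bs{Y}-\bs{Z}\circ\bs{\delta})$. This is a purely combinatorial fact about the feasible sets and is identical to the argument already used in Section~\ref{sec:effect_treated} and Theorem~\ref{thm:treated_scre}(i): the stratified rank score statistic $\tilde{t}(\bs{Z},\bs{Y}-\bs{Z}\circ\bs{\delta})$ depends on $\bs{\delta}$ only through the imputed control outcomes of the \emph{treated} units, so the constraint in $\mathcal{H}^n_{n_{\control}+k,c}$ that at most $n-(n_{\control}+k)=n_{\treat}-k$ units overall have effect exceeding $c$ is, from the optimizer's point of view, no more restrictive than the constraint in $\mathcal{H}^{n,\treat}_{k,c}$ that at most $n_{\treat}-k$ \emph{treated} units do — the minimizing configuration always places the large effects on treated units, so control units' effects can be taken $\le c$ for free. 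Crucially, this step does not involve $\Gamma$ at all; it is the same optimization as in the SCRE, which is why the minimization in \eqref{eq:p_SL_sen} ``can be done in the same way as in the SCRE.'' The right-hand equality $= \tilde{p}^n_{n_{\control}+k,c}(\Gamma)$ is then just the definition \eqref{eq:p_SL_sen}, since $\tilde{G}_\Gamma$ is applied to the same infimum.

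Next I would prove validity: $\Pr(\tilde{p}^{n,\treat}_{k,c}(\Gamma)\le\alpha \text{ and } H^{n,\treat}_{k,c} \text{ holds})\le\alpha$. Condition on the event $\{H^{n,\treat}_{k,c}\text{ holds}\}$, i.e., at most $n_{\treat}-k$ treated units have $\tau_i>c$. On this event, the true individual effect vector $\bs{\tau}$ lies in $\mathcal{H}^{n,\treat}_{k,c}$, hence $\inf_{\bs{\delta}\in\mathcal{H}^{n,\treat}_{k,c}} \tilde{t}(\bs{Z},\bs{Y}-\bs{Z}\circ\bs{\delta}) \le \tilde{t}(\bs{Z},\bs{Y}-\bs{Z}\circ\bs{\tau}) = \tilde{t}(\bs{Z},\bs{Y}(0))$, the statistic evaluated at the true control potential outcomes. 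Since $\tilde{G}_\Gamma$ is decreasing, $\tilde{p}^{n,\treat}_{k,c}(\Gamma)\ge \tilde{G}_\Gamma(\tilde{t}(\bs{Z},\bs{Y}(0)))$. Then I would invoke the worst-case distribution-free property already recorded in the excerpt: $\tilde{G}_\Gamma(x)=\max_{\bs{u}\in[0,1]^n}\Pr_{\bs{u},\Gamma}\{\tilde{t}(\bs{Z},\bs{y})\ge x\}$ does not depend on $\bs{y}$, and in particular under the \emph{true} confounding configuration $\bs{u}^\ast$ (which satisfies \eqref{eq:sen_assign} by hypothesis), $\Pr_{\bs{u}^\ast,\Gamma}\{\tilde{G}_\Gamma(\tilde{t}(\bs{Z},\bs{Y}(0)))\le\alpha\}\le\alpha$ because $\tilde{G}_\Gamma(\tilde{t}(\bs{Z},\bs{Y}(0)))$ is stochastically larger than $\mathrm{Unif}(0,1)$ — this is the standard ``$G$ of a statistic is a valid $p$-value'' fact, applied with $\tilde{G}_\Gamma$ dominating the true null distribution of $\tilde{t}(\bs{Z},\bs{Y}(0))$. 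Combining, $\Pr(\tilde{p}^{n,\treat}_{k,c}(\Gamma)\le\alpha,\, H^{n,\treat}_{k,c}) \le \Pr(\tilde{G}_\Gamma(\tilde{t}(\bs{Z},\bs{Y}(0)))\le\alpha) \le \alpha$, as desired. For part~(ii) I would then observe that the proofs of Theorem~\ref{thm:treated}(b)--(d) used only: (1) validity of $p^{n,\treat}_{k,c}$ for the random hypothesis, (2) monotonicity of the $p$-value in $c$ (giving the one-sided interval form) and in $k$ (giving the nesting of intervals and the equivalence of the two simultaneous coverage statements), and (3) the deterministic nesting $H^{n,\treat}_{k,c}\subseteq H^{n,\treat}_{k-1,c}$ and the identity $\{\tau_{\treat(k)}\le c\}=\{n_\treat(c)\le n_{\treat}-k\}$; all three hold verbatim with $\tilde{p}^{n,\treat}_{k,c}(\Gamma)$ in place of $p^{n,\treat}_{k,c}$, since monotonicity of $\tilde{p}^{n,\treat}_{k,c}(\Gamma)$ follows from monotonicity of the feasible sets $\mathcal{H}^{n,\treat}_{k,c}$ and of the infimum, and $\tilde{G}_\Gamma$ being decreasing.

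The main obstacle is conceptual rather than technical: making sure the worst-case supremum over $\bs{u}$ is taken in the right place so that validity holds \emph{simultaneously for the true (unknown) $\bs{u}^\ast$ and under the worst-case null distribution}. Concretely, one must be careful that $\tilde{G}_\Gamma$ is computed as the supremum over $\bs{u}$ of the tail probability (which the excerpt's distribution-free statement guarantees is independent of $\bs{y}$), so that $\tilde{G}_\Gamma(\tilde{t}(\bs{Z},\bs{Y}(0)))$ stochastically dominates a uniform under \emph{every} $\bs{u}$, including the true one — this is exactly where the sensitivity-analysis framework's worst-case construction does its job, and it is the only place the argument differs from the $\Gamma=1$ SCRE case of Theorem~\ref{thm:treated_scre}. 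Everything else — the combinatorial equality of infima, the interval shapes, the Bonferroni-free simultaneity across $k$ — is inherited unchanged from Theorems~\ref{thm:treated} and~\ref{thm:treated_scre}, which is why it is legitimate to relegate the full details to the supplement and present only this short version.
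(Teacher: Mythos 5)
Your proposal is correct and follows essentially the same route as the paper's proof: under $H^{n,\treat}_{k,c}$ one bounds the infimum by $\tilde{t}(\bs{Z},\bs{Y}(0))$, uses that $\tilde{G}_{\Gamma}$ is decreasing and dominates the tail probability $G_{\bs{u}_0,\Gamma}$ under the true (unknown) confounder, and then invokes the standard validity of the Fisher randomization test; the equality of infima and part (ii) are inherited from the arguments for Theorems \ref{thm:treated} and \ref{thm:treated_scre} exactly as you describe. No gaps.
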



Similar to Theorem \ref{thm:comb_all}, we can combine prediction intervals for treated units and control units to obtain confidence intervals for all units. 
Nevertheless, in a matched observational study, especially when we match each treated unit with one or multiple controls, 
we are often interested in treatment effects for treated units. 
In such cases, sensitivity analysis for effect quantiles among treated units as in Theorem \ref{thm:treated_sen} can be preferable to that for all units.  
Theorem \ref{thm:treated_sen} also extends \citet{Rosenbaum2002attribute} by allowing attributable effects defined in terms of treatment effect quantiles among treated units.

\secadj
\section{Sampling-based randomized experiments}\label{sec:samp_based}
We now study inference for distributions of individual treatment effects in sampling-based randomized experiments, 
where units are random samples from larger populations of interest. 
In particular, we assume that the study units are either simple random samples from a finite population or i.i.d.~samples from a superpopulation; 
the former is useful in design-based inference for multi-armed and survey experiments, 
while the latter often 
serves
as a convenient assumption for statistical inference. 
We focus on finite population setting in this section, and discuss briefly the superpopulation setting with details relegated to the supplementary material; the superpopulation setting can also be viewed as a limit of the finite population setting with the population size tending to infinity.

\subsection{Population distributions of individual treatment effects}

We 
consider the setting where 
the set of the $n$ study units is a simple random sample from a possibly larger finite population of size $N \ge n$.
For example, in a survey experiment, only a subset of units will be enrolled in an experiment due to, say, resource constraints; in a multi-arm experiment,
only a subset of units will receive the two treatments whose contrast is of interest.  
In such settings, we will use $\{( Y_{i}^\fp (1), Y_i^\fp(0) ): 1\le i\le N\}$ to denote the treatment and control potential outcomes for all $N$ units. 
The potential outcomes for the $n$ experimental units, 
$\{( Y_i(1), Y_i(0) ): 1\le i \le n\}$, 
are sampled randomly without replacement from the finite population $\{( Y_i^\fp(1), Y_i^\fp(0) ): 1\le i\le N\}$. 
Let $\tau_i^\fp \equiv Y_i^\fp(1) - Y_i^\fp(0)$, $1\le i \le N$, denote the individual treatment effect for unit $i$ in the finite population. 
We are interested in the finite population distribution of individual treatment effects with distribution function
$F_\fp(c) \equiv N^{-1} \sum_{i=1}^N \I(\tau_i^\fp \le c)$ for $c\in \mathbb{R}$ 
and quantile function $F_\fp^{-1}(\cdot)$. 


\subsection{
Inference for population distributions of individual treatment effects}\label{sec:fp}

Suppose we are interested in inferring treatment effect quantiles 
$F_\fp^{-1}(\beta_1), \ldots, F_\fp^{-1}(\beta_J)$, 
for some prespecified $0 \leq \beta_1 < \ldots \leq \beta_J < 1$. 
We utilize ideas in Section \ref{sec:berger} with their equivalent understanding in Section \ref{sec:comp} 
to construct simultaneous confidence bounds for $F_\fp^{-1}(\beta_j)$s. 
Specifically, we first construct simultaneous ``confidence'' bounds for the population effect quantiles using the sample effect quantiles among experimental units, and then construct simultaneous prediction bounds for these sample effect quantiles.  

\begin{theorem}\label{thm:samp_fp}
Consider an experiment in which the set of units is a simple random sample of size $n$ from a finite population of in total $N\ge n$ units, and suppose that the population effect quantiles $F_{\fp}^{-1}(\beta_j)$s are of interest, where $0\leq \beta_1 < \beta_2 < \ldots < \beta_J \leq 1$ and $J\ge 1$. 
For any $\alpha \in (0,1)$ and $0\le k_1' \le k_2' \le \ldots \le k_J' \le n$, 
define 
$\Delta_{\textup{H}}(k_{1:J}'; N, n, k_{1:J})$ the same as in \eqref{eq:delta_multiple} with $k_j = \lceil N \beta_j \rceil$ for all $j$, 
and let 
$\mathcal{I}^{\alpha'}_{(k_j')}$s be simultaneous $1-\alpha'$ one-sided prediction intervals of form $(c, \infty)$ or $[c, \infty)$ for sample effect quantiles $\tau_{(k_j')}$s,  
where 
$\alpha' = \alpha - \Delta_{\textup{H}}(k'_{1:J}; N, n, k_{1:J}) > 0$. 
Then
the $\mathcal{I}^{\alpha'}_{(k_j')}$s 
are simultaneous $1-\alpha$ confidence intervals for the quantiles $F_\fp^{-1}(\beta_j)$s, in the sense that 
$
    \Pr\{F_\fp^{-1}(\beta_j) \in  \mathcal{I}^{\alpha'}_{(k_j')} \textup{ for all } 1 \leq j \leq J \}\geq 1-\alpha. 
$    
\end{theorem}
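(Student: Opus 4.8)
The plan is to combine a sampling-level ``confidence bound'' step with the prediction-interval step from the finite-population theorems, using Bonferroni to pay for the two sources of error. First I would establish the sampling step: since the $n$ experimental units are a simple random sample without replacement from the $N$ population units, the positions of the population's top $N-k_j$ effects (the units with $\tau_i^{\fp} > F_{\fp}^{-1}(\beta_j)$, of which there are $N - k_j$ with $k_j = \lceil N\beta_j\rceil$) that land in the sample are jointly governed by a multivariate Hypergeometric law. Concretely, ordering the population effects and partitioning them into the blocks of sizes $k_1, k_2-k_1, \ldots, k_J - k_{J-1}, N-k_J$, the counts of sampled units falling in each block follow $\mathrm{MHG}(k_1, k_2-k_1, \ldots, N-k_J; n)$, exactly matching the distribution in \eqref{eq:delta_multiple} with $(N, n)$ in place of $(n, n_{\treat})$. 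The key deterministic observation is that if at most $n - k_j'$ of the sampled units have effects exceeding $F_{\fp}^{-1}(\beta_j)$ — equivalently $\sum_{i=j}^{J}\hyper_i \le n - k_j'$ in the notation of \eqref{eq:delta_multiple} — then the $k_j'$-th smallest sample effect $\tau_{(k_j')}$ is $\le F_{\fp}^{-1}(\beta_j)$, so $F_{\fp}^{-1}(\beta_j) \in [\tau_{(k_j')}, \infty)$. Hence the event that $[\tau_{(k_j')},\infty)$ simultaneously contains $F_{\fp}^{-1}(\beta_j)$ for all $j$ contains the event $\bigcap_{j=1}^J \{\sum_{i=j}^J \hyper_i \le n - k_j'\}$, whose probability is exactly $1 - \Delta_{\textup{H}}(k_{1:J}'; N, n, k_{1:J})$.

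Second I would invoke the prediction-interval step: by Theorem \ref{thm:comb_all} in the completely randomized case (or Theorem \ref{thm:treated_scre}(ii) in the stratified case), the intervals $\mathcal{I}^{\alpha'}_{(k_j')}$ are simultaneously valid $1-\alpha'$ one-sided prediction intervals for the \emph{sample} effect quantiles $\tau_{(k_j')}$, i.e. $\Pr(\tau_{(k_j')} \in \mathcal{I}^{\alpha'}_{(k_j')} \text{ for all } j) \ge 1 - \alpha'$. Since these intervals are of the form $[\,\cdot\,,\infty)$ or $(\,\cdot\,,\infty)$ and $F_{\fp}^{-1}(\beta_j) \ge \tau_{(k_j')}$ on the good sampling event, whenever $\tau_{(k_j')} \in \mathcal{I}^{\alpha'}_{(k_j')}$ it follows that $F_{\fp}^{-1}(\beta_j) \in \mathcal{I}^{\alpha'}_{(k_j')}$ as well. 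Therefore the event $\{F_{\fp}^{-1}(\beta_j) \in \mathcal{I}^{\alpha'}_{(k_j')} \text{ for all } j\}$ contains the intersection of the good sampling event and the good prediction event. A union bound then gives failure probability at most $\Delta_{\textup{H}}(k_{1:J}'; N, n, k_{1:J}) + \alpha' = \alpha$, which is the claim.

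One subtlety I would be careful about is the \emph{ordering} of the two randomization layers and the conditioning argument. The sampling event depends only on which units enter the experiment, while the prediction event (Theorem \ref{thm:comb_all}/\ref{thm:treated_scre}) is a statement about the treatment-assignment randomization conditional on the realized sample of $n$ units and their potential outcomes. So the clean way is to condition on the sample: given any realized set of $n$ units, Theorem \ref{thm:comb_all} gives $\Pr(\tau_{(k_j')} \in \mathcal{I}^{\alpha'}_{(k_j')} \,\forall j \mid \text{sample}) \ge 1-\alpha'$ regardless of the sample, hence this holds unconditionally and is independent-of, or at least can be union-bounded against, the sampling event. Then I marginalize over the sampling to handle $\Delta_{\textup{H}}$. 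I expect this bookkeeping — making sure the prediction guarantee holds uniformly over all possible realized samples so that it composes with the sampling event via a simple union bound — to be the main (though routine) obstacle; there is also a minor edge-case check that $k_J = \lceil N\beta_J\rceil$ may equal $N$ when $\beta_J = 1$, in which case the corresponding block $N - k_J$ is empty and $\hyper_J = 0$ trivially, consistent with $\tau_{(k_J')} = F_{\fp}^{-1}(1)$ being the maximum. I would also note that the monotonicity $k_1' \le \cdots \le k_J'$ is what makes the nested structure of the events $\{\sum_{i=j}^J \hyper_i \le n-k_j'\}$ align with the ordering of the quantiles, mirroring the role it plays in Theorem \ref{thm:BB_simulCI}.
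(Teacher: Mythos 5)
Your proposal is correct and follows essentially the same two-step argument as the paper: bound the probability that some sample quantile $\tau_{(k_j')}$ exceeds $F_{\fp}^{-1}(\beta_j)$ via the multivariate Hypergeometric count of sampled units among the top population ranks (giving $\Delta_{\textup{H}}$), then combine with the simultaneous $1-\alpha'$ prediction intervals for the $\tau_{(k_j')}$s via a union bound, using the one-sided form of the intervals. The only nuance is that your ``equivalently'' between the rank-based count $\sum_{i=j}^J \hyper_i$ and the number of sampled units with effects exceeding $F_{\fp}^{-1}(\beta_j)$ is, with ties, only an inequality (the paper handles this explicitly), but it points in the direction you actually use, so your containment and the resulting bound remain valid.
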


In Theorem \ref{thm:samp_fp}, the simultaneous prediction intervals for sample effect quantiles can be constructed using, for example, Theorem \ref{thm:comb_all} or \ref{thm:treated_scre}, depending on the treatment assignment mechanism, and 
the $k_j'$s can be chosen in the same way as that discussed in Remark \ref{rmk:choice_k_prime}. 
Similar to Remark \ref{rmk:bb_band}, we can use the simultaneous intervals in Theorem \ref{thm:samp_fp} to construct simultaneous confidence bands for the whole quantile function $F_\fp^{-1}(\cdot)$ and equivalently the distribution function $F_\fp (\cdot)$. 
Moreover, 
similar to the discussion after Theorem \ref{thm:sing_interval_BB}, 
we can use individually valid confidence intervals from Theorem \ref{thm:samp_fp} with $J=1$ to construct individually valid confidence intervals for the proportion of units with effects greater than any threshold.


{\rev 
Theorem \ref{thm:samp_fp} is in the same spirit as Theorem \ref{thm:BB_simulCI}.
Theorem \ref{thm:BB_simulCI} uses treatment effects among treated units to infer those among all experimental units, 
while Theorem \ref{thm:samp_fp} uses treatment effects among experimental units to infer those among the whole population. 
As discussed in Section \ref{sec:cre}, Theorem \ref{thm:BB_simulCI} utilizes the fact that the set of treated units is a simple random sample from all experiment units, which avoids the overly conservative worst-case consideration in \citet{CDLM21quantile} and thus improves their approach. 
It is worth mentioning that ideas in Section \ref{sec:effect_treated} are not directly applicable here, because we cannot directly construct prediction intervals for treatment effects among unsampled units in the population. 
However, the method in Section \ref{sec:effect_treated} is still useful for constructing prediction intervals for treatment effects among experimental units; 
in this sense, the ideas in Sections \ref{sec:effect_treated} and \ref{sec:berger} 
can be 
complementary in Theorem \ref{thm:samp_fp}.} 
Below we give additional remarks related to Theorem \ref{thm:samp_fp}.

\begin{remark}\label{rmk:super_limit}
    {\rev When experimental units are i.i.d.~samples from a superpopulation, the superpopulation quantiles of individual treatment effects can be inferred in a way similar to Theorem \ref{thm:samp_fp}.
    Specifically, it can be viewed as a limiting case of Theorem \ref{thm:samp_fp} when the size $N$ of the finite population goes to infinity, under which the  multivariate Hypergeometric distribution as in \eqref{eq:delta_multiple} reduces to a Multinomial distribution.} 
\end{remark}

\begin{remark}\label{rmk:treated_generalize}
    {\rev 
    We can also extend Theorem \ref{thm:samp_fp} to settings where the set of treated (or control) units is a simple random sample from the population of interest. 
    For example, under the same setting as in Theorem \ref{thm:samp_fp}, when the experiment is a CRE, 
    both the treated and control units are also from a simple random sampling of the target population. 
    In this case, 
    we can also use sample effect quantiles among treated (or control) units to construct confidence intervals for population effect quantiles, by replacing $n$ by $n_t$ (or $n_c$), and $\tau_{(k'_j)}$ by $\tau_{t(k'_j)}$ (or $\tau_{c(k'_j)}$), respectively, in Theorem \ref{thm:samp_fp}.
    This essentially generalizes Theorem \ref{thm:BB_simulCI}, which corresponds to the special case of $N=n$. 
    }
\end{remark}

\begin{remark}\label{rmk:match_treated_iid}
    In matched observational studies, it is often assumed that the treated units in the matched study are from i.i.d.~samples from a superpopulation of units under treatment; see, e.g., \citet{Rosenbaum1987}. 
    Note that, in general, all units in a matched study are not i.i.d.~samples from the superpopulation of treated units. 
    Specifically, even when matching is exact, the matched control units can still have different unmeasured confounding from the treated units. 
    Consequently, the population of all matched units will generally depend on the distribution of unmeasured confounding in treated and control population as well as the number of matched controls for the treated units. 
    Therefore, it is more desirable to infer treatment effects among the treated units using, for example, Theorem \ref{thm:treated_sen}, and then generalize those to the population of treated units; the latter step is discussed in the previous Remark \ref{rmk:treated_generalize}. 
    For clarity and conciseness, we present a corollary for inferring population distributions of individual treatment effects among treated units in the supplementary material.
\end{remark}




\secadj
\section{Illustration}\label{sec:illustration}

\subsection{Simulation and practical suggestions}\label{sec:simu_sugg}

For conciseness, we relegate simulation studies to the supplementary material and summarize the main findings here. 
First, confidence intervals in Theorem \ref{thm:sing_interval_BB} are robust to a wide range of the tuning parameter $\gamma$. 
Second, 
as discussed in Sections \ref{sec:comp} and \ref{sec:num_illu_improv}, 
both improved methods in Sections \ref{sec:effect_treated} and \ref
{sec:berger}
can be preferable depending on the true data generating process, 
in particular the tail behavior of the two potential outcomes.


Below we give some practical suggestions. 
First, without prior knowledge, we suggest using the combination of prediction intervals for treatment effects among treated and control units as in Theorem \ref{thm:comb_all}
to infer treatment effects among all experimental units, 
since it requires no tuning and provides simultaneous confidence intervals for all effect quantiles.
Second, when analyzing matched observational studies, 
we suggest using Theorem \ref{thm:treated_sen} to infer treatment effects among treated units, which are often the target of interest.  
Third, in sampling-based randomized experiments, we suggest choosing the $k_j'$s in almost the same way as that in Remark \ref{rmk:choice_k_prime} with $\gamma = 1/2$. 
Intuitively,
we evenly split the allowed error rate between the following two steps: 
(i) using sample effect quantiles among experimental units to construct $1-\alpha/2$ simultaneous ``confidence'' intervals for population effect quantiles and (ii) using, for example, Theorem \ref{thm:comb_all} to construct $1-\alpha/2$ simultaneous prediction intervals for the sample effect quantiles. 


\subsection{Evaluating the effectiveness of professional development}\label{sec:teacher}

To demonstrate the benefits from the proposed methods in Section \ref{sec:cre}, we re-analyze the education experiment in \cite{CDLM21quantile},
in which a group of fourth grade teachers were randomly assigned to treatment or control and the treated teachers would participate in professional development courses focusing on the teaching of electric circuits \citep{HSMHSD10}.
The outcome of interest is the gain score based on tests before and after the courses. 
We preprocess the data in the same way as \citet{CDLM21quantile} 
and analyze the data as a completely randomized experiment with $164$ treated teachers and $69$ control teachers. 
 
\begin{figure}[htbp]
    \centering
    \begin{minipage}{.33\textwidth}
        \centering
        \includegraphics[width=0.8\linewidth]{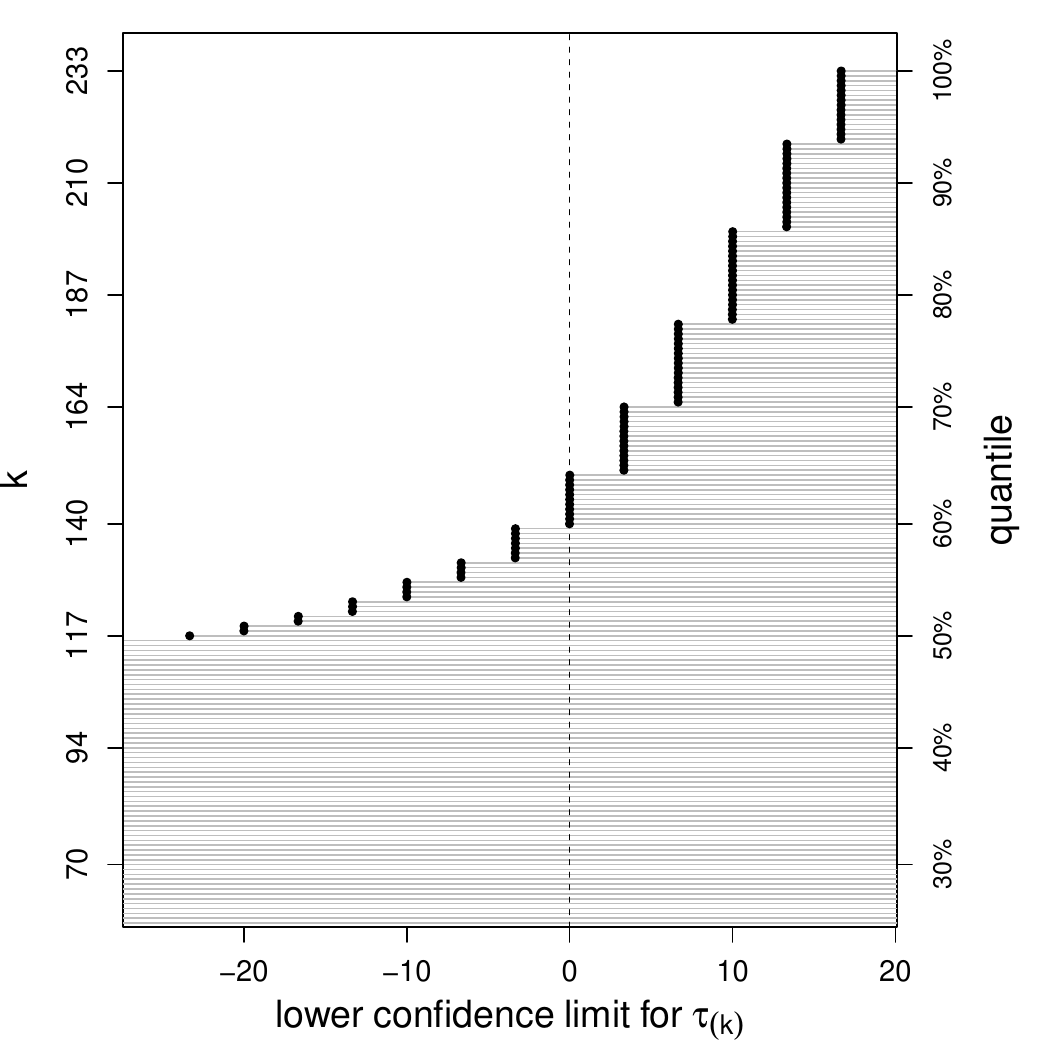}
        \subcaption{\citeauthor{CDLM21quantile}}
        \label{subfig:realdata_M0}
    \end{minipage}%
    \begin{minipage}{0.33\textwidth}
        \centering
        \includegraphics[width=0.8\linewidth]{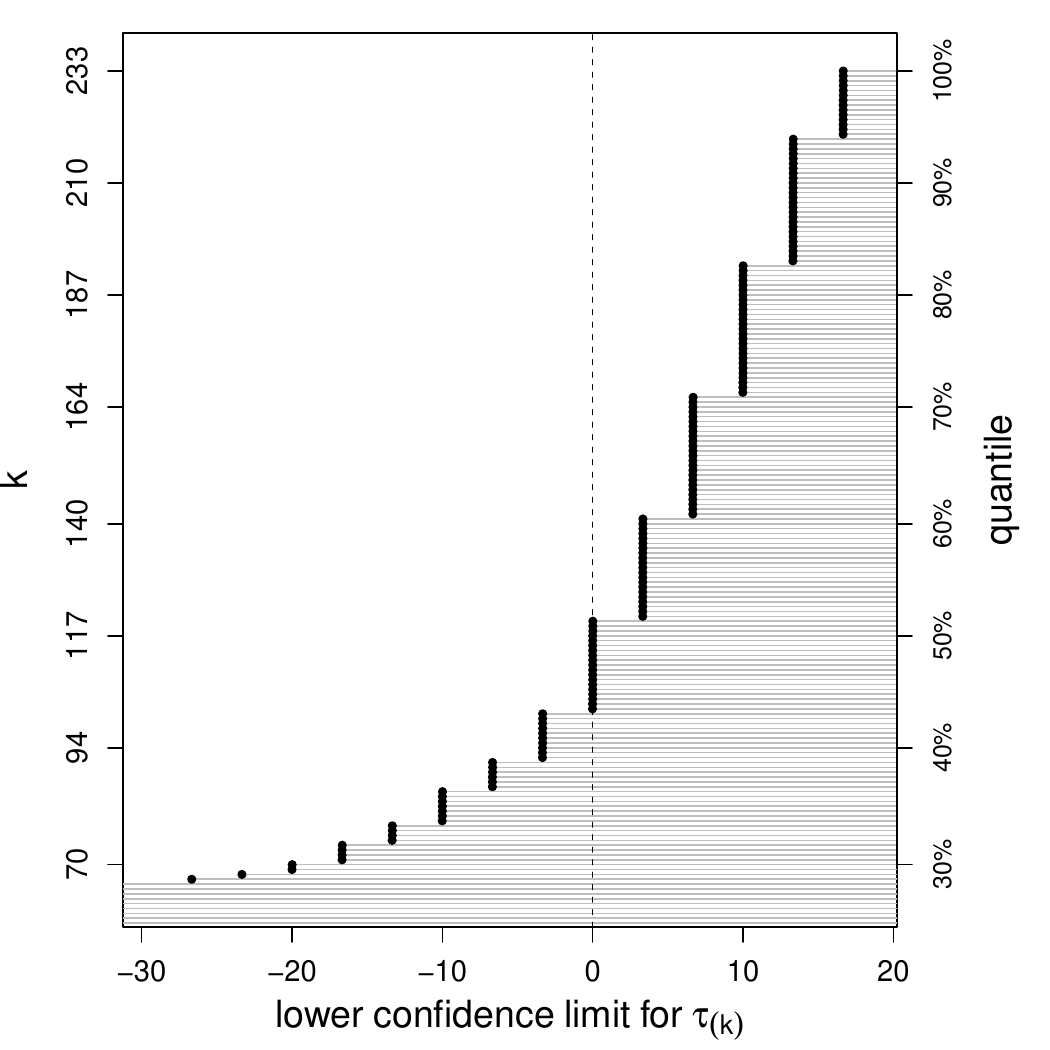}
        \subcaption{Section \ref{sec:effect_treated}}
        \label{subfig:realdata_M1}
    \end{minipage}%
    \begin{minipage}{0.33\textwidth}
        \centering
        \includegraphics[width=0.8\linewidth]{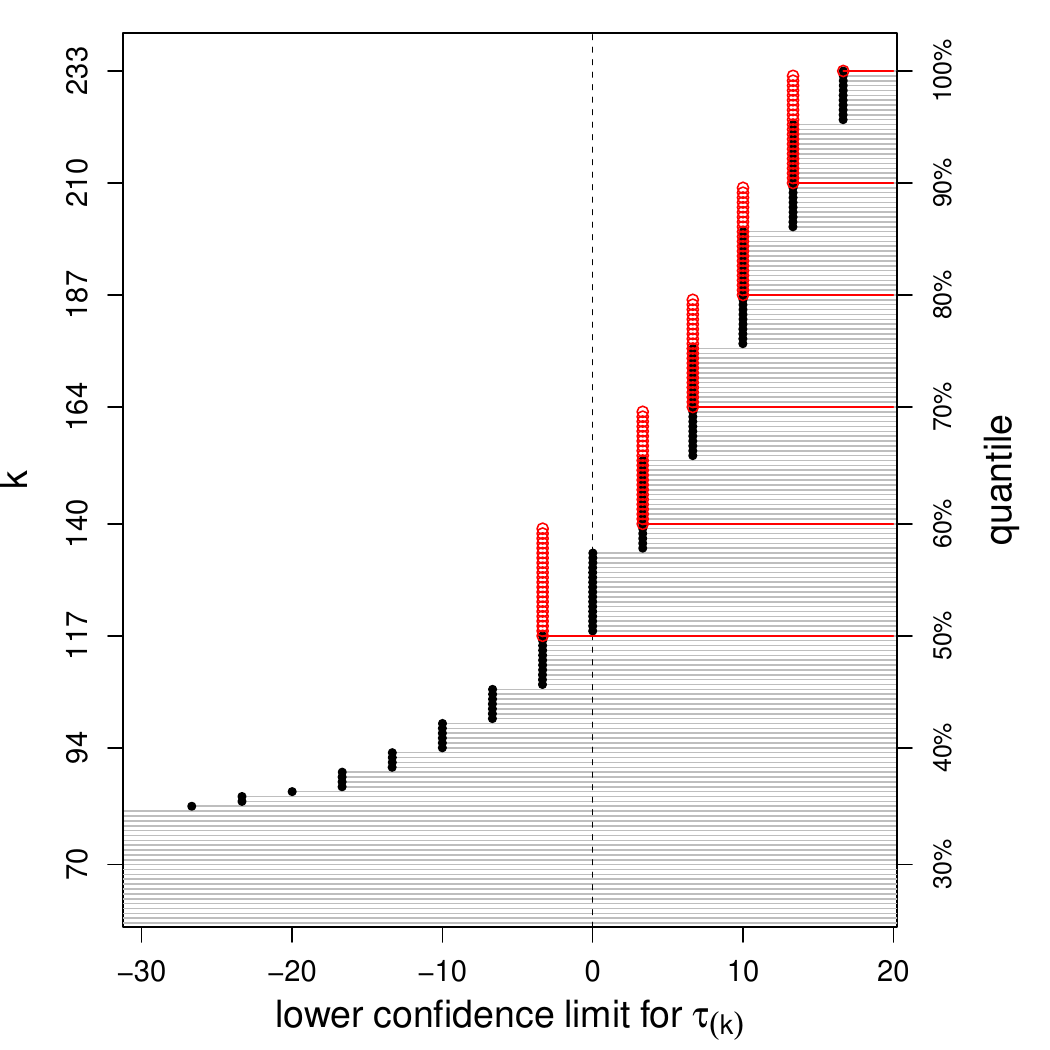}
        \subcaption{Section \ref{sec:berger}}
        \label{subfig:realdata_M2}
    \end{minipage}
    \caption{90\% confidence intervals for 
    treatment effect quantiles among all 233 teachers in the experiment.
    (a) and (b) show the simultaneous lower confidence limits using the original \cite{CDLM21quantile}'s method and our improved method from Section \ref{sec:effect_treated}. 
    (c) shows the individual lower confidence limits (black points) and simultaneous lower confidence limits (red points) using our improved method from Section \ref{sec:berger}.
    The uninformative lower confidence limits of $-\infty$ for lower quantiles are omitted from the plots.}
    \label{fig:realdata}
\end{figure}

We compare three methods: the original method in \citet{CDLM21quantile} 
and our proposed methods in Section \ref{sec:effect_treated} (particularly the one described in Theorem \ref{thm:comb_all}) and Section 3.3 (particularly the one described in Remark \ref{rmk:BB_comb}), where the latter two are invariant to treatment labeling. 
Following \citet{CDLM21quantile}, we use the Stephenson rank sum statistic with $s=6$ for all of these methods. 
Figure \ref{fig:realdata}(a)--(c) show the $90\%$ lower confidence limits for all effect quantiles $\tau_{(k)}$s using these three methods. 
The confidence intervals in (a) and (b) are both simultaneously valid, 
while the confidence intervals in (c) are either individually valid (shown in black color) or simultaneously valid (shown in red color). 
Specifically, we construct simultaneous confidence intervals for the $100\%, 90\%, \ldots, 50\%$ quantiles of individual effects in (c), and use that to construct a confidence band for all the effect quantiles as described in Remark \ref{rmk:bb_band}. 
{\rev
As a side remark, we also provide two-sided confidence intervals for all effect quantiles 
in the supplementary material; see the discussion at the end of Section \ref{sec:framework}. 
}

From Figure \ref{fig:realdata}, our newly proposed methods largely improve upon the original \citet{CDLM21quantile}'s approach and provide much more informative inference for treatment effect quantiles. 
For example, the confidence intervals for $\tau_{(k)}$s are not informative for $k \le 66$ and $k \le 80$ under our two improved methods, respectively, as opposed to $k \le 116$ under the original method. 
In addition, the lower confidence limits for the proportion of teachers with positive treatment effects, $n(0)/n$, are, respectively, $37.8\%$, $52.36\%$ and $46.4\%$, under the original and our improved methods. 
Here we choose the tuning parameter $\gamma = 0.5$ for 
the intervals in Figure \ref{fig:realdata}(c). 
We have also tried $\gamma = 0.3$ and $0.7$, and the results are similar.
This robustness against the choice of $\gamma$ agrees with our simulation results; see Section \ref{sec:simu_sugg}.


\begin{figure*}[htbp]
    \centering
    \begin{subfigure}[b]{0.32\textwidth}
        \centering
        \renewcommand\thesubfigure{a}
         \includegraphics[width=0.8\textwidth]{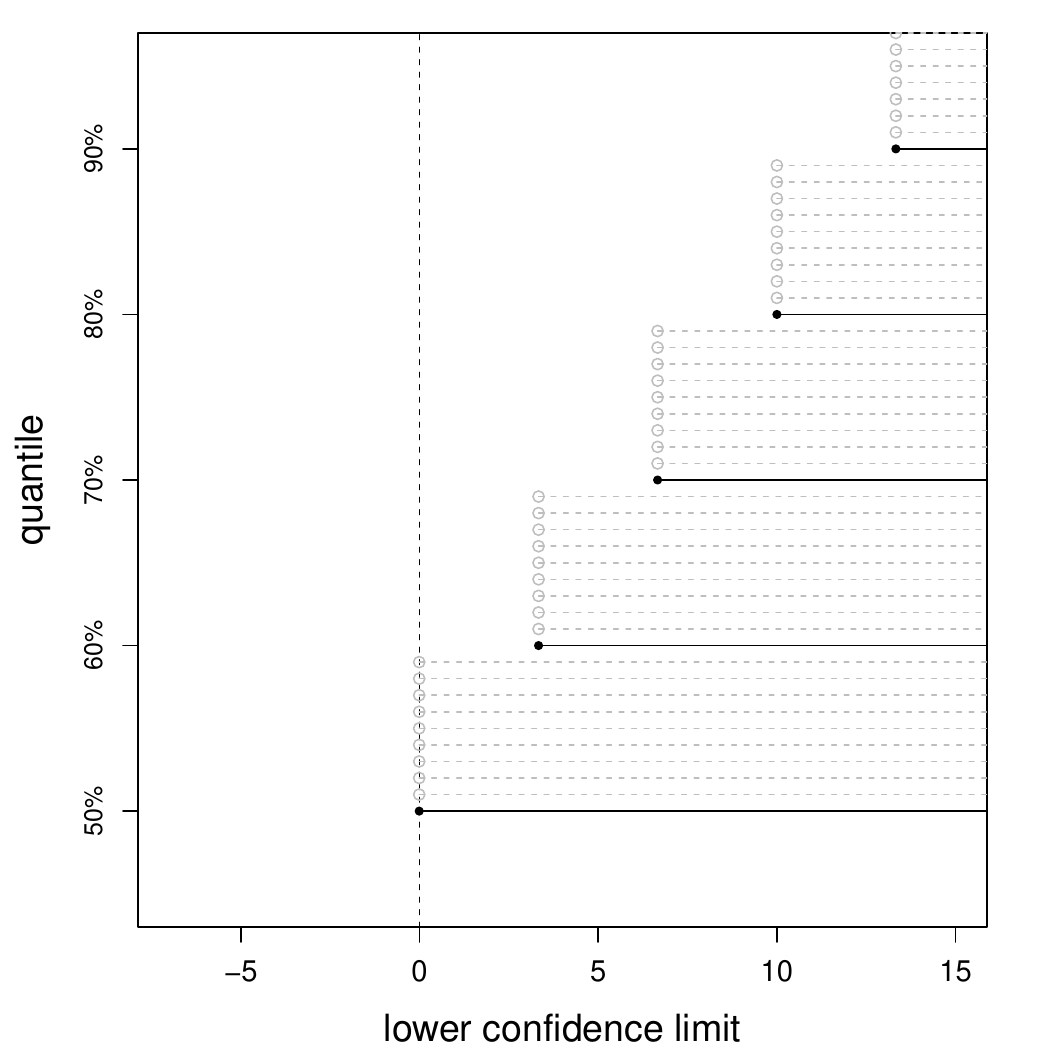}
        \caption{\small\centering $N = 500$}  
        \label{}
    \end{subfigure}
    \hfill
    \begin{subfigure}[b]{0.32\textwidth}    \addtocounter{subfigure}{-1}
    \renewcommand\thesubfigure{b}
        \centering 
        \includegraphics[width=0.8\textwidth]{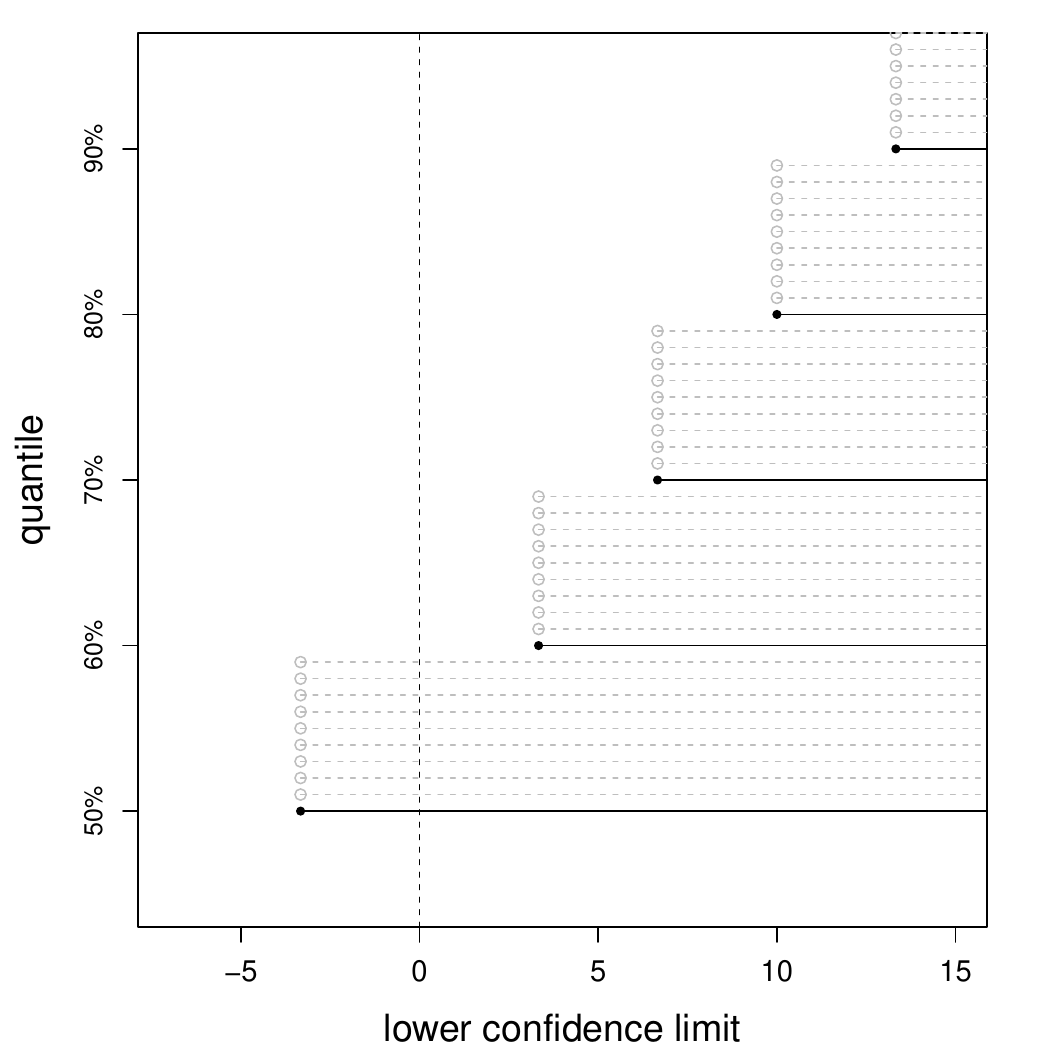}
        \caption{\small\centering $N = 1000$}    
        \label{}
    \end{subfigure}
   \hfill
    \begin{subfigure}[b]{0.32\textwidth}  \addtocounter{subfigure}{-1}   \renewcommand\thesubfigure{c}
        \centering 
        \includegraphics[width=0.8\textwidth]{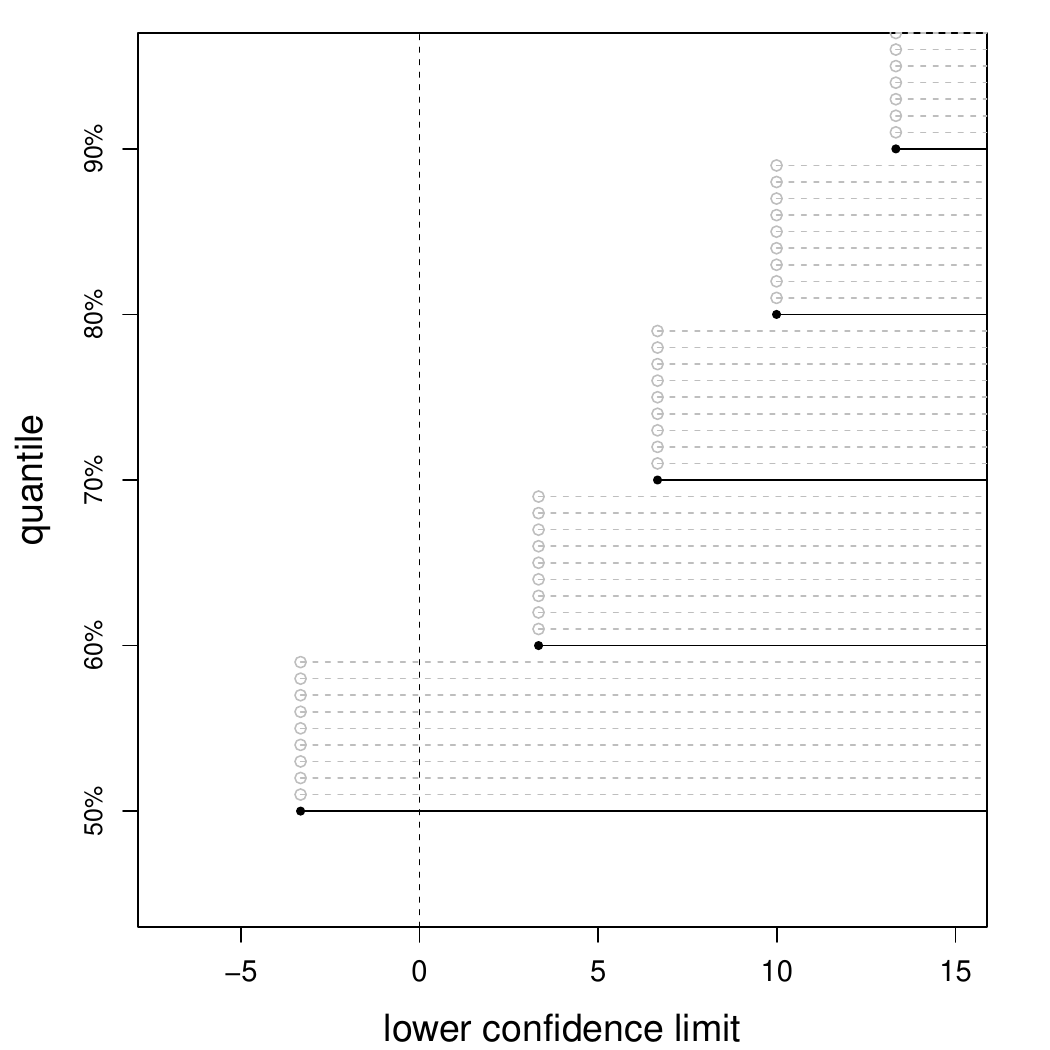}
        \caption{\small\centering $N = 5000$}  
        \label{}
    \end{subfigure}
    \caption[  ]
    {\small  
    90\% simultaneous confidence bands for population (of size $N$)
treatment effect quantiles. 
} 
    \label{fig: teacher_population_CIs}
\end{figure*}

We next assume that the $233$ teachers in the experiment are simple random samples from a larger finite population with $N=500$,  $1000$ or $5000$. 
Note that the inference for $N=5000$ is very close to that under i.i.d.\ sampling from a superpopulation; see 
Remark \ref{rmk:super_limit}. 
Figure \ref{fig: teacher_population_CIs} shows the $90\%$ simultaneous confidence bands for population quantile functions of individual treatment effects using Theorem \ref{thm:samp_fp}, based on the $50\%, 60\%, \ldots, 90\%$ quantiles. 
Not surprisingly, the confidence bands become less informative as the population size increases.

\subsection{Effect of smoking on the blood cadmium level}

We re-analyze the matched study in \citet{SL22quantile}, to demonstrate the benefit from our improved method in Section 4. 
The matched study contains 512 matched sets, each of which contains one treated unit (i.e., daily smoker) and two control units (i.e., nonsmokers). 
We are interested in the causal effect of smoking on the blood cadmium level ($ug/l$). 
Following \citet{SL22quantile}, we used the Wilcoxon rank sum statistic throughout this subsection. 

We first assume that matching is perfect so that the matched study essentially reduces to a SCRE, under which units are completely randomized within each matched set. 
Figure \ref{fig: scre}(a) and (b) show the $90\%$ simultaneous confidence intervals for all treatment effect quantiles from the original approach in \citet{SL22quantile} using either the original data or the data with switched treatment labels and changed outcome signs. 
As discussed in \ref{sec:scre}, the top $512$ intervals in Figure \ref{fig: scre}(a) and the top $1024$ intervals in Figure \ref{fig: scre}(b) can be equivalently interpreted as the simultaneous prediction intervals for effect quantiles among treated and control units, respectively. 
Figure \ref{fig: scre}(c) shows the $90\%$ confidence intervals from our improved method, which essentially combines those top intervals from (a) and (b) but with confidence level $95\%$. 
Obviously, the improved method provides more informative confidence intervals. 

\begin{figure}[htbp]
    \centering
    \begin{minipage}{0.33\textwidth}
        \centering
        \includegraphics[width=0.8\linewidth]{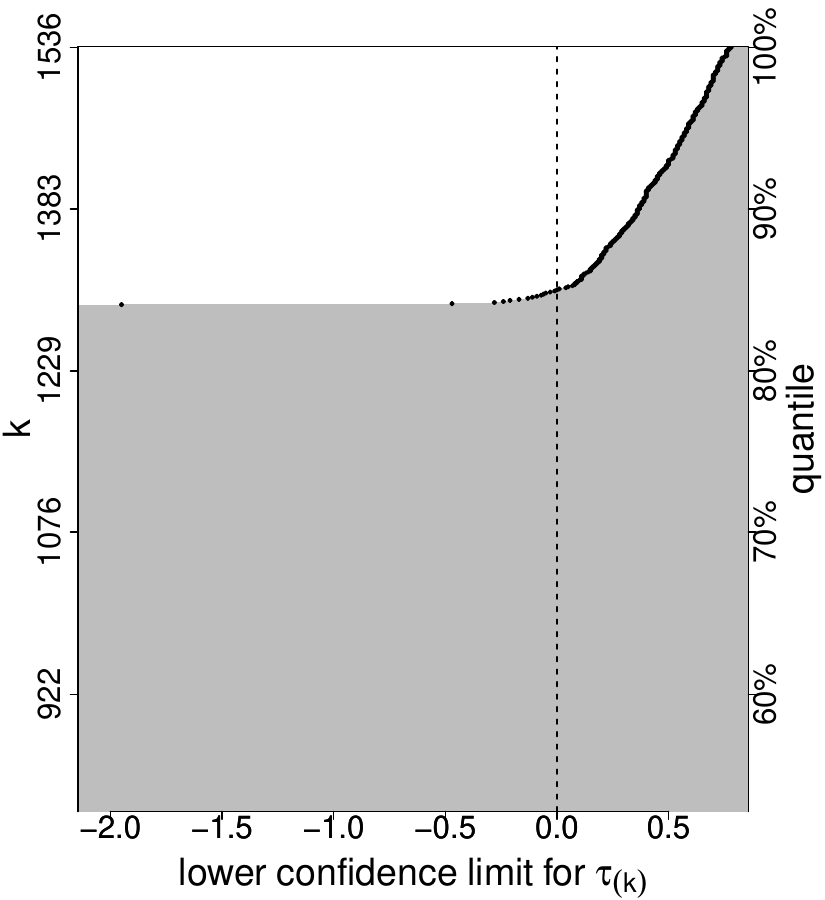}
        \subcaption{\textsf{No switching}}
        \label{subfig: scre a}
    \end{minipage}%
    \begin{minipage}{0.33\textwidth}
        \centering
        \includegraphics[width=0.8\linewidth]{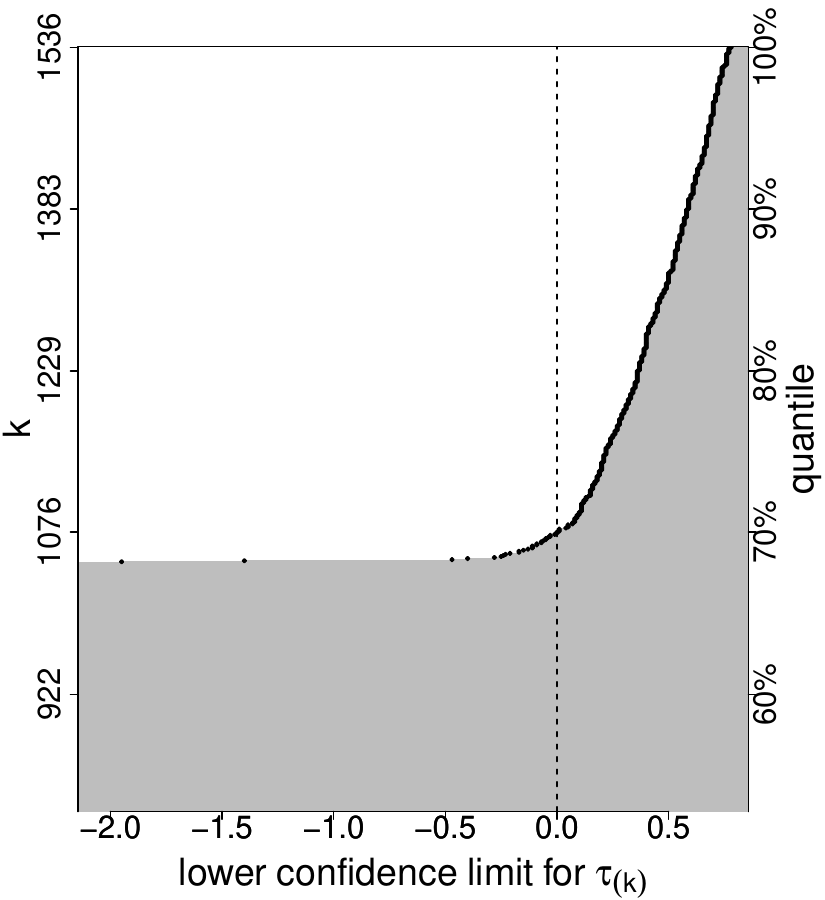}
        \subcaption{\textsf{Switching labels}}
        \label{subfig: scre b}
    \end{minipage}%
    \begin{minipage}{0.33\textwidth}
        \centering
        \includegraphics[width=0.8\linewidth]{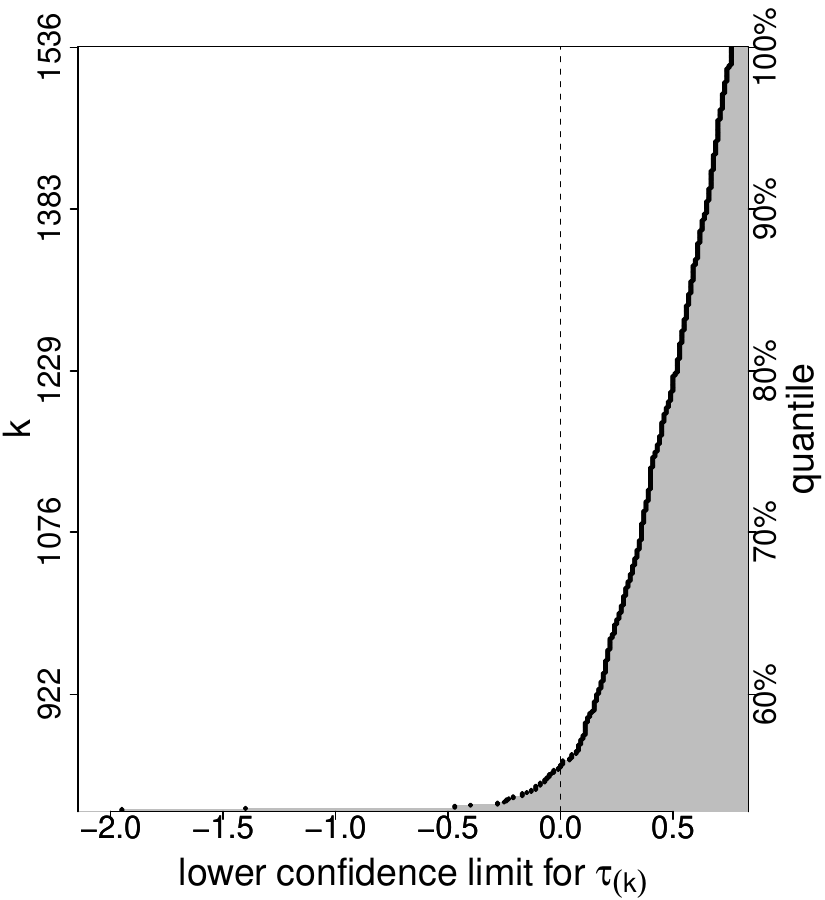}
        \subcaption{\textsf{Improved method}}
       \label{subfig: scre c}
    \end{minipage}
    \caption{90\% simultaneous confidence intervals for quantiles of individual treatment effects. 
    Uninformative intervals of $(-\infty, +\infty)$ for quantiles of lower ranks are omitted from the figure.
    }
    \label{fig: scre}
\end{figure}
Next, we conduct sensitivity analysis for treatment effect quantiles when the bound $\Gamma$ on confounding strength takes various values. 
Figure \ref{fig: sensitivity}(a) shows the $90\%$ simultaneous confidence intervals for effect quantiles of all units under various $\Gamma$ values using the original approach in \citet{SL22quantile} with switched treatment labels and changed outcome signs, and (b) shows the corresponding confidence intervals using our improved approach. 
We further consider treatment effects among only treated units, which are often the inferential target for a matched observational study. 
Figure \ref{fig: sensitivity}(c) shows the $90\%$ simultaneous prediction intervals for all effect quantiles among treated units. 
The values of $\Gamma$ for these figures, $1.0,$ $1.3$, $2.2$, $4.0$, $8.3$ and $38.4$, are the largest $\Gamma$ values such that the $90\%$ prediction intervals for the $55\%, 60\%, 70\%, 80\%, 90\%$ and $100\%$ quantiles of individual effects among treated units do not cover zero, respectively. 
From Figure \ref{fig: sensitivity}(c), smoking increases blood cadmium level for a significant proportion of smokers, and this conclusion is robust to moderate unmeasured confounding. 
For example, when the strength of unmeasured confounding is bounded by $2.2$,
we are $90\%$ confident that smoking increases the blood cadmium level for at least $30\%$ of the smokers in the study. 

\begin{figure}[htbp]
    \centering
    \begin{minipage}{0.33\textwidth}
        \centering
        \includegraphics[width=0.8\linewidth]{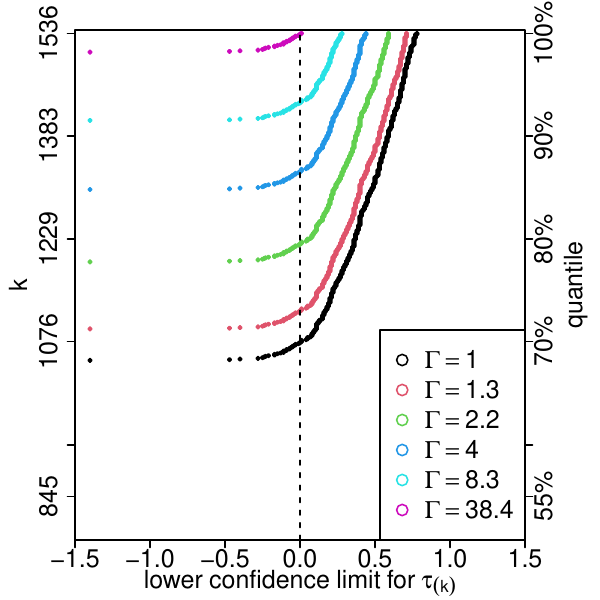}
        \subcaption{\textsf{ 
        Original method
        }}
        \label{subfig: scre sen a}
    \end{minipage}%
    \begin{minipage}{0.33\textwidth}
        \centering
        \includegraphics[width=0.8\linewidth]{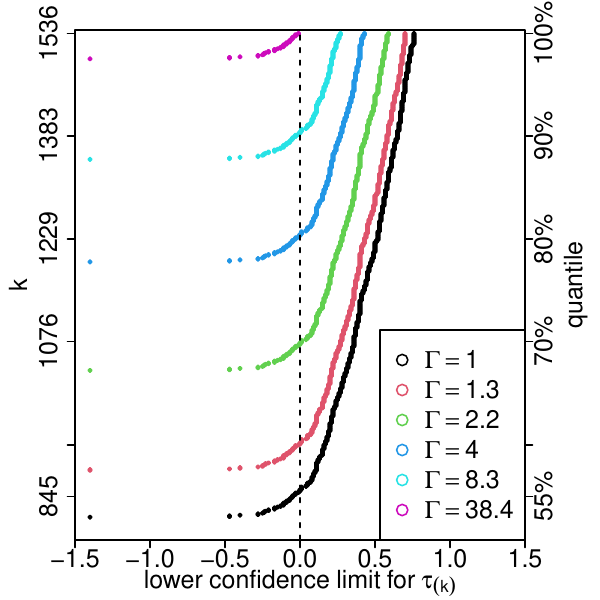}
        \subcaption{\textsf{
        Improved method
        }}
        \label{subfig: scre sen b}
    \end{minipage}%
    \begin{minipage}{0.33\textwidth}
        \centering
        \includegraphics[width=0.8\linewidth]{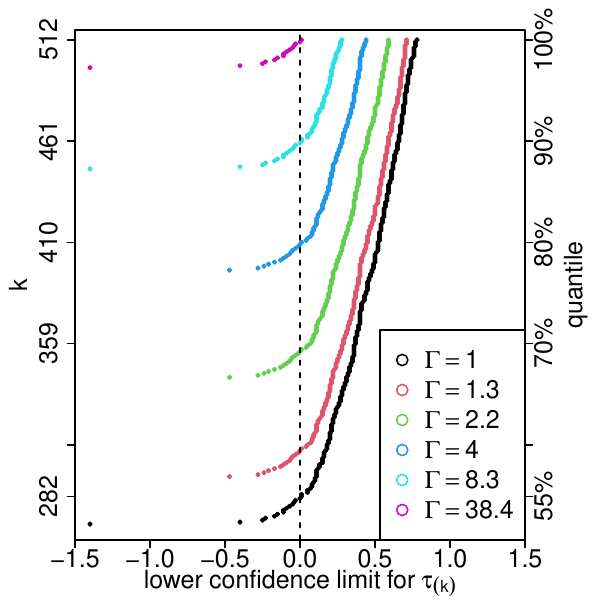}
        \subcaption{\textsf{
        Prediction intervals
        }}
       \label{subfig: scre sen c}
    \end{minipage}
    \caption{90\% lower confidence limits for effect
quantiles among all or treated units
    under various sensitivity models indexed by different $\Gamma$ values from:
    (a) the original approach in \citet{SL22quantile} with switched treatment labels and changed outcome signs, (b)  the improved approach for all units, and (c) 
    the improved method for treated units. 
    }
    \label{fig: sensitivity}
\end{figure}

\secadj
\section{Conclusion and discussion}\label{sec:conclusion}

In this paper we studied inference for distributions and quantiles of individual treatment effects in various randomized experiments as well as quasi-experiments from matching. 
The key ideas can be summarized as follows. 
First, the recent randomization-based approaches proposed by \citet{CDLM21quantile} and \citet{SL22quantile} can be interpreted as inferring treatment effects among only treated or control units. 
They can be 
combined 
to provide often more informative inference for treatment effects among all units. 
Second, under random sampling, we can use sample quantiles to construct confidence intervals for effects quantiles among larger populations. 
For example, we can use effect quantiles among treated units to infer effect quantiles among all experiment units as studied in Section \ref{sec:berger}; we can also use effect quantiles among experimental units to infer effect quantiles for larger populations 
as studied in Section \ref{sec:samp_based}.
Once we have prediction intervals for sample quantiles, we can have a two-step procedure to construct confidence intervals for population quantiles.

The proposed methods can significantly enhance existing randomization-based inference for treatment effect quantiles.
It is natural to ask whether the improved methods are 
accurate. 
For example, given that the true effect quantiles are generally only partially identified,
we can investigate 
whether the confidence bounds for treatment effect quantiles converge to their sharp bounds as the sample size goes to infinity. 
We leave this as future work. 

\secadj
\section*{Supplementary Material}
The supplementary material includes simulation studies, proofs of all theorems, discussion for binary outcomes and experiments with units sampled from superpopulation, and additional techinical details. 

\secadj
\section*{Disclosure Statement}
The authors report there are no competing interests to declare.


\singlespacing
\setlength{\bibsep}{0pt}
\bibliographystyle{plainnat}
\bibliography{reference}

 \newpage
 


\title{\bf 
 \Large
Supplementary Material to ``Enhanced inference for distributions and quantiles of individual treatment effects in various experiments''
}
\author{
}
\date{}
\maketitle

\setcounter{equation}{0}
\setcounter{section}{0}
\setcounter{figure}{0}
\setcounter{example}{0}
\setcounter{proposition}{0}
\setcounter{corollary}{0}
\setcounter{theorem}{0}
\setcounter{lemma}{0}
\setcounter{table}{0}
\setcounter{condition}{0}

\makeatletter
\renewcommand{\l@section}{\@dottedtocline{1}{0em}{2.3em}}
\makeatother

\renewcommand {\theproposition} {A\arabic{proposition}}
\renewcommand {\theexample} {A\arabic{example}}
\renewcommand {\thefigure} {A\arabic{figure}}
\renewcommand {\thetable} {A\arabic{table}}
\renewcommand {\theequation} {A\arabic{equation}}
\renewcommand {\thelemma} {A\arabic{lemma}}
\renewcommand {\thesection} {A\arabic{section}}
\renewcommand {\thetheorem} {A\arabic{theorem}}
\renewcommand {\thecorollary} {A\arabic{corollary}}
\renewcommand {\thecondition} {A\arabic{condition}}
\renewcommand {\thermk} {A\arabic{rmk}}

\renewcommand {\thepage} {A\arabic{page}}
\setcounter{page}{1}

\doublespacing


{\rev 
\section{Inference for average treatment effects on a binary outcome}
\label{supp: inference for attributable effects}

Below we extend the approach based on distribution free rank statistics, as in \citet{CDLM21quantile} and the main paper, to infer average treatment effects when the outcomes are binary taking values in $\{0, 1\}$. 
We consider the CRE and adopt the notation from the main paper. 
We now focus on the following hypotheses on total treatment effects among treated units: 
\begin{align}\label{eq:H_ave_treat}
    \bar{H}^\treat_{c}: \sum_{i}^n Z_i \tau_i = c 
    \Longleftrightarrow \bs{\tau} \in \bar{\mathcal{H}}^\treat_{c}, 
    \qquad (c \in \{-n_\treat, -n_\treat+1, \ldots, n_\treat\})
\end{align}
where $\bar{\mathcal{H}}^\treat_{c}$ denotes the set of all possible values of the individual treatment effect vector under $\bar{H}^\treat_{c}$.
The hypothesis in \eqref{eq:H_ave_treat} states that the total treatment effect among treated units is $c$, or equivalently the average treatment effect among treated units is $c/n_\treat$. 
Once we have a valid test for \eqref{eq:H_ave_treat}, by the standard test inversion, we can then construct prediction intervals for the average treatment effect among treated. By switching treatment labels and redefining the outcome as 1 minus the original outcome, we can also infer total or average treatment effects among control units. 
In addition, note that 
\begin{align}\label{eq:attribute_effect_Y0}
    \sum_{i=1}^n Z_i \tau_i = \sum_{i=1}^n Z_i Y_i(1) - \sum_{i=1}^n Z_i Y_i(0) = \sum_{i=1}^n Z_i Y_i - \sum_{i=1}^n Z_i Y_i(0), 
\end{align}
Thus, as discussed in \citet{Rosenbaum:2001},
given the observed data, 
the total effect among treated units can only take integer values between $\sum_{i=1}^n Z_i Y_i - n_\treat$ and $\sum_{i=1}^n Z_i Y_i$. 

By a similar logic as Theorem \ref{thm:treated}, we can construct a valid $p$-value for the null hypothesis $\bar{H}^\treat_{c}$ in \eqref{eq:H_ave_treat} based on a distribution free test statistic $t(\cdot, \cdot)$, by considering the minimum value of $t(\bs{Z}, \bs{Y}(0))$ under $\bar{H}^\treat_{c}$. 
We summarize the results in the theorem below. 

\begin{theorem}\label{thm:pval_ave}
    Consider a CRE with a binary outcome, and any rank score statistic $t(\cdot, \cdot)$ in \eqref{eq:rank_score}. 
    For any integer $c\in [-n_\treat, n_\treat]$, 
    the following is a valid $p$-value for testing the null hypothesis $\bar{H}^\treat_{c}$ in \eqref{eq:H_ave_treat}: 
    \begin{align}\label{eq:p_ave_treat}
        \bar{p}_c^\treat = G\big( \min_{\bs{\delta} \in \bar{\mathcal{H}}^\treat_{c}} t(\bs{Z}, \bs{Y} -\bs{Z} \circ \bs{\delta} ) \big), 
        \quad 
        \quad 
        \text{where } G(x) = \Pr\{ t(\bs{Z}, \bs{y}) \ge x \} \text{ for any } \bs{y}\in \mathbb{R}^n.
    \end{align}
\end{theorem}

In \eqref{eq:p_ave_treat}, $G(\cdot)$ is defined the same as that in \eqref{eq:p_nkc}, 
whose value does not vary with $\bs{y}$ in its definition due to the distribution free property of the rank score statistic under the CRE. 
The remaining question is how to minimize the test statistic $t(\bs{Z}, \bs{Y} -\bs{Z} \circ \bs{\delta} )$ subject to the constraint that $\bs{\delta} \in \bar{\mathcal{H}}^\treat_{c}$. 
Below we focus on the Wilcoxon rank sum statistic, i.e., the rank sum statistic in \eqref{eq:rank_score} with identity rank transformation, which is intuitive in the case of binary outcomes. 
It turns out that the corresponding optimization of the test statistic in \eqref{eq:p_ave_treat} has a closed-form solution, as explained in detail below.


By the equivalence between the usual Wilcoxon rank sum test and Mann–Whitney U test, the Wilcoxon rank sum statistic has the following equivalent form:
\begin{align}\label{eq:MWU}
    t(\bs{Z}, \bs{Y} -\bs{Z} \circ \bs{\delta} ) 
    & = 
    \sum_{i=1}^n  Z_i \left\{\sum_{j=1}^n (1- Z_j)\delta_{ij}(Y_i-Z_i\delta_i, Y_j - Z_j \delta_j)\right\} + \frac{n_\treat(n_\treat + 1)}{2}
    \nonumber
    \\
    & = 
    \sum_{i=1}^n  Z_i \left\{\sum_{j=1}^n (1- Z_j)\delta_{ij}(Y_i-\delta_i, Y_j)\right\} + \frac{n_\treat(n_\treat + 1)}{2},
\end{align}
where
\begin{align}\label{eq:delta_ij}
    \delta_{ij}(Y_i-\delta_i, Y_j)
    & = \I\{ Y_i-\delta_i > Y_j \} + \I\{ Y_i-\delta_i = Y_j \} \I\{i > j\}.
\end{align}
In \eqref{eq:delta_ij}, we break ties based on their indices; this is essentially random tie-breaking, since we will randomly shuffle the order of units before analysis. 
Under the constraint that $\bs{\delta}\in \bar{\mathcal{H}}^\treat_{c}$, we have 
\begin{align*}
    \sum_{i=1}^n Z_i (Y_i-\delta_i) = \sum_{i=1}^n Z_i Y_i -  \sum_{i=1}^n Z_i \delta_i = \sum_{i=1}^n Z_i Y_i - c.
\end{align*}
Moreover, by the fact that the outcomes are binary, we know $Y_i-\delta_i$, as a possible control potential outcome for unit $i$, must take values in $\{0,1\}$. 
Therefore, 
by substituting $Y_i-\delta_i$ with $x_i$, 
the optimization for $\min_{\bs{\delta} \in \bar{\mathcal{H}}^\treat_{c}} t(\bs{Z}, \bs{Y} -\bs{Z} \circ \bs{\delta} )$ in \eqref{eq:p_ave_treat} becomes equivalent to 
\begin{align}\label{eq:binary_opt}
    \min \quad  & \sum_{i:Z_i=1} \left\{\sum_{j=1}^n (1- Z_j)\delta_{ij}(x_i, Y_j)\right\} + \frac{n_\treat(n_\treat + 1)}{2}\\
    \text{subject to} \quad &  
    \sum_{i:Z_i=1} x_i = \sum_{i=1}^n Z_i Y_i - c, 
    \nonumber
    \\
    & x_i \in \{0,1\} \text{ for all } i \text{ such that } Z_i = 1.
    \nonumber
\end{align}
Note that 
\begin{align*}
    & \quad \ \sum_{i:Z_i=1} \left\{\sum_{j=1}^n (1- Z_j)\delta_{ij}(x_i, Y_j)\right\}
    \\
    & = \sum_{i:Z_i=1} \left\{\sum_{j=1}^n (1- Z_j)\delta_{ij}(0, Y_j)\right\}
    + 
    \sum_{i:Z_i=1}  x_i \left\{\sum_{j=1}^n (1- Z_j)\delta_{ij}(1, Y_j) - \sum_{j=1}^n (1- Z_j)\delta_{ij}(0, Y_j) \right\}\\
    & = 
    \sum_{i:Z_i=1} \left\{\sum_{j=1}^n (1- Z_j)\delta_{ij}(0, Y_j)\right\}
    +  
    \sum_{i:Z_i=1}  x_i \Delta_i,
\end{align*}
where $\Delta_i \equiv \sum_{j=1}^n (1- Z_j)\delta_{ij}(1, Y_j) - \sum_{j=1}^n (1- Z_j)\delta_{ij}(0, Y_j)$ for all $i$ such that $Z_i=1$. 
It is not difficult to see that the solution to \eqref{eq:binary_opt} is achieved when $x_i$ equals $1$ for the smallest $(\sum_{i=1}^n Z_i Y_i - c)$ elements of $\{\Delta_i: Z_i=1, 1\le i \le n\}$ and is $0$ otherwise. 
Let $\Delta_{(1)} \le \Delta_{(2)} \le \ldots \le \Delta_{(n)}$ be the sorted values of the $\Delta_i$s for treated units. 
The minimum value of the test statistic from the optimization in \eqref{eq:p_ave_treat} and \eqref{eq:binary_opt} then has the following form: 
\begin{align}\label{eq:min_test_binary}
    \min_{\bs{\delta} \in \bar{\mathcal{H}}^\treat_{c}} t(\bs{Z}, \bs{Y} -\bs{Z} \circ \bs{\delta} )
    & = \sum_{i:Z_i=1} \left\{\sum_{j=1}^n (1- Z_j)\delta_{ij}(0, Y_j)\right\}
    +  
    \sum_{i=1}^{\sum_{i=1}^n Z_i Y_i - c} \Delta_{(i)} + \frac{n_\treat(n_\treat + 1)}{2},
\end{align}
which can be easily calculated. 


From Theorem \ref{thm:pval_ave}, we can then invert the test to construct prediction intervals for the total effect among treated units. 
Specifically, a $1-\alpha$ prediction set for $\sum_{i=1}^n Z_i \tau_i$ is 
\begin{align}\label{eq:pred_set}
    \{ c \in \mathbb{Z}: \ \bar{p}_c^\treat > \alpha, \ \sum_{i=1}^n Z_i Y_i - n_\treat \le c \le \sum_{i=1}^n Z_i Y_i \}, 
\end{align}
where $\mathbb{Z}$ denotes the set of all the integers, and the second constraint in \eqref{eq:pred_set} follows from the discussion after \eqref{eq:attribute_effect_Y0} about the possible values of $\sum_{i=1}^n Z_i \tau_i$ given the observed data.  
Moreover, the $p$-value $\bar{p}_c^\treat$ is increasing in $c$, so that we can compute the prediction set in \eqref{eq:pred_set} efficiently. 
Specifically, through binary search, we can find the $1-\alpha$ lower prediction bound 
using at most $O(\log n_\treat)$ randomization tests. 
The monotonicity of the p-value $\bar{p}_c^\treat$ follows from the form of the minimum test statistic value in \eqref{eq:min_test_binary} and the fact that $\Delta_i \ge 0$ for any treated unit $i$, which can be verified by definition. 

The prediction set in \eqref{eq:pred_set} gives a lower prediction bound for the total effect among treated units. 
We can also construct upper prediction bounds for the total effect among treated units by applying the above procedure, but with outcomes redefined as one minus the original outcomes. 
We can then combine lower and upper prediction bounds, with a Bonferroni adjustment, to construct two-sided prediction bounds for the total effect among treated units.


Below we numerically compare our approach with 
the method in \citet{Rosenbaum:2001} and \citet{Rigdon:2015} for inferring the total treatment effect among the treated for a binary outcome. 
We use the education experiment in Section \ref{sec:teacher} and dichotomize the original outcome, 
assigning 0 to outcomes below 10 and 1 to those with values of 10 or greater.
Since our method depends on the random tie breaking, 
we reshuffle the order of units $100$ times and conduct inference based on each of the reshuffled data. 
The $90\%$ prediction interval for the total effect among treated using \citet{Rigdon:2015}'s approach is $[83, 119]$. 
The $90\%$ prediction intervals from our method vary across these 100 shuffled datasets, with an average length of about 67. 
Figure \ref{fig:Attributable_effect_CI_plot} shows the histograms of the prediction lower and upper bounds from our method, as well as the bounds from \citet{Rigdon:2015}'s approach. 
The intervals from our method have substantial randomness across different ordering of units and tend to be wider than those from \citet{Rigdon:2015}'s approach. 
Note that our method uses the random-tie breaking to ensure the distribution free property of the rank sum statistic. 
From Figure \ref{fig:Attributable_effect_CI_plot},  \citet{Rigdon:2015}'s approach is preferable in this context. 
Nevertheless, the $p$-value from our approach is monotone in the hypothesized total effect among treated units, which facilitates the search for lower (or upper) prediction bounds. Such a computational gain using the monotonicity of randomization $p$-values is similar to \citet{LD2016binary}'s improvement over another method in \citet{Rigdon:2015}, which is based on randomization tests with test statistics of the form $t(\bs{Z}, \bs{Y})$. 
It would be interesting to extend our method to cases with general discrete outcomes that have more than two levels, for which the potential computational benefit of our method could be more substantial; we leave this for future investigation. 
}


\begin{figure}[!htb]
    \centering
    \begin{subfigure}{.5\textwidth}
        \centering
       \includegraphics[width=1\linewidth]{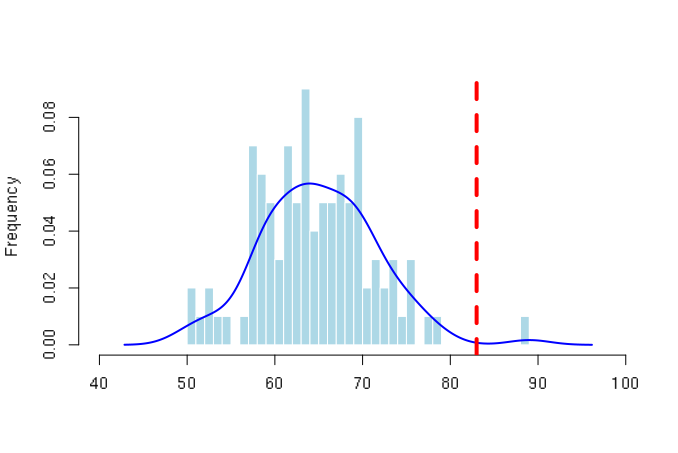}
       \caption{Lower prediction bound}
    \end{subfigure}%
    \begin{subfigure}{0.5\textwidth}
        \centering
       \includegraphics[width=1\linewidth]{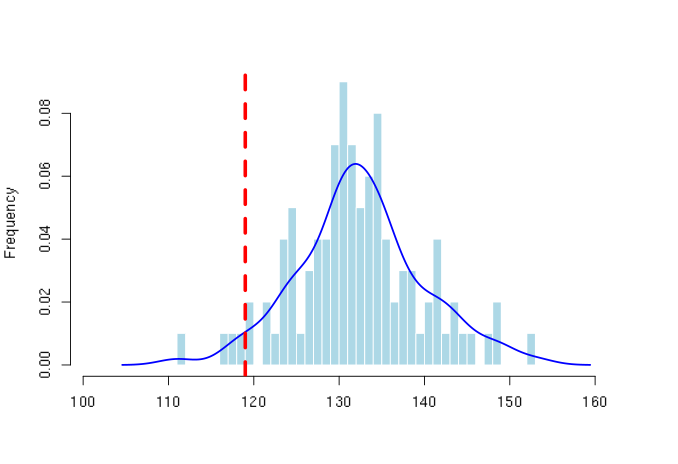}
    \caption{Upper prediction bound}
    \end{subfigure}
    \caption{Lower and upper prediction bounds of the $90\%$ prediction intervals for the total effect among treated units. 
    The histograms show the prediction bounds from our method across the $100$ reshuffled datasets. 
    The red dashed lines show the prediction bounds from the approach in \citet{Rigdon:2015}.
    }
    \label{fig:Attributable_effect_CI_plot}
\end{figure}

\section{Experiments with random sampling from a superpopulation}\label{sec:sp}

We consider now the superpopulation setting where the $n$ experimental units are independent and identically distributed (i.i.d.) samples from a superpopulation.  
Let $( Y^\sp(1), Y^{\sp}(0) )$ denote a representative sample from this superpopulation. 
We are interested in the superpopulation distribution of individual treatment effect $\tau^{\sp} \equiv Y^{\sp}(1) - Y^{\sp}(0)$. 
Let $F_{\sp}(c) \equiv \Pr(\tau^{\sp}\le c)$ and $F^{-1}_\sp (\cdot)$ denote the corresponding distribution and quantile functions.

We consider inference for superpopulation quantiles $F_{\sp}^{-1}(\beta_1) \le F_{\sp}^{-1}(\beta_2) \le \ldots \le F_{\sp}^{-1}(\beta_J)$ for some prespecified $0\le \beta_1 < \ldots < \beta_J \le 1$. 
Similar to finite population setting, we use the two-step procedure to construct simultaneous confidence intervals for them.  
Under i.i.d.\ sampling, the coverage probability of sample quantiles for population quantiles will involve Multinomial distributions.
For any $n\ge 1$ and $0 \leq \beta_1 < \ldots < \beta_J \leq 1$, 
define 
\begin{align}\label{eq:delta_M}
    \Delta_{\textup{M}}(k_{1:J}; n, \beta_{1:J})
    = 
    \Pr\Big( \bigcup_{j=1}^J \Big\{ \sum_{i=j}^J M_i > n - k_j \Big\} \Big),
\end{align}
where $(n-\sum_{j=1}^J M_j, M_1, \ldots, M_{J-1}, M_J) 
\sim \textup{MN}(\beta_1, \beta_2-\beta_1, \ldots, \beta_J - \beta_{J-1}, 1-\beta_J; n).$
Specifically, $\textup{MN}(a_0, a_1, \ldots, a_J; n)$ denotes the numbers of balls of colors $0$ to $J$ among in total $n$ balls, where each ball has color $j$ with probability $a_j$, independently across all balls. 

\begin{theorem}\label{thm:sp}
Consider an experiment where the units are i.i.d. samples from a superpopulation, 
and suppose that the population effect quantiles $F_{\sp}^{-1}(\beta_j)$s are of interest, where $0\le \beta_1 < \beta_2 < \ldots < \beta_J \le 1$ and $J\ge 1$. 
For any $\alpha \in (0,1)$ and $0\le k_1' \le k_2' \le \ldots \le k_J' \le n$, 
define 
$\Delta_{\textup{M}}(k_{1:J}'; n, \beta_{1:J})$ as in \eqref{eq:delta_M},
and let 
$\mathcal{I}^{\alpha'}_{(k_j')}$s be simultaneous $1-\alpha'$ one-sided prediction intervals for sample effect quantiles $\tau_{(k_j')}$s  using Theorem  \ref{thm:comb_all} or \ref{thm:treated_scre}, depending on the treatment assignment mechanism, 
where 
$\alpha' = \alpha - \Delta_{\textup{M}}(k_{1:J}'; n, \beta_{1:J})>0$.
Then the $\mathcal{I}^{\alpha'}_{(k_j')}$s 
are simultaneous confidence intervals for the quantiles $F_{\sp}^{-1}(\beta_j)$s, in the sense that 
$
    \Pr\{F_{\sp}^{-1}(\beta_j) \in  \mathcal{I}^{\alpha'}_{(k_j')} \textup{ for all } 1 \leq j \leq J \}\geq 1-\alpha. 
$       
\end{theorem}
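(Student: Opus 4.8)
The plan is to follow the same two-step template used for Theorem~\ref{thm:samp_fp}, replacing the multivariate hypergeometric coupling (appropriate for sampling without replacement from a finite population) by a multinomial coupling (appropriate for i.i.d.\ sampling). Throughout, write $q_j \equiv F_{\sp}^{-1}(\beta_j)$ for the target population quantiles, and recall the standard facts $F_{\sp}^{-1}(u)\le x \Longleftrightarrow u \le F_{\sp}(x)$ and $F_{\sp}(q_j)\ge \beta_j$. The two steps are: (i) bound the probability that the sample quantiles $\tau_{(k_j')}$ fail to lie at or below the population quantiles $q_j$ simultaneously, which will produce the term $\Delta_{\textup{M}}$; and (ii) invoke Theorem~\ref{thm:comb_all} or \ref{thm:treated_scre} to cover the (unobserved) sample quantiles $\tau_{(k_j')}$ by the prediction intervals $\mathcal{I}^{\alpha'}_{(k_j')}$, then combine by Bonferroni.

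For step (i), I would couple the i.i.d.\ sample $\tau_1,\ldots,\tau_n$ with i.i.d.\ $\textup{Unif}(0,1)$ variables $U_1,\ldots,U_n$ via $\tau_i = F_{\sp}^{-1}(U_i)$. Then $\{U_i \le \beta_j\}\subseteq\{\tau_i\le q_j\}$, hence $\#\{i: \tau_i \le q_j\} \ge \#\{i: U_i \le \beta_j\}$, and since $\{\tau_{(k_j')}\le q_j\} = \{\#\{i:\tau_i\le q_j\}\ge k_j'\}$, the complement of the $j$th event is contained in $\{\#\{i: U_i\le \beta_j\} < k_j'\}$. Letting $M_i$, $1\le i\le J$, be the number of $U_\ell$'s in $(\beta_i,\beta_{i+1}]$ (with $\beta_{J+1}\equiv 1$) and $n-\sum_{i=1}^J M_i$ the number in $[0,\beta_1]$, this count vector is exactly $\textup{MN}(\beta_1,\beta_2-\beta_1,\ldots,\beta_J-\beta_{J-1},1-\beta_J; n)$, and $\{\#\{i:U_i\le\beta_j\}<k_j'\} = \{\sum_{i=j}^J M_i > n-k_j'\}$. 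A union bound over $j$ then shows $\Pr(\tau_{(k_j')}\le q_j \text{ for all } j) \ge 1 - \Delta_{\textup{M}}(k_{1:J}'; n, \beta_{1:J})$, with $\Delta_{\textup{M}}$ as in \eqref{eq:delta_M}.

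For step (ii), Theorem~\ref{thm:comb_all} (under complete randomization) or Theorem~\ref{thm:treated_scre} (under stratified randomization), with their internal confidence level chosen so the relevant subcollection of intervals has simultaneous coverage at least $1-\alpha'$, gives $\Pr(\tau_{(k_j')}\in \mathcal{I}^{\alpha'}_{(k_j')} \text{ for all } j)\ge 1-\alpha'$; this holds conditionally on the sampled units, hence marginally. Taking $\alpha' = \alpha - \Delta_{\textup{M}}(k_{1:J}'; n, \beta_{1:J})$ and a union bound with step (i) yields $\Pr(\tau_{(k_j')}\le q_j \text{ and } \tau_{(k_j')}\in \mathcal{I}^{\alpha'}_{(k_j')} \text{ for all } j) \ge 1-\alpha$. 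On this event, each $\mathcal{I}^{\alpha'}_{(k_j')}$ is a one-sided interval of the form $(c_j,\infty)$ or $[c_j,\infty)$ containing $\tau_{(k_j')}$, and since $q_j = F_{\sp}^{-1}(\beta_j) \ge \tau_{(k_j')}$, the point $q_j$ also lies in $\mathcal{I}^{\alpha'}_{(k_j')}$; this is exactly the desired simultaneous coverage.

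I expect the only real obstacle to be making step (i) airtight: one must handle the generalized inverse carefully---the correct direction of the quantile/CDF inequality, left-continuity, and the measure-zero boundary issues, all of which the uniform coupling cleanly resolves---and then check the relabeling so that the multinomial tail event produced by the coupling coincides exactly with the union event defining $\Delta_{\textup{M}}$ in \eqref{eq:delta_M}. Step (ii) and the combination are then a routine Bonferroni argument; the degenerate case $k_j'=0$, where $\mathcal{I}^{\alpha'}_{(k_j')}$ is read as the whole line and $\tau_{(0)}=-\infty\le q_j$ trivially, poses no difficulty.
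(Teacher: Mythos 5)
Your proposal is correct and follows essentially the same route as the paper: the paper also reduces the event $\{F_{\sp}^{-1}(\beta_j)<\tau_{(k_j')}\}$ to a tail event for counts of i.i.d.\ uniforms via the inverse probability integral transform (your coupling $\tau_i=F_{\sp}^{-1}(U_i)$ stated in the complementary form $\I(\tau_i>q_j)\le\I(U_i>\beta_j)$), identifies the resulting bound with $\Delta_{\textup{M}}$ through the multinomial counts $M_j$, and then combines with the $1-\alpha'$ simultaneous prediction intervals and the one-sided interval shape exactly as in the proof of Theorem \ref{thm:samp_fp}. No gaps.
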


The remarks for Theorem \ref{thm:samp_fp} holds analogously for Theorem \ref{thm:sp}. 
Below are some additional remarks. 
First, as long as the treatment assignment is independent of the potential outcomes, 
Theorem \ref{thm:sp} holds as if the experiment is a CRE. 
This is because the inference can also be justified by exchangeability of potential outcomes; see Appendix \ref{sec:tech_super} for details.

Second, when the experiment is a SCRE, we can infer population effect quantiles using inference for sample effect quantiles in Theorem \ref{thm:treated_scre}. 
Importantly, the stratification in the SCRE can depend on units' pretreatment covariates
in an arbitrary way. 
Note that, in such cases, inference for population average treatment effects can be challenging and may need assumptions on the stratification structure; see, e.g., \citet{Bai2022} and \citet{bai2022inference}.

Third, as discussed in Remark \ref{rmk:super_limit}, 
intervals for superpopulation quantiles in Theorem \ref{thm:sp} can be viewed as limits of that in Theorem \ref{thm:samp_fp} with finite population size 
$N\rightarrow \infty$. 
Specifically, the term $\Delta_{\textup{H}}$ in Theorem \ref{thm:samp_fp} 
will converge to $\Delta_{\textup{M}}$ in Theorem \ref{thm:sp} as $N\rightarrow \infty$; see Appendix \ref{sec:tech_super} for details.

{\rev 
Fourth, 
as discussed in Remark \ref{rmk:treated_generalize}, 
in matched observational studies, if the matched treated units are i.i.d.~samples from a superpopulation, we can conduct sensitivity analysis for effect quantiles among the superpopulation of treated units. 
Importantly, 
similar to Remark \ref{rmk:treated_generalize}, 
when applying Theorem \ref{thm:sp}, we should use prediction intervals for effect quantiles among treated units, rather than those for effect quantiles among all units as in Theorem \ref{thm:sp}. 
We summarize the inference procedure in the following corollary. 

\begin{corollary}\label{cor:super_match}
Consider a matched observational study where the $n_\treat$ matched treated units are i.i.d. samples from a superpopulation of treated units, for which the distribution function of individual treatment effect is denoted by $F_{\sp}(\cdot)$. 
Suppose that the population effect quantiles $F_{\sp}^{-1}(\beta_j)$s are of interest, where $0\le \beta_1 < \beta_2 < \ldots < \beta_J \le 1$ and $J\ge 1$. 
For any $\alpha \in (0,1)$ and $0\le k_1' \le k_2' \le \ldots \le k_J' \le n_\treat$, 
define 
$\Delta_{\textup{M}}(k_{1:J}'; n_\treat, \beta_{1:J})$ as in \eqref{eq:delta_M},
and let 
$\mathcal{I}^{\alpha'}_{\treat(k_j')}$s be simultaneous $1-\alpha'$ one-sided prediction intervals for sample effect quantiles $\tau_{\treat(k_j')}$s among treated units using Theorem  \ref{thm:treated_sen},
where 
$\alpha' = \alpha - \Delta_{\textup{M}}(k_{1:J}'; n_\treat, \beta_{1:J}) > 0$.
Then the $\mathcal{I}^{\alpha'}_{\treat(k_j')}$s 
are simultaneous confidence intervals for the quantiles $F_{\sp}^{-1}(\beta_j)$s, in the sense that 
$
    \Pr\{F_{\sp}^{-1}(\beta_j) \in  \mathcal{I}^{\alpha'}_{\treat(k_j')} \textup{ for all } 1 \leq j \leq J \}\geq 1-\alpha. 
$       
\end{corollary}
}

\section{Simulation studies}

\subsection{Comparison between the two improved methods}
\label{Sec: method comparison}

In this subsection, 
we conduct simulations to compare the two improved methods in Sections \ref{sec:effect_treated} and \ref{sec:berger}. 
We consider a CRE with sample size $n = 100$, where half of the units are randomly assigned to treatment and control, respectively. 
We generate the potential outcomes as i.i.d.\ samples from the following model:
\begin{align}\label{eq:DGP}
    Y_i(0) \overset{\text{i.i.d}}{\sim} N(0, \rho^2), \quad Y_i(1) \overset{\text{i.i.d}}{\sim} N(2, 1-\rho^2),
    \quad \text{ where } \rho^2 = 0.1, 0.2,\dots, 0.9. 
\end{align}
Specifically, for each iteration, we generate a new set of potential outcomes and a new treatment assignment vector. 
In this way, we are investigating the average performance of the proposed methods over the randomly generated potential outcomes.

For each of the 9 data generating models in \eqref{eq:DGP}, 
we construct confidence intervals for quantiles of individual treatment effects using the following three methods: 
\begin{enumerate}[label={(\alph*)}, topsep=1ex,itemsep=-0.3ex,partopsep=1ex,parsep=1ex]
    \item the original method in \citet{CDLM21quantile}, denoted by \textsf{M0};
    \item the first improved method in Section \ref{sec:effect_treated}, particularly the one described in Theorem \ref{thm:comb_all}, denoted by \textsf{M1};
    \item the second improved method in Section \ref{sec:berger}, particularly the one described in Remark \ref{rmk:BB_comb} with $k_j'$s chosen based on Remark \ref{rmk:choice_k_prime} with $\gamma = 0.5$, denoted by \textsf{M2}. 
\end{enumerate}

\subsubsection{Comparing the non-informativeness of the confidence intervals}

{\rev
As discussed in \citet{CDLM21quantile}, their method exhibits limitations in its ability to produce informative confidence intervals for lower quantiles. 
Below we assess the performance of the method \textsf{M0}  in \citet{CDLM21quantile} and our two improved methods \textsf{M1} and \textsf{M2}, focusing on the non-informativeness of the confidence intervals for treatment effect quantiles. 
Specifically, we will examine the proportion of non-informative confidence intervals, defined as $(-\infty, \infty)$, over 500 simulations for ten quantiles ($10\%$, $20\%$, \dots, $100\%$) of individual treatment effects. 
Note that both \textsf{M0} and \textsf{M1} provide simultaneous confidence intervals for all quantiles of individual treatment effects. 
For \textsf{M2}, we construct both individual and simultaneous confidence intervals for the selected $10$ quantiles of individual treatment effects. 
For all of these three methods, we use the Stephenson rank sum statistic with $s=6$.
The simulation results for $\rho^2 = 0.5$ are presented in Table \ref{table: non-informativeCI}, and the results for $\rho^2 = 0.1$ and $\rho^2 = 0.9$ are similar to those for $\rho^2 = 0.5$ and are thus omitted here.

\begin{table}[ht]
\centering
\resizebox{\textwidth}{!}{%
\begin{tabular}{llrrrrrrrrrr}
  \hline
  \multicolumn{2}{c}{} & \multicolumn{10}{c}{Quantile} \\ 
   & Method & 10\% & 20\% & 30\% & 40\% & 50\% & 60\% & 70\% & 80\% & 90\% & 100\% \\ 
  \hline
  & \textsf{M0} & 100\% & 100\% & 100\% & 100\% & 100\% & 100\% & 0\% & 0\% & 0\% & 0\% \\ 
  & \textsf{M1} & 100\% & 100\% & 0\% & 0\% & 0\% & 0\% & 0\% & 0\% & 0\% & 0\% \\ 
  & \textsf{M2 (Pointwise)} & 100\% & 100\% & 100\% & 81\% & 0\% & 0\% & 0\% & 0\% & 0\% & 0\% \\ 
  & \textsf{M2 (Simultaneous)} & 100\% & 100\% & 100\% & 100\% & 0\% & 0\% & 0\% & 0\% & 0\% & 0\% \\ 
  \hline
\end{tabular}}%
\caption{Proportions of non-informative confidence intervals across different methods and quantiles.}
\label{table: non-informativeCI}
\end{table}

From Table \ref{table: non-informativeCI}, \textsf{M0} performs poorly for lower quantiles from $10\%$ to $60\%$, always producing non-informative intervals. 
In contrast, 
the intervals from the improved method \textsf{M1} are always informative for quantiles at or above $30\%$, while they are still non-informative for the $10\%$ and $20\%$ quantiles. 
The performance of \textsf{M2} lies between that of \textsf{M0} and \textsf{M1}.
Specifically, the intervals from \textsf{M2} are always informative for quantiles at or above $50\%$, 
and the pointwise intervals from \textsf{M2} are also informative $19\%$ of the 
 time for the $40\%$ quantiles of individual effects. 
These results indicate that our improved methods can substantially reduce the number of non-informative confidence intervals when inferring all quantiles of individual treatment effects. }



\subsubsection{Comparing the magnitude of the confidence bounds}

To further evaluate the performance of the three methods, we next focus on the magnitude of the lower confidence bounds for $50\%, 60\%, \ldots, 90\%$ quantiles of individual treatment effects.
For each of the quantiles and under each data generating model determined by $\rho^2$, 
we calculate the median of the $90\%$ lower confidence limits derived from each method over 500 simulations. 
Intuitively, 
the larger the median value of the lower confidence limits, 
the more powerful a method is. 

\begin{figure}
    \centering
\includegraphics[scale=0.45]{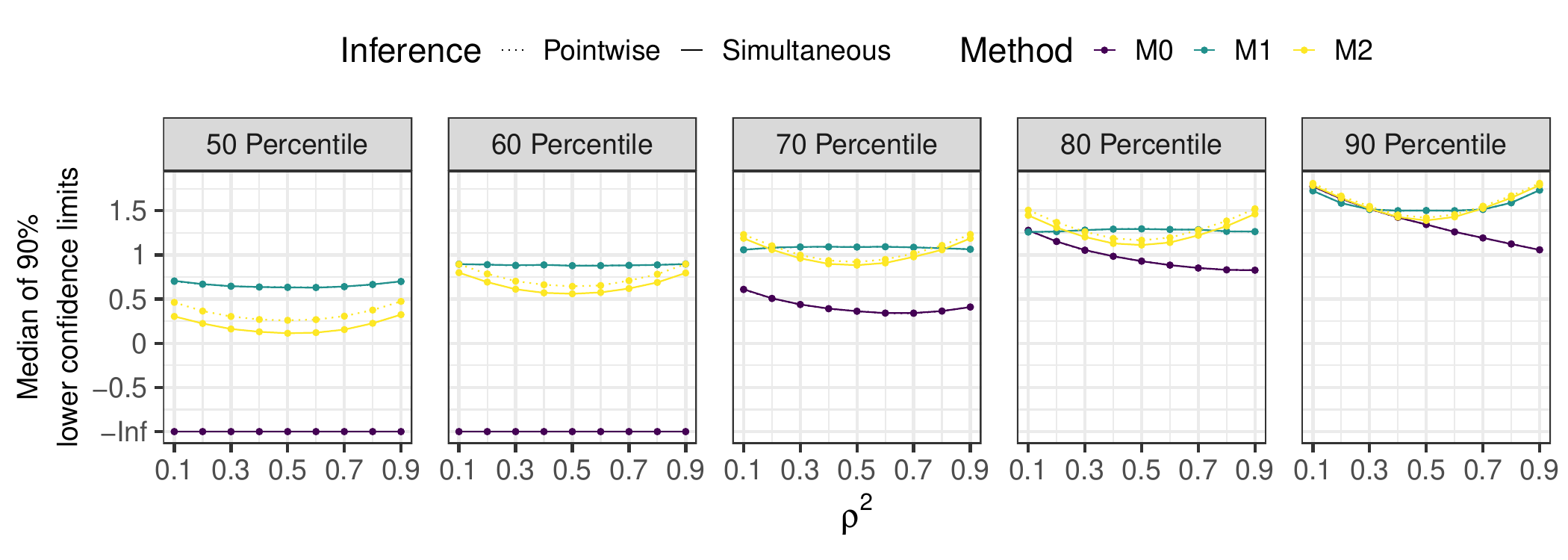}
    \caption{Median of the $90\%$ lower confidence limits of $\tau_{(k)}$s over 500 simulations using different methods with $s=6$ for $k = \lceil 0.5 \times N \rceil, \lceil 0.6 \times N \rceil, \lceil 0.7 \times N \rceil, \lceil 0.8 \times N \rceil, \lceil 0.9 \times N \rceil$ and $N=n=100$.}
    \label{fig: FP_method_comparison}
\end{figure}


We summarize the simulation results in Figure \ref{fig: FP_method_comparison} and observe three consistent trends.
First, \textsf{M1} and \textsf{M2} appear to be superior to \textsf{M0} in general; the gain is more pronounced for smaller quantiles of individual effects.
Second, 
\textsf{M1} outperforms \textsf{M2} when both potential outcomes have similar variances, e.g., $\rho^2 = 0.5$, and when inferring smaller quantiles of individual effects. 
However, \textsf{M2} can outperform \textsf{M1} when the two potential outcomes have rather different variances, especially for inferring larger quantiles of individual effects. 
We briefly defer the explanation of this phenomenon. 
Third, comparing 
the simultaneous and individual lower confidence limits,
\textsf{M2} suffers only a small amount from 
correction due to simultaneous inference. 


We now provide some insights into the performance of the two improved methods in the above simulation. 
The first improved method \textsf{M1} combines inference for treatment effects among treated and control units, 
while the second improved method \textsf{M2} tries to use only effects among treated or control units to infer effects for all units. 
It is worth noting that, when inferring treatment effects among treated units, we are essentially focusing on the comparison of right tails of the two potential outcome distributions. 
When inferring treatment effects among control units, we will switch treatment labels and change outcome signs, under which we are then focusing on the comparison of the left tails of the two potential outcome distributions. 
As an example, 
Figure \ref{fig: boxplot} shows the empirical distributions of potential outcomes based on simulated data from the two data generating models in \eqref{eq:DGP} with $\rho^2 = 0.5$ and $\rho^2 = 0.9$, respectively.


\begin{figure}
    \centering
\includegraphics[scale=0.4]{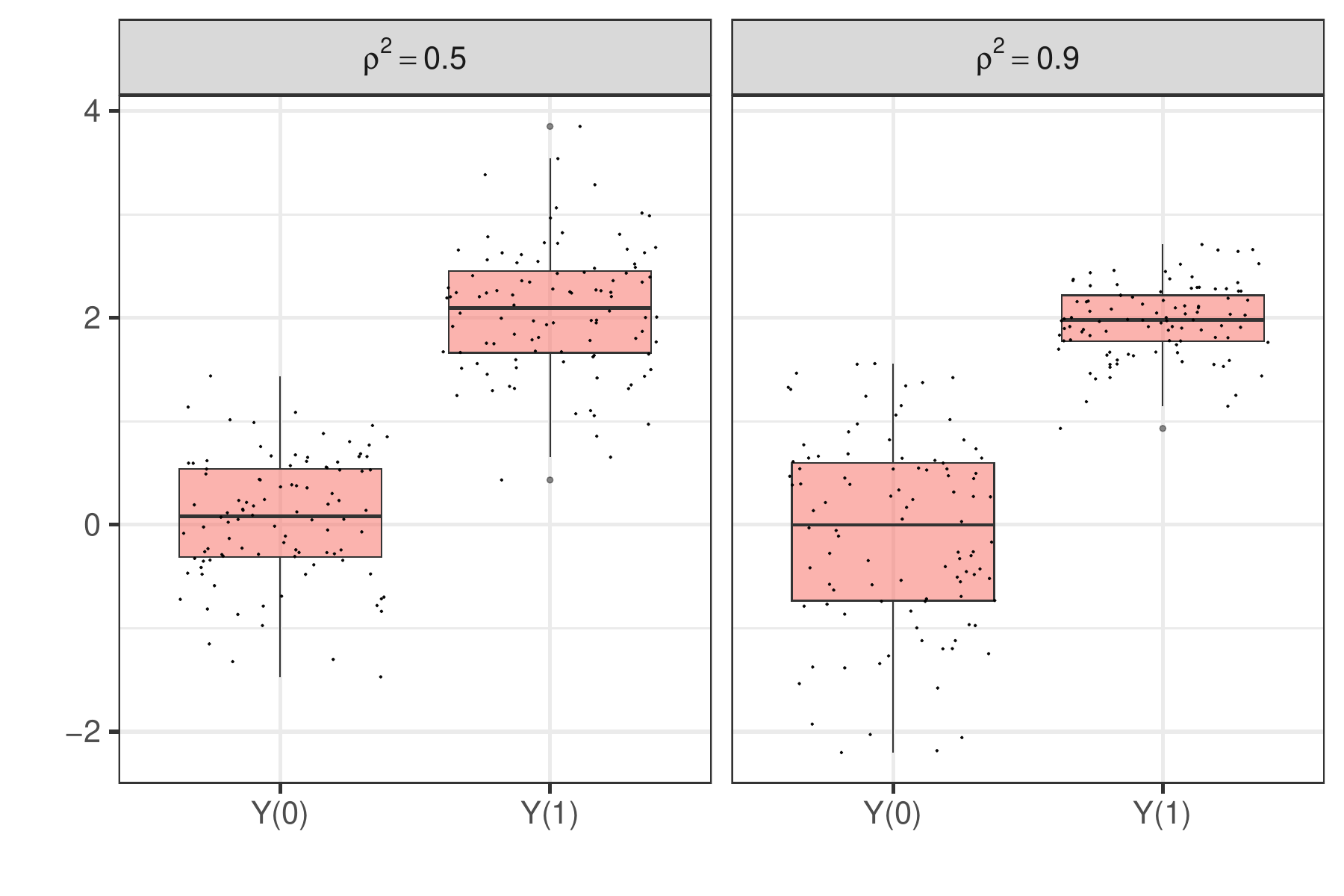}
    \caption{Boxplots of potential outcome distributions generated by $Y_i(0) \overset{\text{i.i.d}}{\sim} N(0, \rho^2)$ and $  Y_i(1) \overset{\text{i.i.d}}{\sim} N(2, 1-\rho^2)$ with $\rho^2 = 0.5$ (left panel) and $\rho^2 = 0.9$ (right panel), and $N=n=100$.}
    \label{fig: boxplot}
\end{figure}

First, when 
$\rho^2 = 0.5$, 
the comparison of the tails of the two potential outcomes distributions gives similar results using either the original data or the data with switched treatment labels and flipped outcomes. 
Specifically, with the original data, we are comparing the right tails of $Y(1) \sim \mathcal{N}(2, 0.5)$ to that of $Y(0) \sim \mathcal{N}(0, 0.5)$. 
With the transformed data, we are comparing the right tails of $-Y(0) \sim \mathcal{N}(0, 0.5)$ to that of $-Y(1) \sim \mathcal{N}(-2, 0.5)$. 
Not surprisingly, both comparisons yield similar results, or more precisely lead to similar prediction intervals for quantiles of treatment effects among treated and control units. 
Consequently, in this scenario, the first improved method \textsf{M1}, which directly integrates inference results for effects among treated and control units, tends to be more powerful.


Second, when $\rho^2 = 0.9$, 
the comparison of the tails of the two potential outcomes distributions gives rather different results when using the original data and the data with switched treatment labels and flipped outcomes.
Specifically, with the original data, we are comparing the right tails of $Y(1) \sim \mathcal{N}(2, 0.1)$ to that of $Y(0) \sim \mathcal{N}(0, 0.9)$. 
With the transformed data, we are comparing the right tails of $-Y(0) \sim \mathcal{N}(0, 0.9)$ to that of $-Y(1) \sim \mathcal{N}(-2, 0.1)$. 
Not surprisingly, with the transformed data, the ``treatment'' potential outcomes exhibit heavier right tails compared to the ``control'' potential outcomes. 
Consequently, the prediction intervals for quantiles of treatment effects among control units will be more informative than that for treated units.
Therefore, the second improved method \textsf{M2}, which tries to utilize only effects among treated or control units to infer effects for all units, can be  more preferable in this scenario. 

\subsection{Tuning parameter $\gamma$ for the second improved method}

We conduct simulation to examine the sensitivity of the second improved method in Section \ref{sec:berger}, denoted again by \textsf{M2}, to the tuning parameter $\gamma$. 
Similar to the simulation 
in Section \ref{Sec: method comparison}, we apply \textsf{M2} 
as described in
Remark \ref{rmk:BB_comb} and employ the Stephenson rank sum statistic with $s = 6$.
We use the same data generating model as in Section \ref{Sec: method comparison} and consider a series of $\gamma$ values: $0, 0.1, \dots, 0.9$. For each $\gamma$ value and each quantile of individual effects, we report the median of the $90\%$ lower confidence limits across $500$ simulations.
The results are summarized in Figure \ref{fig: gamma}.

\begin{figure}
    \centering
\includegraphics[scale=0.45]{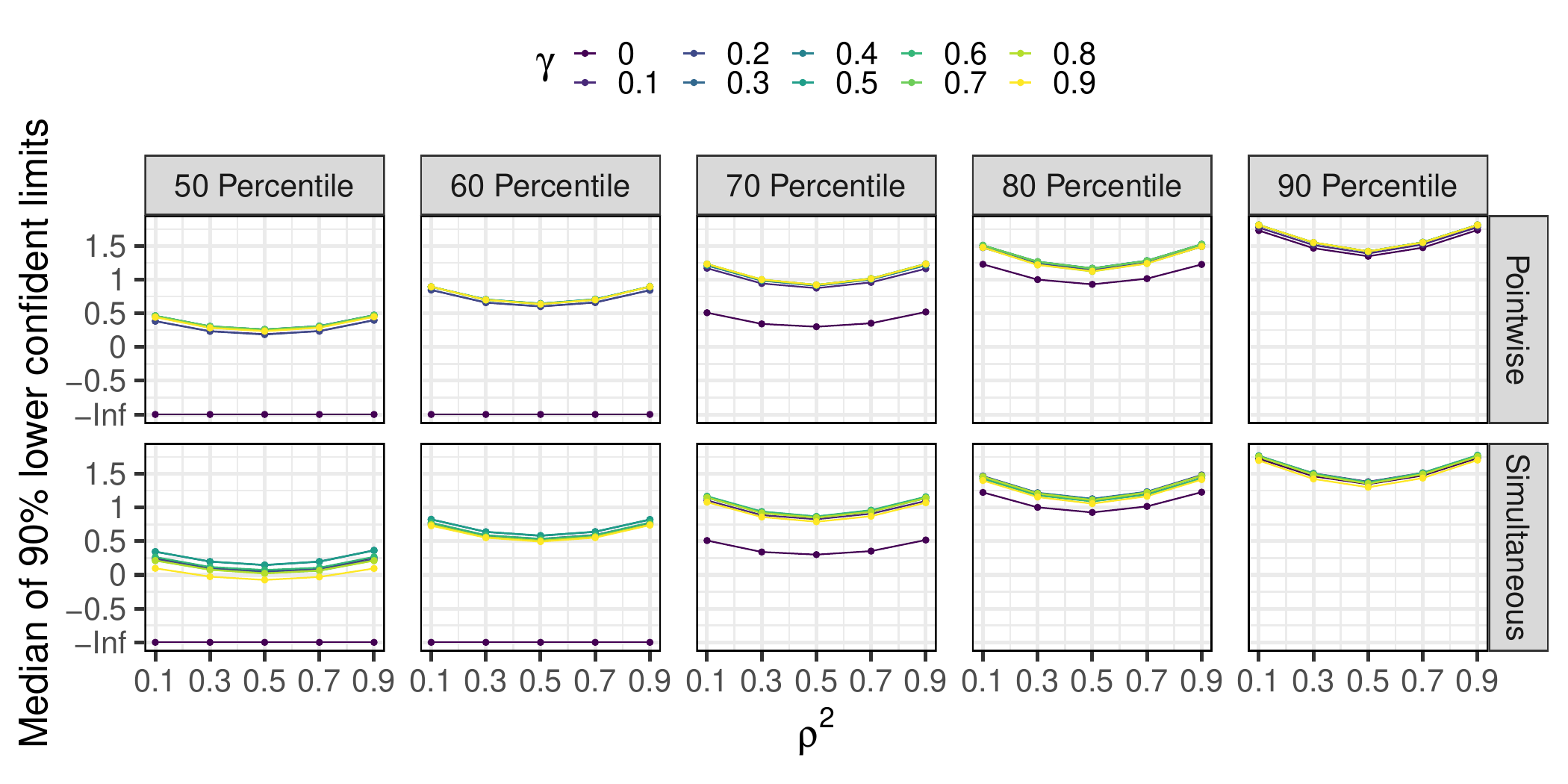}
    \caption{Median of the $90\%$ lower confidence limits derived from the improved method in Section \ref{sec:berger} with different values of $\gamma$ across 500 simulations.}
    \label{fig: gamma}
\end{figure}

First, \textsf{M2} is generally robust to different choices of $\gamma > 0$ for both simultaneous and pointwise inference. 
Second, when $\gamma = 0$, \textsf{M2} is significantly less powerful. Specifically, for lower effect quantiles such as the 50th and 60th percentiles, \textsf{M2} with 
$\gamma = 0$ returns the entire real line as confidence intervals across all data-generating processes. On the contrary, \textsf{M2} with any choice of $\gamma > 0$ can detect positive median treatment effects under all data generating processes. 
Third, for simultaneous inference, large $\gamma$ values, such as $\gamma = 0.9$, perform slightly worse compared to other positive $\gamma$ values. In general, the performance of \textsf{M2} is stable and robust to positive $\gamma$ values across all data generating processes.

\section{Proofs of theorems and corollaries}

\subsection{Proof of Theorem \ref{thm:treated}}


\begin{proof}[\bf Proof of Theorem \ref{thm:treated}(a)]
First, we prove that $p_{k,c}^{n, \treat}$ is a valid $p$-value for testing $H_{k,c}^{n, \treat}$. 
Below we prove a stronger version, where we allow $c$ to be an arbitrary random variable, not necessarily a constant. 
If $H_{k,c}^{n, \treat}$ holds, then $\bs{\tau} \in \mathcal{H}^{n,\treat}_{k,c}$. 
Consequently $\inf_{\bs{\delta}\in \mathcal{H}^{n, \treat}_{k,c}} 
            t(\bs{Z}, \bs{Y} - \bs{Z} \circ \bs{\delta}) \leq  t(\bs{Z}, \bs{Y} - \bs{Z} \circ \bs{\tau})=t(\bs{Z}, \bs{Y}(0))$.
Recall that $G(x) = \Pr\{ t(\bs{Z}, \bs{y}) \ge x \}$ is the tail probability of a distribution free test statistic $ t(\bs{Z}, \bs{y} )$ for any $\bs{y}\in \mathbb{R}^n$. Because $G(\cdot)$ is monotone decreasing function,
we can know that, when $H_{k,c}^{n, \treat}$ holds, 
\begin{equation}\label{ineq: deterministic bound for pval} 
p_{k,c}^{n, \treat} \equiv G\left(\inf_{\bs{\delta}\in \mathcal{H}^{n, \treat}_{k,c}} 
            t(\bs{Z}, \bs{Y} - \bs{Z} \circ \bs{\delta})
            \right) \geq G
            \left(
            t(\bs{Z}, \bs{Y}(0))
            \right). 
 \end{equation}           
 Therefore, 
 for any $\alpha \in (0,1)$, any $0\le k \le n_\treat$ and any  random variable $c$, we have 
\begin{align*} 
    \Pr(p_{k,c}^{n, \treat} \le \alpha \text{ and }  H_{k,c}^{n, \treat} \text{ holds} ) &
    \leq
    \Pr\left\{G
    \left(
    t(\bs{Z}, \bs{Y}(0))
    \right) \le \alpha \right\}
    \leq \alpha.
\end{align*}
The last inequality holds because $G\left(
            t(\bs{Z}, \bs{Y}(0))\right)$ is the tail
probability of $t(\bs{Z}, \bs{Y}(0))$ evaluated at its realized value, which must be stochastically larger than or equal to $\Unif(0,1)$; see, e.g, \citet[][Lemma A4]{CDLM21quantile}.       Therefore, $p_{k,c}^{n, \treat}$ is a valid $p$-value for testing $H_{k,c}^{n, \treat}$.

Second, we prove the equivalence forms of $p_{k,c}^{n, \treat}$ in \eqref{eq:p1}. 
By definition, $  
\bs{\tau} \in \mathcal{H}^{n}_{n_\control+k,c} \Longleftrightarrow n(c) \le n_\treat-k  \Longrightarrow n_{\treat}(c) \le n_\treat-k \Longleftrightarrow \bs{\tau} \in \mathcal{H}^{n,\treat}_{k,c}$. Hence,  $\mathcal{H}^{n}_{n_\control+k,c} \subset \mathcal{H}^{n,\treat}_{k,c}$ and thus $ \inf_{\bs{\delta}\in \mathcal{H}^{n, \treat}_{k,c}} 
    t(\bs{Z}, \bs{Y} - \bs{Z} \circ \bs{\delta}) \leq  \inf_{\bs{\delta}\in \mathcal{H}^{n}_{n_\control+k, c}} 
    t(\bs{Z}, \bs{Y} - \bs{Z} \circ \bs{\delta})$. For any $\bs{\delta} \in  \mathcal{H}^{n, \treat}_{k,c}$, let $\mathcal{I}_{c} = \{i: Z_i = 1, \delta_i > c, 1\leq i\leq n \}$. By the definition of $\mathcal{H}^{n, \treat}_{k,c}$, we have $|\mathcal{I}_{c}|\leq n_\treat-k$. Define a vector $\bs{\theta} = (\theta_1, \theta_2, \dots, \theta_n)^{\top}$ with $\theta_i = \infty$ if $i\in \mathcal{I}_{c} $ and $c$ otherwise. For any units with $Z_i = 1$, we have $\delta_i \leq \theta_i$ and thus $Y_i - Z_i\delta_i \geq Y_i - Z_i\theta_i$. For any units with $Z_i = 0$, we have $Y_i - Z_i\delta_i = Y_i = Y_i - Z_i\theta_i$. Using the fact that $t(\cdot, \cdot)$ is effect increasing \citep[][Proposition 2]{CDLM21quantile}, we further have $ 
    t(\bs{Z}, \bs{Y} - \bs{Z} \circ \bs{\delta}) \geq  
    t(\bs{Z}, \bs{Y} - \bs{Z} \circ \bs{\theta})$. Note that $\bs{\theta} \in  \mathcal{H}^{n}_{n_\control+k,c}$ by construction. Thus, we have shown that for any $\bs{\delta} \in  \mathcal{H}^{n, \treat}_{k,c}$, there exists a $\bs{\theta} \in  \mathcal{H}^{n}_{n_\control+k,c}$ such that $ 
    t(\bs{Z}, \bs{Y} - \bs{Z} \circ \bs{\delta}) \geq  
    t(\bs{Z}, \bs{Y} - \bs{Z} \circ \bs{\theta})$.
    Therefore,
    $ \inf_{\bs{\delta}\in \mathcal{H}^{n, \treat}_{k,c}} 
    t(\bs{Z}, \bs{Y} - \bs{Z} \circ \bs{\delta}) \geq  \inf_{\bs{\delta}\in \mathcal{H}^{n}_{n_\control+k, c}} 
    t(\bs{Z}, \bs{Y} - \bs{Z} \circ \bs{\delta})$. 
    From the above, we have  $ \inf_{\bs{\delta}\in \mathcal{H}^{n, \treat}_{k,c}} 
    t(\bs{Z}, \bs{Y} - \bs{Z} \circ \bs{\delta}) =  \inf_{\bs{\delta}\in \mathcal{H}^{n}_{n_\control+k, c}} 
    t(\bs{Z}, \bs{Y} - \bs{Z} \circ \bs{\delta})$ and thus $p_{k,c}^{n, \treat} =
    p_{n_0+k, c}^n$.

    

    From the above, Theorem \ref{thm:treated}(a) holds.
\end{proof}
\begin{proof}[\bf Proof of Theorem \ref{thm:treated}(b)]
For any $1 \le k\le n_\treat$ and $\alpha\in (0,1)$, 
\begin{align*}
    \Pr\{ \tau_{\treat(k)} \notin \mathcal{I}_{\treat(k)}^{\alpha} \} 
    & = 
    \Pr\{ p_{k,\tau_{\treat(k)}}^{n, \treat} \le \alpha  \}
    = 
    \Pr\{ p_{k,\tau_{\treat(k)}}^{n, \treat} \le \alpha,  \tau_{\treat(k)} \le \tau_{\treat(k)} \}
    \\
    & = \Pr\{ p_{k,\tau_{\treat(k)}}^{n, \treat} \le \alpha
    \text{ and }  H_{k,\tau_{\treat(k)}}^{n, \treat} \text{ holds} 
    \} \le \alpha, 
\end{align*}
where the last equality holds from Theorem \ref{thm:treated}(a). Thus, 
$\mathcal{I}_{\treat(k)}^{\alpha} \equiv \{c\in \mathbb{R}: p_{k,c}^{n, \treat} > \alpha \}$ is a $1-\alpha$ prediction set for $\tau_{\treat(k)}$.
From the equivalent forms in \eqref{eq:p1} and the monotonicity of $p_{k,c}^n$ established in \citet[][Theorem 4]{CDLM21quantile}, 
$p_{k,c}^{n, \treat}$ is increasing in $c$. 
This implies that the prediction set must be of form $(c, \infty)$ or $[c, \infty)$ with $c = \inf\{c: p_{k,c}^{n, \treat} > \alpha\}$.
From the above, Theorem \ref{thm:treated}(b) holds. 
\end{proof}


\begin{proof}[\bf Proof of Theorem \ref{thm:treated}(c)]
We first prove that $ \bigcap_{k=1}^{n_\treat} 
\big\{ 
\tau_{\treat(k)} \in \mathcal{I}_{\treat(k)}^{ \alpha}
\big\}    \Longleftrightarrow \bs{\tau} \in \bigcap_{k, c: p_{k,c}^{n, \treat}\leq\alpha} \mathcal{H}^{n,\treat\complement}_{k,c}$.       
Suppose $ \tau_{\treat(k)} \in \mathcal{I}_{\treat(k)}^{ \alpha}$ for any $1\leq k \leq n_\treat$. By the definition of $\mathcal{I}_{\treat(k)}^{ \alpha}$, we then have $p_{k,\tau_{\treat(k)}}^{n, \treat}>\alpha$ for all $1\leq k \leq n_\treat$. Because $p_{k,c}^{n, \treat}$ is increasing in $c$, 
for any $1\leq k \leq n_\treat$ and $c \in \mathbbm{R}$ such that $p_{k,c}^{n, \treat}\leq\alpha$, 
we have $ \tau_{\treat(k)}>c$, or equivalently $\bs{\tau} \in \mathcal{H}^{n,\treat \complement}_{k,c}$, where 
$\mathcal{H}^{n,\treat \complement}_{k,c}$ denotes the complement of the set $\mathcal{H}^{n,\treat}_{k,c}$. 
This implies that $\bs{\tau} \in \bigcap_{k, c: p_{k,c}^{n, \treat}\leq\alpha} (\mathcal{H}^{n,\treat}_{k,c})^{\complement}$. Now suppose $ \tau_{\treat(k)} \not\in \mathcal{I}_{\treat(k)}^{\alpha}$ for some $1\leq k \leq n_\treat$, which implies $p_{k,\tau_{\treat(k)}}^{n, \treat}\leq\alpha$. By definition, $\bs{\tau} \in \mathcal{H}^{n,\treat}_{k,\tau_{\treat(k)}}\subset \bigcup_{k, c: p_{k,c}^{n, \treat}\leq\alpha} \mathcal{H}^{n,\treat}_{k,c}$. Thus, $\bs{\tau} \not\in \bigcap_{k, c: p_{k,c}^{n, \treat}\leq\alpha} (\mathcal{H}^{n,\treat}_{k,c})^{\complement}$. From the above, 
we can know that $ \bigcap_{k=1}^{n_\treat} 
\big\{ 
\tau_{\treat(k)} \in \mathcal{I}_{\treat(k)}^{ \alpha}
\big\}    \Longleftrightarrow \bs{\tau} \in \bigcap_{k, c: p_{k,c}^{n, \treat}\leq\alpha} \mathcal{H}^{n,\treat\complement}_{k,c}$.

We then prove that $\bigcap_{k, c: p_{k,c}^{n, \treat}\leq\alpha} \mathcal{H}^{n,\treat\complement}_{k,c}$ covers the true treatment effect vector
with probability at least $1-\alpha$.
Suppose that $ \bs{\tau} \in  \bigcup_{k,c: p_{k,c}^{n, \treat}\leq\alpha} \mathcal{H}^{n,\treat}_{k,c}$. Then there exists $(k,c)$ such that 
$p_{k,c}^{n, \treat}\leq\alpha$ and $\bs{\tau} \in   \mathcal{H}^{n,\treat}_{k,c}$, 
which further implies that $ \inf_{\bs{\delta}\in \mathcal{H}^{n, \treat}_{k,c}} 
t(\bs{Z}, \bs{Y} - \bs{Z} \circ \bs{\delta})\leq  t(\bs{Z}, \bs{Y} - \bs{Z} \circ \bs{\tau})=t(\bs{Z}, \bs{Y}(0))$. Because the tail probability function $G(\cdot)$ is decreasing, we further have that
$G
\left(
t(\bs{Z}, \bs{Y}(0))
\right) \leq G
\left(
\inf_{\bs{\delta}\in \mathcal{H}^{n, \treat}_{k,c}} 
t(\bs{Z}, \bs{Y} - \bs{Z} \circ \bs{\delta})
\right) = p_{k,c}^{n, \treat}\leq\alpha$. Therefore,
$$ 
\Pr\Big( 
\bs{\tau} \in  \bigcap_{k,c: p_{k,c}^{n, \treat}\leq\alpha} \mathcal{H}^{n,\treat\complement}_{k,c} \Big) 
= 
1-  \Pr\Big( 
\bs{\tau} \in  \bigcup_{k,c: p_{k,c}^{n, \treat}\leq\alpha} \mathcal{H}^{n,\treat}_{k,c} \Big)\geq
1-\Pr\left\{  G
\left(
t(\bs{Z}, \bs{Y}(0))
\right)\leq \alpha \right\} 
\geq 1-\alpha,
$$
where the last inequality holds due to the validity of the Fisher randomization test; see also the proof of Theorem \ref{thm:treated}(a).   

From the above, Theorem \ref{thm:treated}(c) holds.
\end{proof}  

\subsection{Proof of Theorem \ref{thm:comb_all}}

\begin{proof}[\bf Proof of Theorem \ref{thm:comb_all}]
From the condition of Theorem \ref{thm:comb_all} and by Bonferroni's inequality,  
\begin{align*}
    \Pr\left( 
    \bigcap_{k=1}^{n_\treat} 
    \big\{
    \tau_{\treat(k)} \in \mathcal{I}_{\treat(k)}^{\alpha}
    \big\}
    \ 
    \bigcap
    \ 
\bigcap_{j=1}^{n_\control} 
    \big\{ 
    \tau_{\control(j)} \in \mathcal{I}_{\control(j)}^{\alpha}
    \big\}
    \right)
    \ge 1-2\alpha, 
\end{align*}   
recalling that $\tau_{\treat(k)}$s and $\tau_{\control(j)}$s are the sorted individual treatment effects among treated and control units, respectively. 
Below we show that, as long as the two events $ \bigcap_{k=1}^{n_\treat} 
\big\{ 
\tau_{\treat(k)} \in \mathcal{I}_{\treat(k)}^{\alpha}
\big\}$ and $   \bigcap_{k=1}^{n_\control} 
\big\{ 
\tau_{\control(k)} \in \mathcal{I}_{\control(k)}^{\alpha}
\big\}$
hold, 
the event $ \bigcap_{k=1}^{n} 
\big\{ 
\tau_{(k)} \in \mathcal{I}_{(k)}^\alpha
\big\}$ must occur.

By definition, 
$ \{\mathcal{I}_{(1)}^{\alpha}, \mathcal{I}_{(2)}^{\alpha}, \dots, \mathcal{I}_{(n)}^{\alpha}\}$ is a permutation of 
$\{\mathcal{I}_{\treat(1)}^{\alpha}, \mathcal{I}_{\treat(2)}^{\alpha},\dots, \mathcal{I}_{\treat(n_\treat)}^{\alpha},\mathcal{I}_{\control(1)}^{\alpha}, \mathcal{I}_{\control(2)}^{\alpha},\dots,$ $ \mathcal{I}_{\control(n_\control)}^{\alpha}\}$, 
and 
$\{\tau_{(1)}, \tau_{(2)}, \dots, \tau_{(n)}\}$ is a permutation of
$ 
\{\tau_{\treat(1)}, \tau_{\treat(2)}, \dots, \tau_{\treat(n_\treat)},
\tau_{\control(1)}, \tau_{\control(2)}, \dots, \tau_{\control(n_\control)}\}.
$  
Suppose that the two events $\bigcap_{k=1}^{n_\treat} 
    \big\{ 
    \tau_{\treat(k)} \in \mathcal{I}_{\treat(k)}^{\alpha}
    \big\}
    $ and $  
\bigcap_{j=1}^{n_\control} 
    \big\{ 
    \tau_{\control(j)} \in \mathcal{I}_{\control(j)}^{\alpha}
    \big\}$
hold. 
We must have $\tau_{(k)} \in \mathcal{I}^\alpha_{(\sigma_k)}$ for $1\leq k\leq n$ for some permutation $\{\sigma_1, \sigma_2, \dots, \sigma_n\}$ of the index set $\{1,2,\dots,n\}$.
Because $\mathcal{I}^\alpha_{(k)}$s are one-sided intervals of form $(c, \infty)$ or $[c, \infty)$,  
for any $1\leq j \leq k \leq n$, 
we must have $\tau_{(k)} \in \mathcal{I}^\alpha_{(\sigma_j)}$. 
Thus, for any given $1\leq k \leq  n$, we have
$\tau_{(k)} \in \bigcap_{j=1}^{k}\mathcal{I}^\alpha_{(\sigma_j)} \subseteq \mathcal{I}^\alpha_{(k)}$, 
where the last inclusion holds because the set $\{\sigma_1, \sigma_2, \ldots, \sigma_k\}$ contains at least one element in $\{k, k+1, \ldots, n\}$ and the set $\mathcal{I}^\alpha_{(j)}$ becomes smaller as $j$ increases. 
From the above, 
$ \bigcap_{k=1}^{n} 
\big\{ 
\tau_{(k)} \in \mathcal{I}_{(k)}^\alpha
\big\}$ occurs when the two events  $\bigcap_{k=1}^{n_\treat} 
    \big\{ 
    \tau_{\treat(k)} \in \mathcal{I}_{\treat(k)}^{\alpha}
    \big\}$ and $   \bigcap_{k=1}^{n_\control} 
    \big\{ 
    \tau_{\control(k)} \in \mathcal{I}_{\control(k)}^{\alpha}
    \big\}$
occur. 
Therefore, 
\begin{align*}
\Pr\left( 
\bigcap_{k=1}^{n} 
\big\{ 
\tau_{(k)} \in \mathcal{I}_{(k)}^\alpha
\big\}
\right) \geq
    \Pr\left( 
\bigcap_{k=1}^{n_\treat} 
    \big\{ 
    \tau_{\treat(k)} \in \mathcal{I}_{\treat(k)}^{\alpha}
    \big\}
\ 
    \bigcap
\ 
\bigcap_{j=1}^{n_\control} 
    \big\{ 
    \tau_{\control(j)} \in \mathcal{I}_{\control(j)}^{\alpha}
    \big\}
    \right)
    \ge 1-2\alpha, 
\end{align*}   
i.e., Theorem \ref{thm:comb_all} holds. 
\end{proof}

\subsection{Proof of Theorem \ref{thm:BB_pval}}

\begin{proof}[\bf Proof of Theorem \ref{thm:BB_pval}]
Under the CRE, treated units are a simple random sample of size $n_\treat$ from a total of $n$ experimental units.
Therefore, 
$n_{\treat}(c) 
= 
\sum_{i=1}^n Z_i \I(\tau_i > c) 
$
follows 
a Hypergeometric distribution with parameters $(n, n(c), n_\treat)$.
Under the null hypothesis $H^n_{k,c}$ in \eqref{eq:Hnkc}, we have $n(c) \leq n-k$, which implies that 
 $\hyper(k) \sim \text{HG}(n, n-k, n_\treat)$ is stochastically larger than $n_{\treat}(c)$ and thus
$\Pr\left(n_{\treat}(c) > n_{\treat}-k'\right)\leq \Pr( 
    \mathfrak{H}(k)> n_{\treat} - k'
    )$ for all $0 \leq k' \leq n_\treat$. Thus, under $H^n_{k,c}$, for any $0\le k \le n$, $0\le k' \le n_{\treat}$ and any $c\in \mathbb{R}$, we have
\begin{align*}
\Pr\left(\overline{p}_{k, c}^n(k')\leq \alpha \right)
     & = \Pr\left(p^{n,\treat}_{k', c} + \Pr( 
    \mathfrak{H}(k) > n_{\treat} - k'
    )\leq \alpha \right)\nonumber\\
   & = \Pr\left(p^{n,\treat}_{k', c} \leq \alpha-\Pr( 
    \mathfrak{H}(k) > n_{\treat} - k'
    ),  n_{\treat}(c) \leq n_{\treat}-k'\right)
    \\&\quad \ 
    +\Pr\left(p^{n,\treat}_{k', c} \leq \alpha-\Pr( 
    \mathfrak{H}(k) > n_{\treat} - k'
    ),  n_{\treat}(c) > n_{\treat}-k'\right)\nonumber\\
  & \leq \alpha-\Pr( 
    \mathfrak{H}(k)> n_{\treat} - k'
    )
  +\Pr\left(n_{\treat}(c) > n_{\treat}-k'\right)\nonumber \\
  & \leq \alpha
\end{align*}  
where the first inequality holds because $p^{n,\treat}_{k', c}$ is a valid $p$-value for testing $H_{k',c}^{n, \treat}$ in \eqref{eq:Hn1kc} as shown in Theorem 1(a). Therefore, $\overline{p}_{k, c}^n(k')$ is a valid $p$-value for testing $H^n_{k,c}$ in \eqref{eq:Hnkc} for any prespecified integer $0\le k'\le n_{\treat}$, 
i.e., Theorem \ref{thm:BB_pval} holds. 
\end{proof}

\subsection{Proof of Theorem \ref{thm:sing_interval_BB}}
\begin{proof}[\bf Proof of Theorem \ref{thm:sing_interval_BB}]
    The proof of Theorem \ref{thm:sing_interval_BB} follows immediately from the Lehmann-style test inversion, and is thus omitted.
\end{proof}




\subsection{Proof of Theorem \ref{thm:BB_simulCI}}
\begin{proof}[\bf Proof of Theorem \ref{thm:BB_simulCI}]
By the definition of $\mathcal{I}_{\treat(k_j')}^{\alpha'}$, we have
\begin{align}
& \quad \  \Pr\left(\bigcap_{j=1}^{J}\left\{\tau_{(k_j)} \in  \mathcal{I}_{\treat(k_j')}^{\alpha'}\right\}  \right)
\nonumber
\\
&=\Pr\left(\bigcap_{j=1}^{J}\left\{\tau_{(k_j)} \in \{c: p^{n,\treat}_{k_j', c} > \alpha'\}\right\}\right)
=\Pr\left(\bigcap_{j=1}^{J}\left\{p^{n,\treat}_{k_j', \tau_{(k_j)} } > \alpha'\right\}\right)
=1-\Pr\left(\bigcup\limits_{j=1}^{J}\left\{p^{n,\treat}_{k_j', \tau_{(k_j)} } \leq \alpha'\right\}\right)\nonumber\\
&=1-\Pr\left(\bigcup\limits_{j=1}^{J}\left\{p^{n,\treat}_{k_j', \tau_{(k_j)} } \leq \alpha'\right\}, \bigcap_{j=1}^{J}H^{n,\treat}_{k_j', \tau_{(k_j)} }
\right)
-\Pr\left(\bigcup\limits_{j=1}^{J}\left\{p^{n,\treat}_{k_j', \tau_{(k_j)} } \leq \alpha'\right\}, 
\left\{\bigcap_{j=1}^{J}H^{n,\treat}_{k_j', \tau_{(k_j)} }\right\}^{\complement}
\right) \nonumber\\
&\geq 1-\Pr\left(\bigcup\limits_{j=1}^{J}\left\{p^{n,\treat}_{k_j', \tau_{(k_j)} } \leq \alpha'\right\}, \bigcap_{j=1}^{J}H^{n,\treat}_{k_j', \tau_{(k_j)} }
\right)-\Pr\left(
\left\{\bigcap_{j=1}^{J}H^{n,\treat}_{k_j', \tau_{(k_j)} }\right\}^{\complement}
\right)\label{thm5: bound}
\end{align}
Below we bound the two probabilities in \eqref{thm5: bound}, respectively.

First, suppose the events $\bigcup\limits_{j=1}^{J}\left\{p^{n,\treat}_{k_j', \tau_{(k_j)} } \leq \alpha'\right\}$ 
and $\bigcap_{j=1}^{J}H^{n,\treat}_{k_j', \tau_{(k_j)} }$ occur. 
From \eqref{ineq: deterministic bound for pval}, we can know that 
$ G\left(t(\bs{Z}, \bs{Y}(0))
\right)\leq 
p^{n,\treat}_{k_j', \tau_{(k_j)}}
$ for all $1\le j \le J$. 
This immediately implies that $ G\left(t(\bs{Z}, \bs{Y}(0))
\right)\leq 
\min_{1\le j \le J}p^{n,\treat}_{k_j', \tau_{(k_j)}} \le \alpha'. 
$
Consequently,  we have
\begin{align*}
&\Pr\left(\bigcup\limits_{j=1}^{J}\left\{p^{n,\treat}_{k_j', \tau_{(k_j)} } \leq \alpha'\right\}, 
\bigcap_{j=1}^{J}H^{n,\treat}_{k_j', \tau_{(k_j)} }
\right)  \leq 
\Pr\left\{ 
    G\left(t(\bs{Z}, \bs{Y}(0))\right)\leq \alpha'
\right\}\leq 
\alpha',
\end{align*}
where the last inequality holds due to the validity of Fisher randomization test; see also the proof of Theorem \ref{thm:treated}(a).  

We then bound the probability that event $\bigcap_{j=1}^{J}H^{n,\treat}_{k_j', \tau_{(k_j)} }$ fails.
Let $\phi(\cdot)$ be a permutation of $\{1, 2, \dots, n\}$ such that $(\rank_{1}(\bs{\tau}), \rank_{2}(\bs{\tau}),\ldots,\rank_{n}(\bs{\tau}))=(\phi(1),\phi(2),...,\phi(n))$, where $\rank_{i}(\bs{\tau})$ denotes the rank of the
$i$th coordinate of $\bs{\tau}$.
Let $\tau_{(0)} = -\infty$.
Define $w_j = \sum_{i=1}^{n} Z_{i} \mathbbm{1}\left\{ \tau_{(k_j)} < \tau_i \leq \tau_{(k_{j+1})}\right\}$ and $\tilde w_j = \sum_{i=1}^{n} Z_{i} \mathbbm{1}\left\{ k_j < \phi(i) \leq k_{j+1}\right\}$ for $0 \leq j \leq J$, where $k_0 = 0$ and $k_{J+1}=n$.  
By definition, 
$ n_{\treat}(\tau_{(k_j)}) = \sum_{i=1}^n Z_i \I(\tau_i > \tau_{(k_j)}) = \sum_{i=j}^{J}w_i$.
Since we break ties using index ordering, $\I\{\tau_{(k_j)}<\tau_{\left(\phi(i)\right)}\} \leq \I\{k_j<\phi(i)\}$, where the equality is achieved when there are no ties of individual treatment effects, i.e., $\tau_{(1)}< \tau_{(2)}<\dots<\tau_{(n)}$. 
Thus, for each $j$, $\sum_{i=j}^{J} w_{i}=\sum_{i=1}^n Z_i \I( \tau_{(k_j)}<\tau_i) = \sum_{i=1}^n Z_i\I\{\tau_{(k_j)}<\tau_{\left(\phi(i)\right)}\} \leq \sum_{i=1}^n Z_i\I\{k_j<\phi(i)\}=\sum_{i=j}^{J}\tilde w_{i}$.
Consequently, $ n_{\treat}(\tau_{(k_j)})\leq \sum_{i=j}^{J}\tilde w_{i}$ for each $j$, which further implies that
\begin{align}\label{eq:bound_MHG}
\Pr\left(
\left\{\bigcap_{j=1}^{J}H^{n,\treat}_{k_j', \tau_{(k_j)} }\right\}^{\complement}
\right)
& =\Pr\left(
\bigcup\limits_{j=1}^{J}\left\{n_{\treat}( \tau_{(k_j)})> n_{\treat} - k_j' \right\}
\right) 
\leq \Pr\left(
\bigcup\limits_{j=1}^{J}\left\{\sum_{i=j}^{J}\tilde w_i> n_{\treat} - k_j' \right\}
\right). 
\end{align}
Note that the treatment assignments $Z_i$s can be equivalently viewed as the indicators for simple random samples of size $n_\treat$. 
By definition, for each $j$, $\tilde w_j$ counts the number of units from the set $\{i: k_j < \phi(i) \le k_{j+1}\}$, which has size $k_{j+1} - k_j$. 
It is then not difficult to see that  $(n_{\treat}-\sum_{j=1}^J \tilde w_j, \tilde w_1, \ldots, \tilde w_{J-1}, \tilde w_J) \sim \textup{MHG}(
    k_1, 
    k_2-k_1, 
    \ldots, 
    k_J - k_{J-1}, 
    n-k_J; n_{\treat}
    )$
follows a multivariate Hypergeometric distribution. 
By definition, the last term in \eqref{eq:bound_MHG} is equal to 
$\Delta_{\textup{H}}(k_{1:J}'; n, n_{\treat}, k_{1:J})$. 

From the above, we can know that  
$$
\Pr\left(\bigcap_{j=1}^{J}\left\{\tau_{(k_j)} \in  \mathcal{I}_{\treat(k_j')}^{\alpha'}\right\}  \right)
\geq 1-\alpha'-
\Delta_{\textup{H}}(k_{1:J}'; n, n_{\treat}, k_{1:J}) = 1-\alpha,$$
i.e., Theorem \ref{thm:BB_simulCI} holds.
\end{proof}

\subsection{Proof of Theorem \ref{thm:treated_scre}}
Below we first give a complete version of Theorem \ref{thm:treated_scre}.
\begin{theoremA}\label{supp thm:treated_scre}
    Consider the CSRE and any stratified rank score statistic $\tilde{t}(\cdot, \cdot)$ in \eqref{eq:strat_rank_score}.  
    \begin{itemize}
        \item[(i)] For any $0\le k \le n_\treat \equiv \sum_{s=1}^S n_{s\treat}$ and any $c\in \mathbb{R}$, 
        \begin{align}\label{eq: pvsl_scre}
            \tilde{p}_{k,c}^{n, \treat} 
            \equiv 
            \tilde{G}\Big( \inf_{\bs{\delta} \in \mathcal{H}_{k,c}^{n, \treat}} \tilde{t}(\bs{Z}, \bs{Y} - \bs{Z} \circ \bs{\delta} ) \Big)
            = 
            \tilde{G}\Big( \inf_{\bs{\delta} \in \mathcal{H}_{n_{\control}+k,c}^{n}} \tilde{t}(\bs{Z}, \bs{Y} - \bs{Z} \circ \bs{\delta} ) \Big)
            = \tilde{p}_{n_{\control}+k,c}^{n} 
        \end{align}
    is a valid $p$-value for testing $H_{k,c}^{n, \treat}$ in \eqref{eq:Hn1kc}, 
    where $\tilde{p}_{n_{\control}+k,c}^{n}$ is defined as in \eqref{eq:p_SL}.  
    \item[(ii)] 
       For any $1 \le k\le n_{\treat}$ and $\alpha\in (0,1)$, 
        $\tilde{\mathcal{I}}_{\treat(k)}^{\alpha} \equiv \{c\in \mathbb{R}: \tilde{p}_{k,c}^{n, \treat} > \alpha \}$ is a $1-\alpha$ prediction set for $\tau_{\treat(k)}$, in the sense that 
        $\Pr\{ \tau_{\treat(k)} \in \tilde{\mathcal{I}}_{\treat(k)}^{\alpha} \}\ge 1-\alpha$, and $\tilde{\mathcal{I}}_{\treat(k)}^\alpha$ must be an interval of form $(c, \infty)$ or $[c, \infty)$, 
        where $c = \inf\{c: \tilde{p}_{k,c}^{n, \treat} > \alpha, c \in \mathbb{R}\}$. 
                 \item[(iii)] 
        The intersection of all prediction intervals in (ii) over $0\le k \le n_{\treat}$, viewed as a prediction set for treatment effects on treated units, 
        covers the true treatment effects for treated units with probability at least $1-\alpha$, 
        or equivalently, the prediction intervals in (ii) are  simultaneously valid:
        \begin{equation}\label{eq: thmA1_PI_equivalence}
            \Pr\left( 
            \bigcap_{k=1}^{n_{\treat}} 
            \big\{ 
            \tau_{\treat(k)} \in \tilde{\mathcal{I}}_{\treat(k)}^{\alpha}
            \big\}
            \right)
            \ge 1-\alpha. 
         \end{equation}
    \end{itemize}
\end{theoremA}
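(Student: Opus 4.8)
The plan is to mirror the proof of Theorem \ref{thm:treated} almost verbatim, replacing the rank score statistic $t(\cdot,\cdot)$ by the stratified rank score statistic $\tilde{t}(\cdot,\cdot)$ in \eqref{eq:strat_rank_score} and the tail function $G(\cdot)$ by $\tilde{G}(\cdot)$ from \eqref{eq:p_SL}. Two structural facts make this substitution legitimate. First, $\tilde{t}(\bs{Z},\bs{y})$ is distribution free under the SCRE, so $\tilde{G}(\cdot)$ is well defined; this is because $\tilde{t}=\sum_{s=1}^S t_s$ is a sum of within-stratum rank score statistics that are each distribution free, and the strata are randomized independently. Second, $\tilde{t}$ is effect increasing, i.e., $\tilde{t}(\bs{z},\bs{y})$ is nondecreasing in $y_i$ for every $i$ with $z_i=1$; this holds because each $t_s$ is effect increasing \citep[][Proposition 2]{CDLM21quantile} and a sum of effect-increasing statistics is effect increasing. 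With these two facts, every step below is the stratified transcription of the corresponding step in the proof of Theorem \ref{thm:treated}.

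For part (i) I would repeat the argument for Theorem \ref{thm:treated}(a). If $H_{k,c}^{n,\treat}$ holds then $\bs{\tau}\in\mathcal{H}_{k,c}^{n,\treat}$, so $\inf_{\bs{\delta}\in\mathcal{H}_{k,c}^{n,\treat}}\tilde{t}(\bs{Z},\bs{Y}-\bs{Z}\circ\bs{\delta})\le\tilde{t}(\bs{Z},\bs{Y}(0))$; since $\tilde{G}$ is decreasing, $\tilde{p}_{k,c}^{n,\treat}\ge\tilde{G}(\tilde{t}(\bs{Z},\bs{Y}(0)))$, and the right-hand side is stochastically at least $\Unif(0,1)$ by the validity of the stratified Fisher randomization test, which gives $\Pr(\tilde{p}_{k,c}^{n,\treat}\le\alpha\text{ and }H_{k,c}^{n,\treat}\text{ holds})\le\alpha$ even when $c$ is random. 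For the identity $\tilde{p}_{k,c}^{n,\treat}=\tilde{p}_{n_{\control}+k,c}^{n}$, the inclusion $\mathcal{H}_{n_{\control}+k,c}^{n}\subseteq\mathcal{H}_{k,c}^{n,\treat}$ gives one inequality between the two infima; for the reverse, given any $\bs{\delta}\in\mathcal{H}_{k,c}^{n,\treat}$ I would form $\bs{\theta}$ by setting $\theta_i=\infty$ on the at most $n_{\treat}-k$ coordinates with $Z_i=1$ and $\delta_i>c$, and $\theta_i=c$ elsewhere, so that $\bs{\theta}\in\mathcal{H}_{n_{\control}+k,c}^{n}$ and $Y_i-Z_i\theta_i\le Y_i-Z_i\delta_i$ for all $i$, and then invoke the effect-increasing property of $\tilde{t}$ to obtain $\tilde{t}(\bs{Z},\bs{Y}-\bs{Z}\circ\bs{\theta})\le\tilde{t}(\bs{Z},\bs{Y}-\bs{Z}\circ\bs{\delta})$. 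This is precisely the construction used for Theorem \ref{thm:treated}(a), and nothing in it depends on the stratification.

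Parts (ii)--(iv) then follow from (i) by Lehmann-style test inversion, word for word as in the proof of Theorem \ref{thm:treated}(b)--(d). The only additional ingredients are that $\tilde{p}_{k,c}^{n,\treat}$ is increasing in $c$ and decreasing in $k$, which I would deduce from the equivalent forms in (i) together with the monotonicity of $\tilde{p}_{k,c}^{n}$ established in \citet{SL22quantile}. Given these, (ii) and (iii) are immediate, and for (iv) I would show that both $\bigcap_{k=1}^{n_{\treat}}\{\tau_{\treat(k)}\in\tilde{\mathcal{I}}_{\treat(k)}^{\alpha}\}$ and $\bigcap_{c\in\mathbb{R}}\{n_{\treat}(c)\in\tilde{\mathcal{S}}_{\treat}^{\alpha}(c)\}$ coincide with the single event $\bs{\tau}\in\bigcap_{k,c:\,\tilde{p}_{k,c}^{n,\treat}\le\alpha}(\mathcal{H}_{k,c}^{n,\treat})^{\complement}$, whose probability is at least $1-\Pr\{\tilde{G}(\tilde{t}(\bs{Z},\bs{Y}(0)))\le\alpha\}\ge1-\alpha$ by the same randomization-test bound used in (i).

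I do not expect a genuine obstacle: the whole proof is a transcription of the CRE case, and the two facts that do the work --- distribution-freeness and effect-monotonicity of $\tilde{t}$, and monotonicity of $\tilde{p}_{k,c}^{n}$ --- are already available, the former by additivity over strata and the latter from \citet{SL22quantile}. The one point needing slight care is that in \citet{SL22quantile} the infimum defining $\tilde{p}_{n_{\control}+k,c}^{n}$ is computed through a stratum-dependent multiple-choice knapsack problem rather than via a single closed-form worst-case configuration as in the CRE; but this does not enter the present argument, since we only need the infimum over $\mathcal{H}_{k,c}^{n,\treat}$ to agree with that over the smaller set $\mathcal{H}_{n_{\control}+k,c}^{n}$, and the coordinatewise construction above delivers this directly without exhibiting the minimizer.
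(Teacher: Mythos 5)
Your proposal is correct and takes essentially the same route as the paper: the paper's own proof simply states that Theorem A1 follows by the same logic as Theorem \ref{thm:treated}, using exactly the two facts you identify --- the effect-increasing property of the stratified rank score statistic (inherited additively from \citet[][Proposition 2]{CDLM21quantile}) and the monotonicity of $\tilde{p}^n_{k,c}$ in $c$ and $k$ from \citet{SL22quantile}. Your write-up just fills in the transcription that the paper omits for conciseness.
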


\begin{proof}[\bf Proof of Theorem \ref{supp thm:treated_scre}]
    The proof follows by almost the same logic as the proof of Theorem \ref{thm:treated}, except that it uses the following facts. 
    First, the $p$-value $\tilde{p}^n_{k,c}$ is increasing in $c$ and decreasing in $k$; see \citet[][Theorem 3]{SL22quantile}. 
    Second, the stratified rank score statistic is effect increasing, which can be immediately implied by the effect increasing property of a single rank score statistic \citep[][Proposition 2]{CDLM21quantile}.
    For conciseness, we omit the detailed proof.
\end{proof}

\subsection{Proof of Theorem \ref{thm:treated_sen}}
Below we first give a complete version of Theorem \ref{thm:treated_sen}.
\begin{theoremA}\label{supp thm:treated_sen}
    Consider a matched observational study and any stratified rank score statistic $\tilde{t}(\cdot, \cdot)$ in \eqref{eq:strat_rank_score}. 
    Suppose that the unmeasured confounding strength is bounded by $\Gamma\ge 1$, i.e., the treatment assignment follows \eqref{eq:sen_assign} for some unknown $\bs{u}\in [0,1]^n$. 
    \begin{itemize}
        \item[(i)] For any $0\le k \le n_\treat \equiv \sum_{s=1}^S n_{s\treat}$ and any $c\in \mathbb{R}$, 
        \begin{align*}
            \tilde{p}_{k,c}^{n, \treat} (\Gamma)
            \equiv 
            \tilde{G}_{\Gamma}\Big( \inf_{\bs{\delta} \in \mathcal{H}_{k,c}^{n, \treat}} \tilde{t}(\bs{Z}, \bs{Y} - \bs{Z} \circ \bs{\delta} ) \Big)
            = 
            \tilde{G}_{\Gamma}\Big( \inf_{\bs{\delta} \in \mathcal{H}_{n_{\control}+k,c}^{n}} \tilde{t}(\bs{Z}, \bs{Y} - \bs{Z} \circ \bs{\delta} ) \Big)
            = \tilde{p}_{n_{\control}+k,c}^{n} (\Gamma)
        \end{align*}
    is a valid $p$-value for testing $H_{k,c}^{n, \treat}$ in \eqref{eq:Hn1kc}, 
    where $\tilde{p}_{n_{\control}+k,c}^{n} (\Gamma)$ is defined as in \eqref{eq:p_SL_sen}.  
    \item[(ii)] 
       For any $1 \le k\le n_{\treat}$ and $\alpha\in (0,1)$, 
        $\tilde{\mathcal{I}}_{\treat,\Gamma}^{\alpha} (k)\equiv \{c\in \mathbb{R}: \tilde{p}_{k,c}^{n, \treat}(\Gamma) > \alpha \}$ is a $1-\alpha$ prediction set for $\tau_{\treat(k)}$, in the sense that 
        $\Pr\{ \tau_{\treat(k)} \in \tilde{\mathcal{I}}_{\treat,\Gamma}^{\alpha} (k) \}\ge 1-\alpha$, and $\tilde{\mathcal{I}}_{\treat,\Gamma}^{\alpha} (k)$ must be an interval of form $(c, \infty)$ or $[c, \infty)$, 
        where $c = \inf\{c: \tilde{p}_{k,c}^{n, \treat}(\Gamma) > \alpha, c \in \mathbb{R}\}$. 
      
        \item[(iii)] 
        The intersection of all prediction intervals in (ii) over $0\le k \le n_{\treat}$, viewed as a prediction set for treatment effects on treated units, 
        covers the true treatment effects for treated units with probability at least $1-\alpha$, 
        or equivalently, the prediction intervals in (ii) 
        are all simultaneously valid:
        \begin{equation}\label{eq: thmA2_PI_equivalence}
            \Pr\left( 
            \bigcap_{k=1}^{n_{\treat}} 
            \big\{ 
            \tau_{\treat(k)} \in \tilde{\mathcal{I}}_{\treat,\Gamma}^{\alpha} (k)
            \big\}
            \right)
            \ge 1-\alpha. 
        \end{equation}
    \end{itemize}
\end{theoremA}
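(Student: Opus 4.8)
The plan is to mirror the proofs of Theorem \ref{thm:treated} and its stratified counterpart Theorem \ref{supp thm:treated_scre}, replacing the exact randomization distribution function $\tilde{G}$ by the worst-case distribution function $\tilde{G}_{\Gamma}$ over all hidden-confounder configurations $\bs{u}\in[0,1]^n$, and verifying that each ingredient of those earlier arguments still applies under the sensitivity model \eqref{eq:sen_assign}. Concretely, the two structural facts that carry over unchanged are (1) the effect-increasing property of the stratified rank score statistic $\tilde{t}$, which follows from the effect-increasing property of each single rank score $t_s$ \citep[Proposition 2]{CDLM21quantile}, and (2) the monotonicity of $\tilde{p}^n_{k,c}(\Gamma)$ — increasing in $c$ and decreasing in $k$ — established in \citet{SL22quantile}. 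Given these, parts (i)--(iv) reduce to transcriptions of the corresponding steps for the CRE.

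For part (i), I would first prove validity of $\tilde{p}^{n,\treat}_{k,c}(\Gamma)$. When $H^{n,\treat}_{k,c}$ holds we have $\bs{\tau}\in\mathcal{H}^{n,\treat}_{k,c}$, so $\inf_{\bs{\delta}\in\mathcal{H}^{n,\treat}_{k,c}}\tilde{t}(\bs{Z},\bs{Y}-\bs{Z}\circ\bs{\delta})\le\tilde{t}(\bs{Z},\bs{Y}-\bs{Z}\circ\bs{\tau})=\tilde{t}(\bs{Z},\bs{Y}(0))$, and since $\tilde{G}_{\Gamma}$ is decreasing, $\tilde{p}^{n,\treat}_{k,c}(\Gamma)\ge\tilde{G}_{\Gamma}(\tilde{t}(\bs{Z},\bs{Y}(0)))$. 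The crucial step is that, under the true (unknown) assignment law $\Pr_{\bs{u},\Gamma}$, the random variable $\tilde{G}_{\Gamma}(\tilde{t}(\bs{Z},\bs{Y}(0)))$ stochastically dominates $\Unif(0,1)$: by definition $\tilde{G}_{\Gamma}(c)\ge\Pr_{\bs{u},\Gamma}\{\tilde{t}(\bs{Z},\bs{y})\ge c\}$ for every $\bs{y}$ and $c$, and this true tail probability, evaluated at the realized statistic, already dominates uniform, so $\tilde{G}_{\Gamma}$ evaluated there does too — this is the validity argument underlying \citet{SL22quantile}'s sensitivity $p$-value. Hence $\Pr(\tilde{p}^{n,\treat}_{k,c}(\Gamma)\le\alpha\text{ and }H^{n,\treat}_{k,c}\text{ holds})\le\Pr\{\tilde{G}_{\Gamma}(\tilde{t}(\bs{Z},\bs{Y}(0)))\le\alpha\}\le\alpha$. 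The identity $\tilde{p}^{n,\treat}_{k,c}(\Gamma)=\tilde{p}^n_{n_{\control}+k,c}(\Gamma)$ is then proved exactly as in Theorem \ref{thm:treated}(a): the inclusion $\mathcal{H}^n_{n_{\control}+k,c}\subset\mathcal{H}^{n,\treat}_{k,c}$ gives one inequality, and for the reverse, given any $\bs{\delta}\in\mathcal{H}^{n,\treat}_{k,c}$ I would replace the at most $n_{\treat}-k$ treated coordinates whose hypothesized effects exceed $c$ by $\infty$ and the remaining coordinates by $c$, obtaining $\bs{\theta}\in\mathcal{H}^n_{n_{\control}+k,c}$ with $\tilde{t}(\bs{Z},\bs{Y}-\bs{Z}\circ\bs{\delta})\ge\tilde{t}(\bs{Z},\bs{Y}-\bs{Z}\circ\bs{\theta})$ by the effect-increasing property.

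For parts (ii)--(iv), I would reuse, almost verbatim, the arguments from the proofs of Theorem \ref{thm:treated}(b)--(d) (equivalently Theorem \ref{supp thm:treated_scre}(ii)--(iv)), with $\tilde{G}$, $\tilde{t}$, $\tilde{p}^{n,\treat}_{k,c}$ replaced by $\tilde{G}_{\Gamma}$, $\tilde{t}$, $\tilde{p}^{n,\treat}_{k,c}(\Gamma)$. Monotonicity in $c$ yields the one-sided interval shape in (ii); monotonicity in $k$ yields the form of $\tilde{\mathcal{S}}^\alpha_{\treat,\Gamma}(c)$ in (iii); and (iv) follows from the set-algebra identities $\bigcap_{k}\{\tau_{\treat(k)}\in\tilde{\mathcal{I}}^{\alpha}_{\treat,\Gamma}(k)\}\Leftrightarrow\bs{\tau}\in\bigcap_{k,c:\,\tilde{p}^{n,\treat}_{k,c}(\Gamma)\le\alpha}(\mathcal{H}^{n,\treat}_{k,c})^{\complement}\Leftrightarrow\bigcap_{c}\{n_{\treat}(c)\in\tilde{\mathcal{S}}^\alpha_{\treat,\Gamma}(c)\}$, together with the deterministic bound $\tilde{G}_{\Gamma}(\tilde{t}(\bs{Z},\bs{Y}(0)))\le\tilde{p}^{n,\treat}_{k,c}(\Gamma)$ on the event $\bs{\tau}\in\mathcal{H}^{n,\treat}_{k,c}$ and the validity bound on $\Pr\{\tilde{G}_{\Gamma}(\tilde{t}(\bs{Z},\bs{Y}(0)))\le\alpha\}$ from part (i).

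I expect the only genuinely nontrivial point to be the first displayed step of part (i): confirming that $\tilde{G}_{\Gamma}$, which is a supremum of tail probabilities over $\bs{u}$ rather than an exact null distribution, still composes with the realized statistic to give a uniform-dominating $p$-value under the true assignment law. Everything else is a direct transcription of the CRE/SCRE proofs, so in the write-up I would present part (i) carefully and then note, as was done for Theorem \ref{supp thm:treated_scre}, that (ii)--(iv) follow by exactly the logic of Theorem \ref{thm:treated}.
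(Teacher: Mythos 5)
Your proposal is correct and follows essentially the same route as the paper: validity of $\tilde{p}_{k,c}^{n,\treat}(\Gamma)$ is obtained by bounding it below by $\tilde{G}_{\Gamma}(\tilde{t}(\bs{Z},\bs{Y}(0)))$, which in turn dominates the true tail probability $G_{\bs{u}_0,\Gamma}(\tilde{t}(\bs{Z},\bs{Y}(0)))$ under the actual assignment law and hence dominates $\Unif(0,1)$, exactly as in the paper's argument. The equivalence $\tilde{p}_{k,c}^{n,\treat}(\Gamma)=\tilde{p}_{n_{\control}+k,c}^{n}(\Gamma)$ via the effect-increasing construction of $\bs{\theta}$, and the reduction of (ii)--(iv) to the Theorem \ref{thm:treated}(b)--(d) arguments using monotonicity in $c$ and $k$, also match what the paper does (and spell out details the paper leaves implicit).
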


\begin{proof}[\bf Proof of Theorem \ref{supp thm:treated_sen}]
First, we prove the validity of the $p$-value $\tilde{p}_{k,c}^{n, \treat} (\Gamma)$ for testing $H_{k,c}^{n, \treat}$ in \eqref{eq:Hn1kc} under the sensitivity model with bias at most $\Gamma$. Let $\bs{u}_0$ denote the true unmeasured confounding vector associated with the sensitivity model with bias at most $\Gamma$. Under $H_{k,c}^{n, \treat}$,  we have $\bs{\tau} \in \mathcal{H}^{n,\treat}_{k,c}$, and thus 
$\inf_{\bs{\delta}\in \mathcal{H}^{n, \treat}_{k,c}} 
\tilde t(\bs{Z}, \bs{Y} - \bs{Z} \circ \bs{\delta}) \leq \tilde t(\bs{Z}, \bs{Y} - \bs{Z} \circ \bs{\tau})=\tilde t(\bs{Z}, \bs{Y}(0))$. Because $\tilde{G}_{\Gamma}(\cdot)$ is a monotone decreasing function, this further implies that  $\tilde{p}_{k,c}^{n, \treat} (\Gamma)\geq \tilde{G}_{\Gamma}(\tilde t(\bs{Z}, \bs{Y}(0)))\geq 
G_{\bs{u}_0, \Gamma}(\tilde t(\bs{Z}, \bs{Y}(0))).
$
Here $G_{\bs{u}_0, \Gamma}(\cdot)$ is the tail probability of $\tilde{t}(\bs{Z}, \bs{Y}(0))$ under under the actual treatment assignment mechanism, which is the sensitivity model with bias at most $\Gamma$ and unmeasured confounding $\bs{u}_0$.
Thus, 
\begin{align*}
    \Pr\{
        \tilde{p}_{k,c}^{n, \treat} (\Gamma) \leq \alpha \text{ and } H_{k,c}^{n, \treat} \text{ holds}
    \}
    & \le 
    \Pr\{ 
        G_{\bs{u}_0, \Gamma}(\tilde t(\bs{Z}, \bs{Y}(0))) \leq \alpha
    \} \le \alpha,
\end{align*}
where the last inequality follows by the validity of standard Fisher randomization test. 
Consequently, $\tilde{p}_{k,c}^{n, \treat} (\Gamma)$ 
is a valid $p$-value for testing $H_{k,c}^{n, \treat}$ under the sensitivity model with bias at most $\Gamma$.

The proof for the remaining parts of Theorem \ref{supp thm:treated_sen} follows by the same logic as the proof of 
Theorem \ref{thm:treated}. 
For conciseness, we omit the detailed proof.
\end{proof}




\subsection{Proof of Theorem \ref{thm:samp_fp}}
\label{Proof: fp}
\begin{proof}[\bf Proof of Theorem \ref{thm:samp_fp}]
We first consider bounding $\Pr(\cup_{j=1}^{J}\{F_\fp^{-1}(\beta_j) < \tau_{(k_j')}\})$ by the same logic as the proof of Theorem \ref{thm:BB_simulCI}.
Let $\{s1, \dots, sn\}$ be the set of indices of sampled units such that $\tau_{si}^{\fp}=\tau_i$ for $i = 1,\dots, n$. 
Let $\{\phi(i)\}_{i=1}^n$ be a subset of $\{1, \dots, N\}$ such that $(\rank_{s1}(\bs{\tau}^{\fp}), \rank_{s2}(\bs{\tau}^{\fp}),\ldots,\rank_{sn}(\bs{\tau}^{\fp}))=(\phi(1),\phi(2),...,\phi(n))$, where $\rank_{i}(\bs{\tau}^{\fp})$ denotes the rank of the
$i$th coordinate of $\bs{\tau}^{\fp}$.  
It is not different to see that 
$\{\phi(i)\}_{i=1}^n$ is a simple random sample of size $n$ from the set $\{1, \ldots, N\}$. 
By definition of the quantile function, 
we can verify that $F_N^{-1}(\beta_j) = \tau^{\fp}_{(k_j)}$, where $k_j = \lceil N \beta_j \rceil$ for all $1\le j\le J$. 
Define further $\tau^{\fp}_{(0)} = -\infty$.
Let 
$v_{j} = \sum_{i=1}^{n}  \mathbbm{1}\left\{ k_j < \phi(i) \leq k_{j+1}\right\}$ for $0 \leq j \leq J$, where $k_0 = 0$ and $k_{J+1}=N$.
By definition, $n(F_\fp^{-1}(\beta_j)) = n(\tau^{\fp}_{(k_j)}) = \sum_{i=1}^n \I(\tau_i > \tau^{\fp}_{(k_j)}) =  \sum_{i=1}^n  \I(\tau^{\fp}_{\left(\phi(i)\right)} > \tau^{\fp}_{(k_j)}) $.
Because we break ties using index ordering, $\I\{\tau^{\fp}_{(k_j)}<\tau^{\fp}_{\left(\phi(i)\right)}\} \leq \I\{k_j<\phi(i)\}$, where the equality holds when there are no ties among individual effects in the whole population, 
i.e., $\tau^{\fp}_{(1)}< \tau^{\fp}_{(2)}<\dots<\tau^{\fp}_{(N)}$. 
Thus, for each $j$, we have $n(F_\fp^{-1}(\beta_j))  \leq \sum_{i=1}^n\I\{\phi(i)>k_j\}=\sum_{i=j}^{J}v_{i}$.
Note that, by definition, 
$F_\fp^{-1}(\beta_j) < \tau_{(k_j')}$ implies that 
$n(F_\fp^{-1}(\beta_j))> n-k_j'$. 
These then imply that 
\begin{align}
\Pr\left(\bigcup\limits_{j=1}^{J}\left\{F_\fp^{-1}(\beta_j) < \tau_{(k_j')}\right\}\right)
&\leq 
\Pr\left(\bigcup\limits_{j=1}^{J}\left\{
    n(F_\fp^{-1}(\beta_j))> n-k_j'\right\}\right)\nonumber
\\
& \le 
\Pr\left(
\bigcup\limits_{j=1}^{J}\left\{\sum_{i=j}^{J}v_{i} > n-k_j' \right\}
\right). \label{Ineq: FP_MHG}
\end{align}
Note that the set of $n$ experimental units is a
simple random sample from the finite population of size $N$. 
Moreover, by definition, $v_j$s count the numbers of sampled units with indices in $J+1$ mutually disjoint sets, $\{i: k_j < \phi(i) \le k_{j+1}\}$, for $0\leq j\leq J$, each of which has a size of $k_{j+1} - k_j$. 
It is then not difficult to see that  $(n-\sum_{j=1}^J v_j, v_1, \ldots, v_{J-1}, v_J) \sim \textup{MHG}(
    k_1, 
    k_2-k_1, 
    \ldots, 
    k_J - k_{J-1}, 
    N-k_J; n
    )$
follows a multivariate Hypergeometric distribution. 
Consequently, the upper bound in \eqref{Ineq: FP_MHG} is equivalent to 
$\Delta_{\textup{H}}(k_{1:J}'; N, n, k_{1:J})$.


Second, we prove the validity of the simultaneous confidence intervals.  
Recall that $\mathcal{I}^{\alpha'}_{(k_j')}$s are simultaneous $1-\alpha'$ one-sided prediction intervals for sample effect quantiles $\tau_{(k_j')}$s. Thus, 
$\Pr\left( \bigcap_{j=1}^{J}\left\{\tau_{(k_j')} \in  \mathcal{I}^{\alpha'}_{(k_j')} \right\}\right) \geq 1-\alpha'$.
Moreover, because $\mathcal{I}^{\alpha'}_{(k_j')}$s are one-sided intervals of form $(c, \infty)$ or $[c, \infty)$, for every $j$, $F_\fp^{-1}(\beta_j) \geq \tau_{(k_j')}$ and $\tau_{(k_j')} \in  \mathcal{I}^{\alpha'}_{(k_j')}$ imply that $F_\fp^{-1}(\beta_j) \in  \mathcal{I}^{\alpha'}_{(k_j')}$. 
From the above, we have
\begin{align}\label{proof:fp}
& \quad \ \Pr\left(\bigcap_{j=1}^{J} \left\{F_\fp^{-1}(\beta_j) \in  \mathcal{I}^{\alpha'}_{(k_j')} \right\}\right)
\nonumber
\\
&\geq 
  \Pr\left(\bigcap_{j=1}^{J} \left\{F_\fp^{-1}(\beta_j) \geq \tau_{(k_j')}, \tau_{(k_j')} \in  \mathcal{I}^{\alpha'}_{(k_j')} 
\right\}\right)\nonumber\\
&=1-\Pr\left(\bigcup\limits_{j=1}^{J} \left\{F_\fp^{-1}(\beta_j) < \tau_{(k_j')} \right\}
\cup
\left(\bigcap_{j=1}^{J}\left\{\tau_{(k_j')} \in  \mathcal{I}^{\alpha'}_{(k_j')} 
\right\}\right)^{\complement}\right)\nonumber\\
&\geq 1-\Pr\left(\bigcup\limits_{j=1}^{J} \left\{F_\fp^{-1}(\beta_j) < \tau_{(k_j')} \right\}
\right)-
\Pr\left(\left(\bigcap_{j=1}^{J}\left\{\tau_{(k_j')} \in  \mathcal{I}^{\alpha'}_{(k_j')} 
\right\}\right)^{\complement}\right)\nonumber\\
&\geq 1-\Delta_{\textup{H}}(k'_{1:J}; N, n, k_{1:J})-\alpha'=1-\alpha.
\end{align}   
Therefore, Theorem \ref{thm:samp_fp} holds.
\end{proof}

\subsection{Proofs of Theorem \ref{thm:sp} and Corollary \ref{cor:super_match}}
\begin{proof}[\bf Proof of Theorem \ref{thm:sp}]
We first consider bounding 
$\Pr\left(\bigcup_{j=1}^{J} \left\{F_{\sp}^{-1}(\beta_j)< \tau_{(k_j')} \right\}
\right)$. 
By definition, $\bigcup_{j=1}^{J}\left\{F_{\sp}^{-1}(\beta_j) < \tau_{(k_j')}\right\}$ implies that there exists some $j$ such that $\sum_{i=1}^n \I(\tau_i > F_{\sp}^{-1}(\beta_j))> n-k_j'$.
By the inverse probability integral transform, 
$(\tau_{1}, \ldots, \tau_{n})
$
follows the same distribution as
$(F_{\sp}^{-1}(U_1), \ldots, F_{\sp}^{-1}(U_{n}))$, 
where $U_i$s are i.i.d.\ \Unif(0,1) random variables. 
These then imply that
\begin{align}
    \Pr\left(\bigcup\limits_{j=1}^{J} \left\{F_{\sp}^{-1}(\beta_j)< \tau_{(k_j')} \right\}
\right)&\leq \Pr\left(\bigcup\limits_{j=1}^{J} \left\{\sum_{i=1}^n \I(\tau_i > F_{\sp}^{-1}(\beta_j))> n-k_j'\right\}
\right)\nonumber\\
&=\Pr\left(\bigcup\limits_{j=1}^{J} \left\{\sum_{i=1}^n  \I(F_{\sp}^{-1}(U_{i}) > F_{\sp}^{-1}(\beta_j))> n-k_j'\right\}
\right)\nonumber\\
& \leq \Pr\left(\bigcup\limits_{j=1}^{J} \left\{\sum_{i=1}^n  \I(\beta_j < U_{i})> n-k_j'\right\}
\right) \label{proof: thm9_bound},
\end{align}
where the last equality holds because 
$F_{\sp}^{-1}(\beta_j)< F_{\sp}^{-1}(U_{i}) $ 
implies $\beta_j < U_{i}$.  

Define $m_{j} = \sum_{i=1}^{n}  \mathbbm{1}\left\{ \beta_j < U_i \leq \beta_{j+1}\right\}$ for $0 \leq j \leq J$, where $\beta_0 = 0$, $\beta_{J+1}=1$.  For each $j$, $\sum_{i=j}^{J} m_{i}=\sum_{i=1}^n  \I(U_i > \beta_j)$. 
Then the upper bound in \eqref{proof: thm9_bound} is equivalent to
$$\Pr\left(\bigcup\limits_{j=1}^{J} \left\{\sum_{i=j}^{J} m_{i}> n-k_j'\right\}
\right) \equiv \Delta_{\textup{M}}(k_{1:J}'; n, \beta_{1:J}),$$
where $(n-\sum_{j=1}^J  m_{j},  m_{1}, \ldots,  m_{J-1},  m_{J}) \sim \textup{MN}(
\beta_1, 
\beta_2-\beta_1, 
\ldots, 
\beta_J - \beta_{J-1}, 
1-\beta_J; n
)$.
This is because the $U_i$s are i.i.d.\ standard uniform random variables,
each of which 
falls in 
one of the $J+1$ mutually exclusive sets, $(\beta_j, \beta_{j+1}]$ for $0\le j \le J$, 
with probabilities $\beta_{j+1}-\beta_{j}$ for $0\le j \le J$, respectively. Therefore, we have 
$\Pr\left(\bigcup\limits_{j=1}^{J} \left\{F_{\sp}^{-1}(\beta_j)< \tau_{(k_j')} \right\}
\right)\leq \Delta_{\textup{M}}(k_{1:J}'; n, \beta_{1:J})$.
 
The proof for the remaining parts follows by the same logic as the proof of 
Theorem \ref{thm:samp_fp}. We omit the detailed proof for conciseness. 
\end{proof}

\begin{proof}[\bf Proof of Corollary \ref{cor:super_match}]
    Corollary \ref{cor:super_match} follows by the same logic as Theorem \ref{thm:sp}, and its proof is omitted for conciseness. 
\end{proof}

\section{Additional details}
\subsection{Additional details for the choice of $k'$ in Theorem \ref{thm:BB_pval}}
In the discussion after Theorem \ref{thm:BB_pval}, we suggest choosing $k'$ such that the correction term  is close to, yet bounded from above by, $\gamma\alpha$, for some $\gamma\in [0,1)$. 
Such a 
choice of $k'$ has the following equivalent forms: 
\begin{align*}
k' = \max\{ k': \Pr( 
    \hyper(k) > n_{\treat} - k'
    )\leq \gamma\alpha\}
    &= n_t - \min\{n_{\treat} - k': \Pr( 
    \hyper(k) > n_{\treat} - k'
    )\leq \gamma\alpha\}\\
    &= n_t -\min\{n_{\treat} - k': \Pr( 
    \hyper(k) \leq n_{\treat} - k'
    )\geq 1-\gamma\alpha\}\\
    &= n_t - q_{\textup{HG}}(1-\gamma\alpha; n, n-k, n_{\treat}),
    \end{align*}
where $\hyper(k) \sim \text{HG}(n, n-k, n_{\treat})$ follows a Hypergeometric distribution and $q_{\textup{HG}}(\cdot; n, n-k, n_{\treat})$ denotes its quantile function. 

\subsection{Additional technical details for Theorem \ref{thm:sp}}\label{sec:tech_super}
\subsubsection{Additional justification by the exchangeability of potential outcomes}

Recall that $(Z_i, Y_i(1), Y_i(0))$ for $1\le i \le n$ denote the treatment assignment and potential outcomes for the $n$ experimental units. 
Under i.i.d.\ sampling from a superpopulation, $(Y_i(1), Y_i(0))$s are i.i.d.\ across all the experimental units. 
Moreover, by our assumption, $(Z_1, \ldots, Z_n)$ is independent of the potential outcomes for the $n$ units. 
Consider inference conditioning on the treatment assignment and the permutations of the potential outcomes. 
Let $(y_{i}(1), y_i(0))$s denote the realized values of the potential outcomes. 
Under this conditioning, $Z_i$s become fixed constant,  
and $(Y_i(1), Y_i(0))$ for $1\le i \le n$ will have equal probabilities to be $(y_{\sigma(i)}(1), y_{\sigma(i)}(0))$ for $1\le i \le n$, for any permutation $\sigma$ of $(1,2,\ldots, n)$. 
In particular, under each permutation $\sigma$, units with potential outcomes in $\{(y_{j}(1), y_{j}(0)): j = \sigma(i), Z_{i} = 1, 1\le i \le n \}$ would be in the treated group, and units with potential outcomes in $\{(y_{j}(1), y_{j}(0)): j = \sigma(i), Z_{i} = 0, 1\le i \le n \}$ would be in the control group. 
Note that $\sigma$ is equally likely to be any permutation of $(1,2,\ldots, n)$. 
Therefore, under this conditioning, we are essentially conducting a CRE with fixed potential outcomes $(y_{i}(1), y_i(0))$s. 
In other words, through conditioning, we can equivalently view the data from i.i.d.\ potential outcomes and independent treatment assignment as if it was from a CRE with a finite population of units with fixed potential outcomes. Theorem \ref{thm:sp} will then hold by this equivalence. 

\subsubsection{Connection between finite population and superpopulation inferences}

Below we discuss the connection between the finite population and superpopulation inferences in Theorems \ref{thm:samp_fp} and \ref{thm:sp}. 
Specifically, we will show that, as the size of finite population $N$ goes to infinity, the confidence intervals for population quantiles of individual effects from Theorem \ref{thm:samp_fp} will converge to that in Theorem \ref{thm:sp}. 
Intuitively, this is because simple random sampling becomes increasingly similar to i.i.d.\ sampling as the size of finite population increases. 
To prove this, it suffices to show that for any given $n\ge 1$, $0\le \beta_1 \le \beta_2 \ldots \le \beta_J \le 1$ and $0 \le k_1' \le \ldots \le k_J' \le n $, 
\begin{align}\label{eq:hyper_converge}
    \Delta_{\textup{H}}(k_{1:J}'; N, n, k_{1:J}) \rightarrow \Delta_{\textup{M}}(k_{1:J}'; n, \beta_{1:J}) 
    \quad \textup{as } N\rightarrow \infty, 
\end{align}
where 
$k_j = \lceil N \beta_j \rceil$ for all $j$. 
To prove \eqref{eq:hyper_converge}, it suffices to prove that the probability mass function of $\textup{MHG}(k_1, k_2-k_1, \ldots, k_J - k_{J-1}, N-k_J; n)$ converges to that of 
$\textup{MN}(\beta_1, \beta_2 - \beta_1, \ldots, \beta_J - \beta_{J-1}, 1 - \beta_J; n)$.

Define $k_0 = 0$, $k_{J+1} = N$, $\beta_0 = 0$ and $\beta_{J+1}=1$ for descriptive convenience. We further assume $\beta_1>0$ and $\beta_J < 1$; this does not lose any generality since the proof when $\beta_1 = 0$ or $\beta_J=1$ is almost the same. 
For any $x_0, x_1, \ldots, x_J\ge 0$ such that $\sum_{j=0}^J x_j = n$, the probability mass function of $\textup{MHG}(k_1, k_2-k_1, \ldots, k_J - k_{J-1}, N-k_J; n)$ evaluated at $(x_0, \ldots, x_J)$ is 
$\binom{N}{n}^{-1} \prod_{j=0}^J \binom{k_{j+1} - k_j}{x_j}$. 
As $N\rightarrow \infty$, we have, for $0\le j \le J$, 
\begin{align*}
    \binom{k_{j+1} - k_j}{x_j}
    = 
    \frac{(k_{j+1} - k_j) (k_{j+1} - k_j - 1) \cdots (k_{j+1} - k_j - x_j+1) }{x_j!}
    = \frac{(k_{j+1} - k_j)^{x_j}}{x_j!} \cdot (1+o(1)), 
\end{align*}
and analogously, 
\begin{align*}
    \binom{N}{n} = \frac{N(N-1)\cdots (N-n+1)}{n!} = \frac{N^n}{n!} \cdot (1+o(1)). 
\end{align*}
These imply that, as $N\rightarrow \infty$, 
\begin{align*}
    \binom{N}{n}^{-1} \prod_{j=0}^J \binom{k_{j+1} - k_j}{x_j} 
    & = \left( \frac{N^n}{n!} \right)^{-1} \frac{\prod_{j=0}^J (k_{j+1} - k_j)^{x_j}}{\prod_{j=0}^J x_j!}\cdot (1+o(1))
    \\
    & =  \frac{n!}{\prod_{j=0}^J x_j!} \prod_{j=0}^J (k_{j+1}/N - k_j/N)^{x_j} \cdot (1+o(1))
    \\
    & = \frac{n!}{\prod_{j=0}^J x_j!} \prod_{j=0}^J (\beta_{j+1} - \beta_j)^{x_j} \cdot (1+o(1)), 
\end{align*}
which is asymptotically the same as the probability mass function of $\textup{MN}(\beta_1, \beta_2 - \beta_1, \ldots, \beta_J - \beta_{J-1}, 1 - \beta_J; n)$ evaluated at $(x_0, \ldots, x_J)$. 

From the above, as the size of finite population $N$ goes to infinity, the inference for population effect quantiles in Theorem \ref{thm:samp_fp} will converge to that in Theorem \ref{thm:sp}.

{\rev 
\subsection{Two-sided intervals for the education experiment}

In Figure \ref{fig:teacher_data_two_sided_CIs}, we show the two-sided confidence intervals for all quantiles of individual treatment effects using the original approach in \citet{CDLM21quantile} and our two improved methods. 
All intervals from \citet{CDLM21quantile} lack either an informative upper or lower bound. 
In contrast, our first improved method provides intervals with both informative upper and lower bounds for the 30\%–70\% quantiles of individual treatment effects, while the second improved method does so for approximately the 40\%–60\% quantiles.
In sum, the improved methods provide more informative two-sided confidence intervals for quantiles of individual treatment effects.

}

\begin{figure}[ht]
    \centering
    \begin{minipage}{.33\textwidth}
        \centering
        \includegraphics[width=0.9\linewidth]{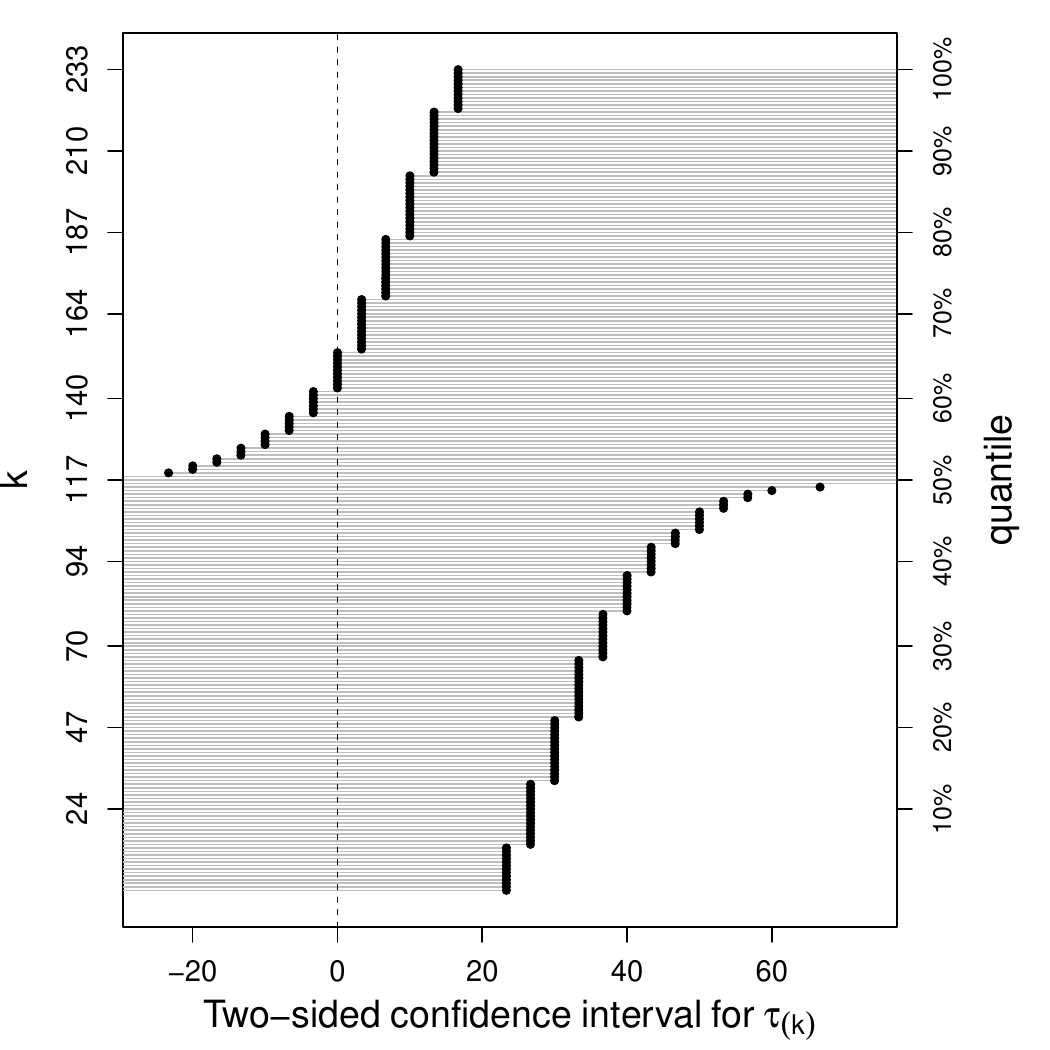}
        \subcaption{\citeauthor{CDLM21quantile}}
        \label{subfig:realdata_M0_two_sided}
    \end{minipage}%
    \begin{minipage}{0.33\textwidth}
        \centering
        \includegraphics[width=0.9\linewidth]{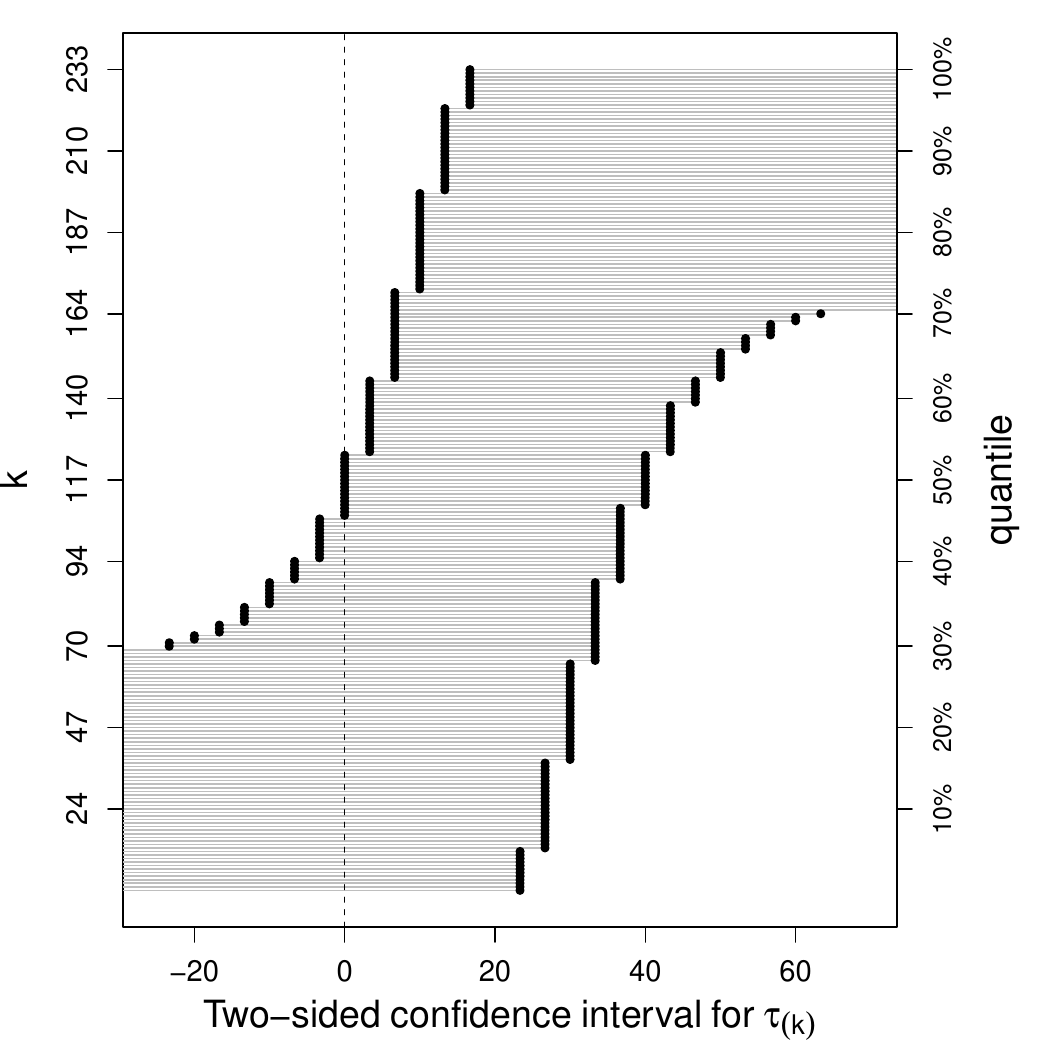}
        \subcaption{Section \ref{sec:effect_treated}}
        \label{subfig:realdata_M1_two_sided}
    \end{minipage}%
    \begin{minipage}{0.33\textwidth}
        \centering
        \includegraphics[width=0.9\linewidth]{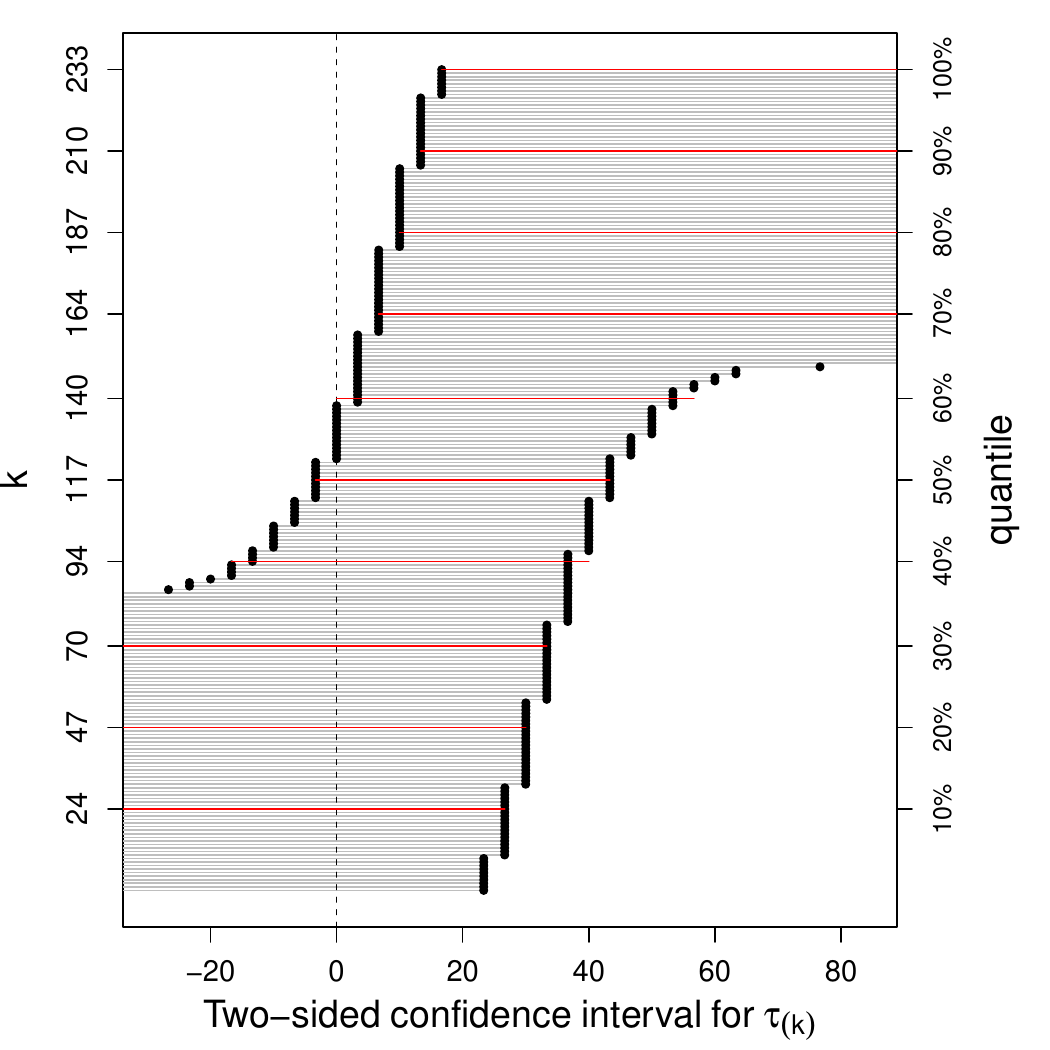}
        \subcaption{Section \ref{sec:berger}}
        \label{subfig:realdata_M2_two_sided}
    \end{minipage}
    \caption{90\% two-sided confidence intervals for 
    treatment effect quantiles among all 233 teachers in the experiment.
    (a) and (b) show the simultaneous confidence intervals using the original \cite{CDLM21quantile}'s method and our improved method from Section \ref{sec:effect_treated}. 
    (c) shows the individual confidence intervals for all quantiles (black lines) and simultaneous confidence intervals for 10 selected quantiles (red lines) using our improved method from Section \ref{sec:berger}.
   }
    \label{fig:teacher_data_two_sided_CIs}
\end{figure}

\end{document}